\newtheorem{theorem}{Theorem}[section]
\newtheorem{proposition}[theorem]{Proposition} 
\newtheorem{lemma}[theorem]{Lemma}
\newtheorem{corollary}[theorem]{Corollary} 
\newtheorem{definition}[theorem]{Definition} 
\newtheorem{remark}[theorem]{Remark} 
\newcommand \la \langle
\newcommand \ra \rangle
\newcommand \del	\partial
\newcommand \Ecal 	{\mathcal E}
\newcommand \eps 	\epsilon 
\newcommand \be 		{\begin{equation}}
\newcommand \ee 		{\end{equation}}
\newcommand{\tir}{\tilde{r}}
\newcommand{\tio}{\widetilde{\Omega}}
\newcommand{\bir}{\bar{r}}
\newcommand{\bpr}{\bar{\varpi}}
\newcommand{\wtr}{\widehat{\tilde{r}}}
\newcommand{\delt}{\Delta_{\delta,u_0}}
\let\oldmarginpar\marginpar
\renewcommand\marginpar[1]{\-\oldmarginpar[\raggedleft\footnotesize #1]%
{\raggedright\footnotesize #1}}
\author{Gustav Holzegel\footnote{Princeton University, Department of Mathematics, Fine Hall, Washington Road, Princeton, NJ 08544, United States.}
\, and Jacques Smulevici\footnote{
Max-Planck-Institut f\"ur Gravitationsphysik, 
Albert-Einstein Institut, Am M\"uhlenberg 1, 14476 Golm, Germany. }
}
\title{Self-gravitating Klein-Gordon fields in asymptotically Anti-de-Sitter spacetimes}
\begin{document}
\maketitle
\begin{abstract}
We initiate the study of the spherically sym\-metric Eins\-tein-Klein-Gordon system in the presence of a negative cosmological constant, a model appearing frequently in the context of high-energy physics. Due to the lack of glob\-al hyper\-bolicity of the solutions, the natural formulation of dynamics is that of an initial boundary value problem, with boundary conditions imposed at null infinity. We prove a local well-posedness statement for this system, with the time of existence of the solutions depending only on an invariant $H^2$-type norm measuring the size of the Klein-Gordon field on the initial data. The proof requires the introduction of a renormalized system of equations and relies crucially on $r$-weighted estimates for the wave equation on asymptotically AdS spacetimes.
The results provide the basis for our companion paper establishing the global asymptotic stability of Schwarzschild-Anti-de-Sitter within this system.
\end{abstract}

\tableofcontents

\section{Introduction}

The study of local and global well-posedness for linear and nonlinear evolution equations is a traditional subject of mathematical physics. In general relativity, the type of equations range from linear scalar or tensorial field equations on fixed spacetime manifolds to the full non-linear Einstein equations, possibly coupled with matter. 

Whereas a considerable literature is available when the spacetimes under consideration are either asymptotically flat or asymptotically de-Sitter, comparatively few results address the case of asymptotically Anti-de-Sitter (AdS) spacetimes. While the study of field equations on such manifolds certainly deserves mathematical attention in its own right, there is also notable interest from the high energy physics community, see \cite{Julianos,Gubser}. 

The main difficulty to understand the evolution in the case of a negative cosmological constant (and a key difference to both the asymptotically flat and the de Sitter case) is rooted in the lack of global hyperbolicity of the spacetimes one wishes to construct. This fact turns the problem of evolution into an initial-boundary value problem for the Einstein equations. Such problems are intricate in general and a subject of current research (see \cite{Friedrichbv} for a recent survey). Moreover, in the Anti de Sitter case, the boundary is actually located ``at infinity" which causes additional difficulties in the formulation of the dynamics.

\subsection{Wave equations on asymptotically AdS spacetimes}
To gain some intuition into the nature of the problem, one may first study solutions to the linear massive wave equation
\begin{eqnarray} \label{eqphi}
\square_g \phi -\frac{2 a}{l^2} \phi=0
\end{eqnarray}
on a fixed asymptotically AdS spacetime $(\mathcal{M},g)$. Here $l$ is related to the cosmological constant $\Lambda$ as $\Lambda=-3/l^2$ and $a$ is the (squared) Klein-Gordon mass. 
Note that with $a=-1$, \eqref{eqphi} correponds to the conformally invariant wave equation\footnote{Indeed, the conformal wave equation is $\square_{g}\phi-\frac{1}{6}R\phi=0$, where $R$ is the scalar curvature of $g$. For vacuum spacetimes, $R_{\mu\nu}=\Lambda g_{\mu \nu}$ and hence $R= 4\Lambda$, so that the conformal wave equation becomes $\square_{g} \phi + \frac{2}{l^2} \phi=0$.}, which may be considered as the natural analogue of the massless wave equation on asymptotically flat vacuum spacetimes.


Since asymptotically AdS spacetimes are necessarily non-globally hyperbolic, the natural formulation of dynamics for \eqref{eqphi} requires imposing suitable boundary conditions at null infinity. This issue is naturally present in the simplest case, namely that of pure AdS. In this case, existence of solutions for \eqref{eqphi} for a large range of boundary conditions is known (see  \cite{Breitenlohner}, \cite{Bachelot}, \cite{Vasy2}), \cite{Holzegelwp}) if the mass $a$ satisfies the so-called Breitenlohner-Freedmann (BF) bound:
\begin{eqnarray}\label{bd:bf}
a > -9/8.
\end{eqnarray} 
Hence, the value of the mass plays an important role for the well-posedness of this equation. For pure AdS, this can be understood by transforming the equation \eqref{eqphi} to a wave equation on a domain of Minkowski space with a (mass-dependent) potential that becomes singular on a timelike boundary of the domain (see, for instance, the introduction of \cite{Bachelot2}).  More on this in section \ref{prere}.
\subsection{The main result}
In this paper, we shall not be interested in the wave equation \eqref{eqphi} on a \emph{fixed} background but in the non-linearly coupled Einstein-Klein-Gordon system within spherical symmetry. That is to say, we are interested in triples of the form $(\mathcal{M},g,\phi)$, where $(\mathcal{M},g)$ is a $3+1$ Lorentzian manifold, $\phi$ satisfies the Klein-Gordon equation \eqref{eqphi} with respect to $g$, and such that moreover the Einstein equations hold:
\begin{eqnarray} \label{eq:e}
R_{\mu \nu}-\frac{1}{2}g_{\mu \nu}R+\Lambda g_{\mu \nu}=8 \pi T_{\mu \nu},
\end{eqnarray}
where
\begin{eqnarray} \label{def:tmn}
T_{\mu \nu}=\partial_\mu \phi \partial_\nu \phi-\frac{1}{2}g_{\mu \nu} (\partial \phi )^2-\frac{a}{l^2} \phi^2 g_{\mu \nu},
\end{eqnarray}
and $R_{\mu\nu}$, $R$ denote respectively the Ricci tensor and scalar of the metric $g$. Moreover, we assume that $(\mathcal{M},g,\phi)$ is spherically symmetric, i.e.~that there exists a smooth, effective, isometric action of $SO(3)$ on $(\mathcal{M},g)$ leaving invariant both $\phi$ and $g$. Finally, we shall require $(\mathcal{M},g)$ to be asymptotically Anti-de-Sitter\footnote{The precise definitions of asymptotically Anti-de-Sitter spacetimes and of the regularity considered in this paper for $(\mathcal{M},g,\phi)$ are given in Section \ref{se:aads} and \ref{se:fskgf}.}.

The main result of this paper establishes local existence and uniqueness of solutions of \eqref{eqphi}-\eqref{eq:e}-\eqref{def:tmn} for appropriate initial data and boundary conditions, provided the mass satisfies the Breitenlohner-Freedmann bound \eqref{bd:bf}. A concise formulation of our main theorem is therefore:

\begin{theorem} \label{th:mtc}
The system \eqref{eqphi}-\eqref{eq:e}-\eqref{def:tmn}, with $a > -9/8$ and with Dirichlet conditions imposed on $\phi$ at null-infinity, is well-posed for the class of $\mathcal{C}^{1+k}_{a,M}$ asymptotically Anti-de-Sitter data introduced in Definition \ref{def:assbids}.
\end{theorem}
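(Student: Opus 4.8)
The plan is to set up the problem as a first-order-in-derivatives system in a double-null gauge adapted to spherical symmetry, and to prove well-posedness by a contraction/iteration argument in an $r$-weighted energy space, where the weights are chosen according to the mass $a$ so that the Breitenlohner--Freedman bound \eqref{bd:bf} is exactly the threshold for the relevant Hardy-type and trace inequalities to hold. First I would write the metric in spherically symmetric double-null form, say $g = -\Omega^2 \, du\, dv + r^2 d\sigma_{S^2}$, and reduce the Einstein-Klein-Gordon system to a system of transport and wave equations for the unknowns $(r, \Omega^2, \phi)$ (or the geometrically natural renormalizations thereof). The Hawking mass / renormalized mass aspect $\varpi$ and the null derivatives of $r$ will be the convenient variables for the geometric part; the constraint equations propagate along the null directions, so one only needs to solve the evolution equations and check the constraints are preserved.

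The crucial difficulty is that near null infinity the metric coefficient $r$ blows up and the field $\phi$ decays at a mass-dependent rate $r^{-3/2 \pm \sqrt{9/4 + 2a}}$ (in the conformally rescaled picture), so the naive double-null system is singular at the boundary and standard energy estimates degenerate. The fix — this is where I expect the main work to lie — is to pass to a \emph{renormalized system}: introduce rescaled unknowns (e.g. $\tilde\phi = r\phi$ or an appropriate power of $r$ times $\phi$, together with renormalized geometric quantities) so that the boundary behaves like a regular timelike boundary with Dirichlet data, and the equations become a symmetric-hyperbolic-type system with coefficients that are bounded (or at worst have controlled singularities absorbable by the $r$-weights). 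Here I would invoke the $r$-weighted estimates for the wave equation on asymptotically AdS backgrounds referenced in the abstract: these give, for the linear wave equation on a fixed such background, energy estimates in the $H^2$-type norm $\mathcal{C}^{1+k}_{a,M}$ with a time of existence controlled purely by that norm, and the BF bound is precisely what makes the boundary term in the integration by parts have a good sign.

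With the renormalized system in hand, the iteration is fairly standard in structure: define a map $\Phi$ on a ball in the $r$-weighted space $X_T$ of functions on a characteristic slab $[0,u_0]\times[0,v_0]$ (or the relevant rectangle in $(u,v)$) by solving, at each step, the \emph{linear} renormalized wave equation for $\phi$ on the background metric determined by the previous iterate, and the transport equations for the geometric quantities; then show (i) $\Phi$ maps a small ball into itself for $T$ small depending only on the $\mathcal{C}^{1+k}_{a,M}$-size of the data, using the $r$-weighted a priori estimates, and (ii) $\Phi$ is a contraction in a lower-order norm, giving existence; uniqueness and continuous dependence follow from the same difference estimates. Two technical points need care: the propagation of the Hamiltonian and momentum constraints (done by deriving a transport equation for the constraint quantities and using Gr\"onwall), and the compatibility/consistency of the Dirichlet boundary condition with the renormalized unknowns at $r = \infty$, i.e.\ checking that the boundary term generated at each iteration step is consistent and that the trace of $\phi$ (suitably rescaled) on null infinity is well-defined in the chosen space — this trace theorem is again where the BF bound enters decisively. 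The main obstacle, to emphasize, is not the fixed-point scheme itself but the construction of the renormalized system and the derivation of the $r$-weighted estimates that close at the level of the scaling-critical $H^2$-type norm while keeping the time of existence dependent only on that norm; once those are available the rest is a (lengthy but) routine iteration.
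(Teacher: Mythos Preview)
Your proposal is correct and follows essentially the same route as the paper: reduction to a double-null system with the Hawking mass as a dynamical variable, passage to a renormalized system near $\mathcal{I}$, $r$-weighted energy estimates (with the BF bound entering via Hardy-type inequalities and the commutation with the asymptotic Killing field $T$), and a contraction map that closes only in a weaker norm with full regularity recovered a posteriori. The only points you underweight are that the paper renormalizes the \emph{geometric} quantities $\tilde r, \tilde\Omega$ rather than $\phi$ itself, that the boundary condition is most conveniently imposed on $\varpi$ (constant $=M$ on $\mathcal{I}$), and that the difference estimate for $\phi$ requires an integration by parts of the $T$-derivative onto the error term to tame the bad $r$-weights---precisely where the improved pointwise decay for $T\phi$ obtained from commutation becomes indispensable.
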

Our Dirichlet conditions imply that the mass is constant along null infinity. A priori, other boundary conditions for $\phi$ could be considered, for instance Neumann boundary conditions. However, for such boundary conditions, the mass flux through null infinity would be infinite. Hence, requiring the mass flux to be finite fixes the boundary conditions of Theorem \ref{th:mtc}. 

Moreover, we establish that the time of existence of the solution, with respect to a bounded null coordinate system, only depends on the value of an invariant $H^2$-type norm for the data of the Klein-Gordon field (and, of course, the choice of coordinates). 
A more precise version of the above theorem is contained in Theorem \ref{wellposed}, with the functional framework for the Klein-Gordon field introduced in Section  \ref{se:fskgf}. 

The data for Theorem \ref{th:mtc} will be prescribed on an outgoing null hypersurface. Hence, we actually prove local well-posedness for the \emph{characteristic} boundary initial value problem. While this emphasizes the geometric character of the problem and, as is well known, simplifies the construction of initial data, we remark that the standard boundary initial value problem can nonetheless be handled by the same method and type of estimates. 

Since away from null infinity, local well-posedness for our system follows by standard techniques, we shall also localize the initial data to a neighboorhood of null infinity, in particular, away from any trapped surface and away from any center of symmetry.



\subsection{Lessons from the linear theory}\label{se:llt}
Before presenting the main ingredients of the proof, let us recall the following insights from the linear theory for solutions in the energy class of \eqref{eqphi} on asymptotically AdS spacetimes:

\begin{itemize}
\item If $a<0$, the dominant energy condition does not hold for the energy momentum tensor of $\phi$ associated with \eqref{eqphi} and hence the natural energy density is not necessarily positive. However, one can use weighted-Hardy inequalities to show that the energy integral is still coercive, with the energy being an $H^1$-type norm on $\phi$.
\item Weighted $H^1$-type norms with radial weights stronger than that present in the natural energy can be propagated by the equations, provided one commutes the equation by a timelike asymptotically Killing field $T$, whose existence is guaranteed by the asymptotics of the metric and the staticity of AdS.
\item In particular, one can establish that $T(\phi)$ has the same radial decay and integrability properties as $\phi$, which in turn leads to improved estimates for some lower order derivatives.
\end{itemize}
In \cite{Holzegelwp} the weighted norms were defined on spacelike hypersurfaces. Here we are going to work with null hypersurfaces and obtain null-versions of the Hardy inequalities.

\subsection{Elements of the proof}
On top of the above ingredients, which are used to control the behaviour of the Klein-Gordon field, the proof of our main result involves:

\begin{itemize}
\item \emph{The introduction of a renormalized system.} Indeed, several geometrical quantities, such as the area-radius function or the conformal factor in appropriate (i.e.~bounded, null) coordinates, blow up at the boundary. However, we remark that (unlike in the vacuum case, cf.~the comments at the end of Section \ref{pres}) even after the renormalization procedure, not all quantities remain finite at the boundary.

\item \emph{Using the Hawking mass as an independent dynamical variable.} This has the advantage that the latter satisfies an easier boundary condition ($\varpi=M$), which in addition is invariant under coordinate changes, something which is not true for the conformal factor, which is typically used. This formulation also allows a resolution for the problem of propagation of constraints: Note that in this characteristic boundary value problem only the $u$-constraint can be propagated from the data, while the validity of the $v$-constraint has to be \emph{established} on the timelike boundary before it can be propagated into the interior.

\item \emph{Control of some higher order derivatives of the metric} in order to make the results of \cite{Holzegelwp} applicable in the context of the contraction map for $\phi$. We remark in this context that our contraction map combines pointwise estimates for the metric components with $L^2$-energy estimates for $\phi$. The reason is that the boundary conditions for $\phi$ do not allow one to integrate directly along characteristics from infinity. 

\item \emph{Control on the difference of solutions to \eqref{eqphi} for two different (but close) asymptotically Anti-de-Sitter metrics $g, g^\prime$} to establish the contraction property. This issue is, of course, not present in the linear case. Its resolution relies crucially on the improved weighted estimate for $T(\phi)$ obtained after commutation. We note that despite the problem being semi-linear,  one can prove the contraction property only in a weaker norm and retrieve the full regularity a posteriori by standard arguments. This feature is normally characteristic of quasi-linear problems and enters here because of the  asymptotically AdS boundary conditions.
\end{itemize}

\subsection{Further results and consequences}
Apart from our main theorem (Theorem \ref{wellposed}), we shall prove several other results useful for further analysis of the system. In particular, we provide an explicit construction of the initial data sets to which our local well-posedness result applies (see Proposition \ref{prop:consdata}). The existence and uniqueness of a maximal solution is established in Corollaries \ref{cor:uniqueness} and \ref{cor:eumd}. 
In Proposition \ref{pro:epin}, we formulate an extension principle applicable near infinity, which is a direct consequence of Theorem  \ref{wellposed}. In the appendix, we formulate a second extension principle, which is an easy adaptation of the recent work\footnote{We thank Mihalis Dafermos and Jonathan Kommemi for pointing out the existence of this extension principle and communicating the work \cite{CamJon}.} \cite{CamJon} to the problem studied here. This second extension principle does not require any form of coercive energy integral arising from the Hawking mass, but lower and upper bounds on the area radius. It is therefore applicable in the interior of the spacetime, away from null infinity, and thus enables us to describe the global structure of the solutions. 


\subsection{Schwarzschild-AdS and the issue of stability}
One special family of solutions to the system studied here is given by the Schwarzschild Anti-de-Sitter spacetimes, 
which solve the system \eqref{eqphi}-\eqref{eq:e}-\eqref{def:tmn}, with $\phi$ being identically $0$. 
With the results of this paper we may consider the maximal solution arising from data which are suitably close to a Schwarzschild-AdS data set, 
and address the issue of stability of Schwarzschild-AdS within the spherically-symmetry Einstein-Klein-Gordon system. This question is resolved in our subsequent paper \cite{gs:stab}, in which we prove global asymptotic stability of the domain of outer communication.

\subsection{Previous results} \label{prere}
\subsubsection*{Linear Theory}
The analysis of the linear problem was initiated in \cite{Breitenlohner}. In \cite{Bachelot}, self-adjoint extensions for the Klein-Gordon operator on pure AdS are constructed for a large class of boundary conditions, if the mass satisfies the BF bound. Moreover, the Dirac system is analysed and in particular, a bound similar to the BF bound is derived. In \cite{Vasy2}, the well-posedness of the linear scalar wave equation on asymptotically Anti-de-Sitter spacetimes admitting a conformal compactification is shown to hold under the $BF$ bound for $a$ and Dirichlet boundary conditions, and a description of the propagation of singularities is given. In \cite{Holzegelwp}, the well-posedness of \eqref{eqphi} is shown for solutions in the energy class (which automatically imposes Dirichlet boundary conditions) using purely vector field techniques. In particular, one has uniqueness in the energy class. Furthermore, in \cite{HolzegelAdS}, boundedness for solutions of \eqref{eqphi} in the energy class is shown for spacetimes which are $C^1$-close to a slowly rotation Kerr-AdS background, again under the assumption that the BF-bound is valid.
 In \cite{Bachelot}, the Klein-Gordon equation in a domain of the five dimensional pure AdS spacetime is analysed, and in particular decay estimates are obtained with respect to a time coordinate adapted to the domain. 

\subsubsection*{Non-linear results}\label{pres}
In the asymptotically flat case, the local and global properties of the spherically symmetric (massless) Einstein scalar field system are well understood, see for instance \cite{Christodoulou, Christodoulou3, Christodoulou4, DafRod}.

In \cite{FriedrichAdS}, the conformal method is used to prove existence and uniqueness of asymptotically AdS spacetimes for the \emph{vacuum} Einstein equations, without any symmetry assumptions. The situation is however quite different from our setting, because there the conformal rescaling provides a complete regularization of the system. In the case of coupling with matter, such as the massive particles of this paper, no such regularization is known to exist.
Finally, we refer to the review article \cite{Friedrichbv} for a general discussion of initial boundary problems for the Einstein equations (in particular, concerning the question of uniqueness).

\subsection{Outline}
The outline of this paper is as follows. In Section \ref{se:pre}, we present the reduced spherically-symmetric Einstein-scalar field equations in double null coordinates and introduce the main geometric quantities needed later. In particular, the precise notion of asymptotically AdS spacetimes used in this paper and the functional framework for the Klein-Gordon field are introduced. 
In Section \ref{se:cids}, we define (and construct) the class of initial data for which our main result will apply. The main theorem is then stated in Section \ref{se:tmt}. In Section \ref{se:rv}, we introduce a renormalization of our system, 
define the associated function spaces and state a local-wellposedness result for this renormalized system. 
Section \ref{se:pmp} is devoted to the proof of this result and contains the key estimates. This allows us to conclude the proof of the main theorem in Section \ref{se:mtp}. In the last section of the paper, we derive some simple consequences of our main result, in particular, the existence of a maximal solution and an extension principle. A second extension principle is given in the appendix. Those results play a key role in our subsequent paper \cite{gs:stab}.


\section{Preliminaries} \label{se:pre}

\subsection[The Einstein-Klein-Gordon equations in null coordinates]{The spherically-symmetric Einstein-Klein-Gordon\\ system in double null coordinates}
We start by recalling a standard result concerning the warped product structure of the metric for spherically symmetric solutions and the form of the equations in double null coordinates (see for instance \cite{Mihali1} and references therein):

\begin{lemma} \label{lem:pre}
Let $(\mathcal{M},g,\phi)$, with $(\mathcal{M},g)$ a $C^2$ Lorentzian manifold, dim $\mathcal{M}=4$ and $\phi$ a $C^2(\mathcal{M})$ function, be a solution to the system \eqref{eqphi}-\eqref{eq:e}-\eqref{def:tmn}. Assume that  $(\mathcal{M},g,\phi)$ is invariant under an effective action of $SO(3)$ with principal orbit type a $2$-sphere. Denote by $r$ the area-radius of the spheres of symmetry. Then, locally around any point of $\mathcal{M}$, there exist double null coordinates $u,v$ such that the metric takes the form:
\begin{eqnarray} \label{eq:metric}
g=-\Omega^2 du dv + r^2 d\sigma_{S^2},
\end{eqnarray}
where $\Omega$ and $r$ may be identified with $C^2$ functions depending only on $(u,v)$ and where $d\sigma_{S^2}$ denotes the standard metric on $S^2$.
Let $\mathcal{Q}=\mathcal{M}/SO(3)$ denote the quotient of the spacetime by the isometry group. 
Then, the Einstein-Klein-Gordon equations\footnote{By a small abuse of notation, we denote functions on $\mathcal{M}$ and their projections to $\mathcal{Q}$ by the same symbols.} reduce to:
\begin{eqnarray}
\partial_u \left( \frac{r_u}{\Omega^2} \right) &=&-4\pi r \frac{(\partial_u \phi)^2}{\Omega^2}, \label{cons1} \\
\partial_v \left( \frac{r_v}{\Omega^2} \right) &=&-4\pi r \frac{(\partial_v \phi)^2}{\Omega^2}, \label{cons2} \\
r_{uv}&=&-\frac{\Omega^2}{4r}-\frac{r_u r_v}{r}+4\pi r (\frac{a \Omega^2 \phi^2}{2l^2})+\frac{1}{4}r \Omega^2 \Lambda, \label{eq:ruv}\\
 \left (\log \Omega \right)_{uv} &=& \frac{\Omega^2}{4r^2}+\frac{r_u r_v }{r^2}-4 \pi \partial_u \phi \partial_v \phi, \\
\partial_u \partial_v \phi  &=& -\frac{r_u}{r} \phi_v - \frac{r_v}{r} \phi_u - \frac{\Omega^2 a}{2l^2} \phi. \label{laste}
\end{eqnarray}
\end{lemma}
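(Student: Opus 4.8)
\noindent The plan is to use the $SO(3)$-invariance to reduce the metric to the warped double-null form \eqref{eq:metric}, and then to read off the five equations by computing the Ricci tensor and the d'Alembertian in that gauge and comparing with \eqref{def:tmn} and \eqref{eqphi}.

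\emph{Reduction of the metric.} Near a point $p$ whose orbit is a genuine $2$-sphere I would argue in the standard way (cf.~\cite{Mihali1}): the slice theorem for the $SO(3)$-action, together with the fact that the isotropy $SO(2)$ --- being compact and connected, hence with trivial image in the non-compact isometry group of the Lorentzian normal plane --- acts trivially on that plane, exhibits a neighbourhood of $p$ as a product $S^2\times\mathcal Q$ with $SO(3)$ acting only on the first factor. Since an $SO(3)$-invariant metric on $S^2=SO(3)/SO(2)$ is a positive multiple of the round one (positivity being forced by the Lorentzian signature of $\mathcal M$) and $S^2$ carries no invariant $1$-form, $g=g_{\mathcal Q}+r^2 d\sigma_{S^2}$ near $p$, with $r$ the area-radius and $g_{\mathcal Q}$ a Lorentzian metric on the $2$-manifold $\mathcal Q$. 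On $(\mathcal Q,g_{\mathcal Q})$ the two null directions are the roots of a quadratic in the metric components with strictly positive discriminant (as $\det g_{\mathcal Q}<0$); integrating the resulting transverse line fields and using their leaf-labels as coordinates $(u,v)$ gives $g_{uu}=g_{vv}=0$, i.e.~$g_{\mathcal Q}=-\Omega^2\,du\,dv$, $\Omega>0$, after a sign normalization, which is \eqref{eq:metric}, with $\Omega$ and $r$ functions of $(u,v)$ of class $C^2$.

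\emph{The equations.} In \eqref{eq:metric} one has $\sqrt{-g}=\tfrac12\Omega^2 r^2\sin\theta$ and $g^{uv}=-2/\Omega^2$, so $\square_g\phi=-\tfrac{4}{\Omega^2}\left(\partial_u\partial_v\phi+\tfrac{r_u}{r}\phi_v+\tfrac{r_v}{r}\phi_u\right)$; inserting this into \eqref{eqphi} gives \eqref{laste}. For the gravitational part I would pass to the trace-reversed equations: the trace of \eqref{eq:e} in dimension four gives $R=4\Lambda-8\pi T$, and since \eqref{def:tmn} yields $T_{\mu\nu}-\tfrac12 Tg_{\mu\nu}=\partial_\mu\phi\,\partial_\nu\phi+\tfrac{a}{l^2}\phi^2 g_{\mu\nu}$, the field equations become $R_{\mu\nu}=8\pi\,\partial_\mu\phi\,\partial_\nu\phi+\left(\tfrac{8\pi a}{l^2}\phi^2+\Lambda\right)g_{\mu\nu}$. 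A direct Christoffel computation for \eqref{eq:metric} shows that the only non-vanishing Ricci components are $R_{uu}=-\tfrac{2}{r}\left(r_{uu}-2(\log\Omega)_u r_u\right)$ and its $u\leftrightarrow v$ mirror, $R_{uv}=-2(\log\Omega)_{uv}-\tfrac{2r_{uv}}{r}$, and $R_{\theta\theta}=1+\tfrac{4}{\Omega^2}\left(r_u r_v+r r_{uv}\right)$ (with $R_{\varphi\varphi}=\sin^2\theta\,R_{\theta\theta}$), consistently with the vanishing of $T_{u\theta}$, $T_{\theta\varphi}$, etc., so the five listed equations exhaust the content of \eqref{eq:e}. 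The $uu$- and $vv$-equations have right-hand sides $8\pi\phi_u^2$, $8\pi\phi_v^2$ (no $\Lambda$ or mass term, as $g_{uu}=g_{vv}=0$), and rewriting $r_{uu}-2(\log\Omega)_u r_u=\Omega^2\partial_u(r_u/\Omega^2)$ turns them into \eqref{cons1}--\eqref{cons2}. The $\theta\theta$-equation, divided by $r$, is \eqref{eq:ruv}. Finally, using \eqref{eq:ruv} to eliminate $r_{uv}$ from the $uv$-equation, the $\Lambda$- and $\phi^2$-contributions cancel pairwise and one is left precisely with the stated equation for $(\log\Omega)_{uv}$.

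\emph{Main obstacle.} The Ricci and d'Alembertian computations are entirely routine; the one genuinely non-mechanical point is the construction of the warped double-null gauge and the verification that $\Omega$, $r$ inherit the regularity of $g$ --- the latter needs a little care, since the double-null coordinates are built from a merely $C^2$ structure, and is handled most cleanly either by asking slightly more of $g$ or by upgrading regularity a posteriori from the equations just derived. (The stated form also presupposes $r>0$, i.e.~working away from a center of symmetry, which is exactly the regime relevant here since the data are localized near null infinity.)
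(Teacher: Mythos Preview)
Your argument is correct and is the standard derivation that the paper has in mind: the paper does not supply its own proof of this lemma but simply cites \cite{Mihali1} for the warped-product reduction and the form of the equations in double-null gauge. Your sketch --- slice-theorem reduction to $g_{\mathcal Q}+r^2 d\sigma_{S^2}$, introduction of null coordinates on the $2$-dimensional quotient, and the trace-reversed Einstein equations read off componentwise --- is exactly that standard route, and your Ricci components and the cancellation in the $(\log\Omega)_{uv}$ equation check out.
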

Note that the last equation is simply the Klein-Gordon equation \eqref{eqphi} for a spherically symmetric scalar field, since in this case:
$$
0 = \Box_g \phi - \frac{2a\phi}{l^2} = -\frac{4}{\Omega^2} \left( \partial_u \partial_v \phi +\frac{r_u}{r} \phi_v + \frac{r_v}{r} \phi_u \right) - \frac{2a\phi}{l^2}.
$$
Note also that defining $\kappa = -\frac{\Omega^2}{4r_u}$ we have (using (\ref{cons1}))
\begin{align} \label{kappae}
\partial_u \log \kappa &= \frac{4 \pi r}{r_u} (\partial_u \phi )^2 \, .
\end{align}

In view of the lemma, we shall study in the remainder of the article the system \eqref{cons1}-\eqref{laste}. We remark that, while in the statement of the lemma the coordinates $(u,v)$ are only locally defined, the solutions constructed in this paper will always possess global null coordinates. The notation introduced in the above lemma shall be used freely in the following.
\subsection{The Hawking mass}
An important geometric quantity is the renormalized\footnote{The standard definition of the Hawking mass does not include the $\Lambda$ term.} Hawking mass. For any spherically symmetric solution, this is defined as follows:
\begin{eqnarray}
\varpi := \frac{r}{2}\left(1+\frac{4r_u r_v}{ \Omega^2} \right) - \frac{\Lambda}{6} r^3 = \frac{r}{2}\left(1+\frac{4r_u r_v}{ \Omega^2} \right) + \frac{r^3}{2l^2}.
\end{eqnarray}
From \eqref{cons1}-\eqref{laste} it follows that $\varpi$ satisfies
\begin{eqnarray} 
\partial_u \varpi = -8 \pi r^2 \frac{r_v}{\Omega^2} (\partial_u \phi )^2+\frac{4\pi r^2 a }{l^2} r_u \phi^2, \label{ee:varpiu}\\
\partial_v \varpi = -8 \pi r^2 \frac{r_u}{\Omega^2} (\partial_v \phi )^2+\frac{4\pi r^2 a }{l^2} r_v \phi^2. \label{ee:varpiv}
\end{eqnarray}
Note that for $a<0$ the Hawking mass is \emph{not} monotone in the region where $r_u<0$, $r_v\geq 0$ hold. We also introduce the mass ratio
$$
\mu := 1 - 4 \frac{r_u r_v}{\Omega^2} \textrm{ \ \ \ \ \ \ \ and hence \ \ \ \ \ \ \ } 1-\mu=1 -\frac{2 \varpi}{r}+ \frac{r^2}{l^2}.
$$
Similarly, we introduce the quantity $1-\mu_M$ as:
\begin{equation} \label{eq:mum}
1-\mu_M:=1 -\frac{2 M}{r}+ \frac{r^2}{l^2},
\end{equation}
i.e.~$\mu_M$ is the mass ratio of Schwarzschild-AdS.
The wave equation \eqref{eq:ruv} for $r$ may be rewritten using the Hawking mass $\varpi$ as:

\begin{equation}
r_{uv}=-\frac{\Omega^2 \varpi}{2r^2}-\frac{\Omega^2 r}{2 l^2}+\frac{2\pi r a \Omega^2 \phi^2}{l^2} \, .
\end{equation}
\subsection{Triangular domains}
For any real numbers $u_0$, $\delta > 0$, we denote by $\Delta_{\delta,u_0}$ the following triangular subset of $\mathbb{R}^{2}$:
$$\Delta_{\delta,u_0} =\{ (u,v) \in \mathbb{R}^{2} \ \ | \ \  v \ge u_0 \ \ , \ \ u_0+\delta \ge u > v\},$$
depicted below.
\begin{figure}
\[
\input{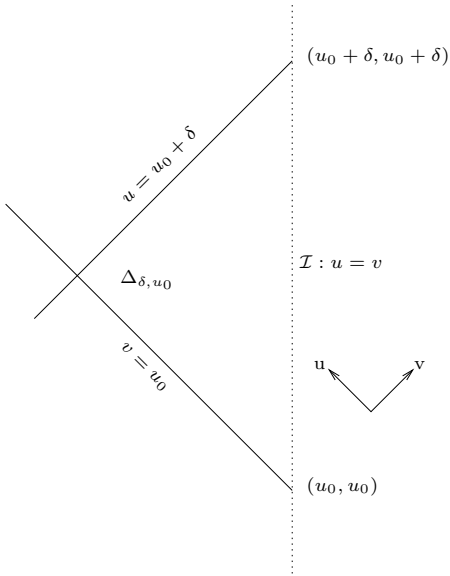}
\]
\caption{The triangular domains $\Delta_{\delta,u_0}$.}
\end{figure}
The restriction of $u$ and $v$ to $\delt$ gives a system of global coordinates on $\delt$ which we will call standard coordinates on $\delt$.
Note that only the $\{v=u\}$ boundary part of $\Delta_{\delta,u_0}$ does not belong to $\Delta_{\delta,u_0}$ itself. This part will be called null-infinity and referred to by $\mathcal{I}$. When we refer to the boundary of $\Delta_{\delta,u_0}$ in the future, we actually mean only the $\{v=u\}$ part of the boundary.

The main objective of this paper will be, for appropriate initial data given on $v=u_0$ and boundary data given on $\mathcal{I}$, to prove the existence of a solution to the system \eqref{cons1}-\eqref{laste} in $\delt$. 
\subsection{Asymptotically Anti-de-Sitter spacetimes} \label{se:aads}
Let $(u,v)$ be standard null coordinates associated to $\delt$ and let $(\Omega, r) \in C^1(\delt) \times C^2(\delt)$ be such that $\Omega > 0$ and $r > 0$ on $\delt$. 
Let $g$ be the Lorentzian metric defined by:
$$
g=-\Omega^2 du dv +r^2 d\sigma_{S^2}.
$$
We say that $(\delt \times S^2,g)$ is asymptotically Anti-de-Sitter if there exists a null coordinate system $(U,V)$ on $\Delta_{\delta,u_0}\cup \mathcal{I}$ such that $U=V$ on $\mathcal{I}$, $r_U < 0$, $r_V > 0$ and $g$ has the following assymptotic behaviour as $\mathcal{I}$ is approached\footnote{Note that while this definition is compatible with having negative AdM mass, the solutions constructed in this paper will all have positive mass $M > 0$. In fact, the mass terms below could be omitted, being of the same order as the $\mathcal{O}$-terms. We included them because later these mass terms will be shown to be the dominant contribution at order $\frac{1}{r}$.}:
\begin{itemize}
\item{ $r(p) \rightarrow \infty$ as $p \rightarrow \mathcal{I}$, }\
\item{ $g = - \left(1 + \mathcal{O}\left(\frac{1}{r^{3}}\right) \right) \left(1-\frac{2M}{r} + \frac{r^2}{l^2} \right) dU dV + r^2  d\sigma_{S^2}$,}
\item{ $R^*(r)= 1-\frac{2M}{r} + \frac{r^2}{l^2}+ \mathcal{O} \left( \frac{1}{r} \right)$,\quad $T(r) = \mathcal{O}\left( \frac{1}{r} \right)$,}
\item{ $T (g_{UV}) = \mathcal{O}\left(\frac{1}{r}\right)$, \quad $R^*( g_{UV} ) = \mathcal{O}\left(r^3\right)$,}
\item{ $R^*(r_U)=\mathcal{O}\left(r^3\right)$, \quad $R^*(r_V)=\mathcal{O}\left(r^3\right)$,}
\item{ $T(r_U)=\mathcal{O}\left(\frac{1}{r}\right)$, \quad $T(r_V)=\mathcal{O}\left(\frac{1}{r}\right)$,}
\end{itemize}
where $R^*=\partial_V-\partial_U$ and $T=\partial_V+\partial_U$.

We note that this definition is compatible with that given in \cite{Holzegelwp}. Indeed, define the coordinates $(t,r^\star)$ by $U = t- r^\star$, $V = t+r^\star$ and observe that $r_{r^\star} = \Omega^2+ \mathcal{O}\left(\frac{1}{r}\right)$, while $r_t = \mathcal{O}\left(\frac{1}{r}\right)$. This means that
\begin{align}
\frac{1}{\Omega^2} dr = \left(1 + \mathcal{O}\left(\frac{1}{r^3}\right)\right) dr^\star + \mathcal{O}\left(\frac{1}{r^3}\right) dt
\end{align}
and hence that the metric $g$ can be written in $\left(t,r\right)$ coordinates as
\begin{align}
g = -\left(1 - \frac{2M}{r} +\frac{r^2}{l^2}+ \mathcal{O}\left(\frac{1}{r}\right) \right) dt^2 + \left(\frac{1}{1+\frac{r^2}{l^2} - \frac{2M}{r}} + \mathcal{O}\left(\frac{1}{r^5}\right) \right) dr^2 \nonumber \\ +
\left( \mathcal{O}\left(\frac{1}{r^3}\right) \right) dt \ dr + r^2 d\sigma_{S^2},
\end{align}
which is compatible with the asymptotics of the metric of an asymptotically AdS spacetime introduced in \cite{Holzegelwp}. Since moreover, the asymptotic behaviour of the derivatives of $g$ also agrees with that of \cite{Holzegelwp}, $(\delt,g)$ is asymptotically AdS in the sense of \cite{Holzegelwp}.
\subsection{Function spaces for the Klein-Gordon field} \label{se:fskgf}
We shall use weighted Sobolev spaces to control the regularity of the Klein-Gordon-field. These norms are motivated by the radial weights arising in the energy estimates for asymptotically-AdS spacetimes.


Let us hence introduce the following weight function (one should think of this as being of the order of the inverse area radius), defined on any interval of the form $\mathcal{N}=(u_0,u_1]$:
$$
\bar{\rho} = \frac{u-u_0}{2}.
$$
We then defined the weighted $H^1$ norm, $||.||_{H^1_{AdS}(\mathcal{N})}$, as: 
\begin{eqnarray}
|| \psi ||_{H^1_{AdS}(\mathcal{N})}= \left( \int_{\mathcal{N}}\left( (\bar{\rho})^{-2} \bar{\psi}_u^2+ (\bar{\rho})^{-4} \psi^2 \right)du' \right)^{1/2}.
\end{eqnarray}
Note that we have the following $\bar{\rho}$-weighted Sobolev inequality:
$$
|| (\bar{\rho})^{-3/2} \psi ||_{C^0\left(\mathcal{N}\right)} \le  \sqrt{\frac{2}{3}} ||\psi||_{H^1_{AdS}\left(\mathcal{N}\right)},
$$
Similarly, we define on the entire $\delt$, the weight factor:
$$\rho(u,v)=\frac{u-v}{2}$$
and define $|| .||_{C^0\left(H^1_{AdS}\right)}$ for functions $\phi$ defined on $\delt$, as the following weighted Sobolev type spacetime norm:
\begin{eqnarray}
|| \psi ||_{C^0\left(H^1_{AdS}\right)}&=& 
\hbox{}\sup_{(u,v)\in \delt}\left( \int_{v}^{u}\left( \rho^{-2} \psi_u^2 + \rho^{-4}\psi^2 \right)(u',v)du'\right)^{1/2} \nonumber \\
&&\hbox{}+\sup_{(u,v) \in \delt} \left( \int_{u_0}^{v} \left( \rho^{-2} \psi_v^2+\rho^{-4} \psi^2 \right)(u,v')dv'\right)^{1/2} \nonumber
\end{eqnarray}

The completion of the set of compactly supported $C^\infty$ functions with respect to the above norms defines a Banach space, which we call $C^0\left(H^1_{AdS}\right)$. We then define $C^{1}\left(H^1_{AdS}\right)$ to be the set of functions $\phi \in C^0\left(H^1_{AdS}\right)$ such that $T(\phi)=\phi_u+\phi_v$ lies also in $C^0\left(H^1_{AdS}\right)$, and we endow $C^{1}\left(H^1_{AdS}\right)$ with the norm 
$$
||\phi||_{C^{1}\left(H^1_{AdS}\right)}:=||\phi||_{C^{0}\left(H^1_{AdS}\right)}+ ||T(\phi)||_{C^{0}\left(H^1_{AdS}\right)}.
$$
While the $\rho$-weights appearing in the $H^1_{AdS}$ norms take their origin in the standard energy estimate associated to the problem, stronger $\rho$-weighted estimates may be propagated by the equations, using a commutation argument (see in particular \cite{Holzegelwp} on this issue). The extra $\rho$-weights that one can gain depend on the value of the mass $a$ and more precisely on how far one is from saturating the BF bound. To measure this extra radial decay, it will be convenient to introduce the constant
\begin{align} \label{sdef}
s = \min \left(\sqrt{9+8a},1\right)\, ,
\end{align}
In view of the BF bound \eqref{bd:bf}, one has $s > 0$. In Proposition \ref{lineartheory}, pointwise decay rates\footnote{Note that in \cite{Holzegelwp}, stronger decay rates are actually established, provided they hold initially. Those will however not be needed for the purpose of this paper.} will be established, with the rate of decay depending on the value of $s$.
%
%
%
%
%
%
%
%
\subsection{Notation}
To establish the results of this paper, we will introduce a renormalized version of the system (see Section \ref{se:rv}). In order for the reader to track the geometric meaning of the renormalized quantities, it will be useful to introduce the following notation:

\begin{itemize}
\item{\textbf{Renormalized variables}:

All renormalized variables will be referred to by an upper tilde. For instance, the renormalized radial function will be $\tilde{r}$ and the renormalized conformal factor $\widetilde{\Omega}$.}
\item{\textbf{Initial data}:

All variables variables belonging to the initial data will be refered to by an upper bar. For instance, $\bar{\tilde{r}}$ denotes the initial data for the renormalized radial function, while $\bar{r}$ denote the initial data for the original radial function.
}
\end{itemize}
\section{Asymptotically AdS characteristic data} \label{se:cids}

\subsection{Construction of asymptotically AdS data sets}
We now turn to the definition and construction of asymptotically AdS data sets for the Einstein-Klein-Gordon equations within spherical symmetry.

Looking at (\ref{cons1})-(\ref{laste}) it seems that we need to prescribe initial values for $r$, $\phi$ and $\Omega$ on $v=u_0$ (satisfying the constraint (\ref{cons1})), and then use the equations to determine $r_v$, $\phi_v$ and $\Omega_v$. We choose a slightly different approach using the Hawking mass as the dynamical variable.

To understand the content of the following Proposition \ref{prop:consdata} note first that prescribing the geometric area radius on $\mathcal{N}$ is equivalent to specifying a $u$-coordinate along $\mathcal{N}$. Second, specifying $\phi$ corresponds to the ``free-data" (since we know that for $\phi=0$ the solution of the evolution problem is expected to be Schwarzschild-AdS, in view of Birkhoff's theorem). One then observes that $\varpi$ can be determined from $\phi$ and $r$ using the constraint equation. From this, all other variables are also determined. Since we wish $\phi$ and $T(\phi)$ to live in the energy space, we shall need ensure that both $\phi$ and $T(\phi)$ satisfy appropriate decay assumptions. 

\begin{proposition} \label{prop:consdata}
Let $k \geq 1$, $a > -9/8$, $M > 0$ and let $\mathcal{N}$ be a real interval of the form $\mathcal{N}=(u_0,u_1]$.
Let $(\bar{r},\bar{\phi} ) \in C^{k+1}(\mathcal{N}) \times C^{k+1}(\mathcal{N})$ with the following properties:
\begin{itemize}
\item{ Monotonicity and sign properties of $\bar{r}$:
\begin{eqnarray}
\bar{r} &>& 0, \label{rsign}\\
\bar{r}_u &<& 0. \label{rusign}
\end{eqnarray}}
\item { Asymptotic behaviour and choice of $u$-coordinates: as $u \rightarrow u_0$,
\begin{eqnarray} 
\bar{r}(u) &\rightarrow& \infty, \label{eq:ras} \\
2\bar{r}_u(u)&=&-\left(1-\overline{\mu_M}\right)(\bar{r}(u))+o\left(\bar{r}^{-1}\right),  \label{eq:ruas0}\\
\partial_u \left(\frac{\bar{r}_{u}}{1-\overline{\mu_M}\left(\bar{r}\right)}\right) &=& o\left( \bar{r}^{-2}\right), \label{eq:ruas}
\end{eqnarray}
where $(1-\overline{\mu_M})(\bar{r})=1-\frac{2M}{\bar{r}}+\frac{\bar{r}^2}{l^2}$ (cf.~\eqref{eq:mum}).
}

\item{Radial decay of the Klein-Gordon field: }
\begin{align} \label{as:dec}
|r^{ \frac{s}{2}} \bar{\phi} | + \Big| \bar{r}^{1+\frac{s}{2}}\frac{\bar{\phi}_u}{\bar{r}_u}\Big| + \frac{\bar{r}^{2}}{\bar{r}_u} \left( \frac{\bar{\phi}_{u}}{\bar{r}_u} \right)_u \leq \mathcal{O} \left( \bar{r}^{-\frac{3}{2}} \right),
\end{align}
where $s$ is the constant introduced in \eqref{sdef}.
\item{Further integrability conditions:}

Definining $\bar{\Phi}$ as:
$$\bar{\Phi}=\bar{r}^2 \left[ \bar{r} \partial_u \left( \frac{ \bar{\phi}_u } { \bar{r}_u} \right)- 4 \bar{\phi}_u - \frac{2a \bar{r}_u}{\bar{r}} \bar{\phi} \right],$$ we have the following integrability properties:
\begin{eqnarray} 
\bar{\Phi} &\in& L^1(\mathcal{N}),  \label{as:s1dec}\\
\bar{r}_u \Pi^2 &\in& L^1(\mathcal{N}), \label{as:s2dec}
\end{eqnarray}
where $\Pi(u)=\int_{u_0}^u \bar{\Phi}(u')du'$.

\end{itemize}
Then, there exists a unique triple $(\bar{\varpi}, \overline{r_v}, \overline{\phi_v}) \in C^k(\mathcal{N})^3$ such that $\left(\bar{r}, \overline{r_v}, \bar{\phi}, \overline{\phi_v}, \bar{\varpi}\right)$ satisfies the following conditions: 

\begin{itemize}
\item $r_v$ sign: $ \overline{r_v}>0$ holds for all $u$ sufficiently close to $u_0$. Also, the quantity
$$\bar{\kappa} : = \frac{\overline{r_v}}{1-\overline{\mu}} $$ 
is positive and satisfies
\begin{equation} \label{bc:kappa}
\lim_{u\rightarrow u_0} \bar{\kappa} = \frac{1}{2} \, .
\end{equation}
%

\item{Asymptotic behaviour of the Hawking mass:}
\begin{equation}\label{varpias}
\lim_{u\rightarrow u_0} \bar{\varpi}\left(u\right) = M \, .
\end{equation}
\item $H^1$ bound for the time derivative of $\phi$:
Defining $\overline{\mathcal{T}}(\overline{\phi})$ as
\begin{eqnarray}
\overline{\mathcal{T}}(\overline{\phi}):= \frac{1}{\bar{\kappa}} \left( \overline{\phi_v} - \frac{\overline{r_v}}{\bar{r}_u} \bar{\phi}_u \right),
\end{eqnarray}
we have the $H^1$ bound: 
\begin{align} \label{tinen}
\int_{\mathcal{N}} \bar{r}^2 \left(\frac{\bar{r}^2}{\bar{r}_u} \left|\frac{d}{du} (\overline{\mathcal{T}}(\overline{\phi}))\right|^2+ \overline{\mathcal{T}}(\overline{\phi})^2 \bar{r}_u  \right) du' < \infty.
\end{align}

\item{Finally, the following constraint equations are satisfied:}

\textbf{- Hawking mass constraint:}
\begin{equation} \label{vpd}
\partial_u \bar{\varpi} = 8 \pi \bar{r}^2 \frac{1-\frac{2\bar{\varpi}}{\bar{r}} + \frac{\bar{r}^2}{l^2}}{4\bar{r}_u} \left(\partial_u \bar{\phi}\right)^2 + \frac{4\pi \bar{r}^2 a}{l^2} \bar{r}_u \bar{\phi}^2,
\end{equation}
\textbf{- $\overline{r_v}$ equation:}
$$
\left( \overline{r_v} \right)_u = \left(- \frac{\bpr}{2\bir^2}-\frac{\bir}{2l^2}+\frac{2\pi a}{l^2} \bir \bar{\phi}^2\right)\left( - \frac{4 \bir_u \overline{r_v}}{1-\frac{2\bpr}{\bir}+\frac{\bir^2}{l^2}}\right),
$$

\textbf{ - $\overline{\phi_v}$ equation:}
\begin{align} \label{wep}
\partial_u \left( \overline{\phi_v} \right) + \frac{\bar{r}_u}{\bar{r}}\overline{\phi_v}+ \frac{\overline{r_v}}{\bar{r}}\bar{\phi}_u= -\frac{\bar{\Omega}^2 a }{2l^2} \bar{\phi}.
\end{align}
\end{itemize}
\end{proposition}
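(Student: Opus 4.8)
The plan is to build the triple by solving, in order, three first-order ODEs in $u$ along $\mathcal{N}$, each of which turns out to be \emph{linear} in its unknown, and in each case to fix the single free constant of integration by a boundary condition imposed in the limit $u\to u_0$, i.e.\ at null infinity. Away from $u_0$, on any compact subinterval of $\mathcal{N}$, existence, uniqueness and $C^k$-regularity of all the quantities involved are immediate from standard ODE theory; hence the entire content of the proposition lies in the behaviour as $u\to u_0$, and this is precisely where the decay and integrability hypotheses \eqref{as:dec}, \eqref{as:s1dec}, \eqref{as:s2dec} are used. \emph{Step 1 ($\bar\varpi$).} The Hawking-mass constraint \eqref{vpd} (which is just \eqref{ee:varpiu} restricted to $\mathcal{N}$ after eliminating $r_v$ and $\Omega$ via the definition of $\varpi$) is a linear ODE $\bar\varpi_u = A(u)\,\bar\varpi + B(u)$, with $A=-\tfrac{4\pi\bar r}{\bar r_u}(\partial_u\bar\phi)^2$ and $B$ assembled from $\bar r,\bar r_u,\bar\phi,\bar\phi_u$. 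Using \eqref{as:dec} together with the $u$-coordinate normalisation \eqref{eq:ruas0}--\eqref{eq:ruas} (so that $\bar r_u\sim-\tfrac12(1-\overline{\mu_M})(\bar r)$ and hence $du\sim-2l^2\bar r^{-2}\,d\bar r$ near $u_0$), one checks $A,B\in L^1(\mathcal{N})$; the integrating-factor formula then yields a solution continuous up to $u_0$ with $\lim_{u\to u_0}\bar\varpi=M$, giving \eqref{varpias}, and uniqueness of such a solution follows because a non-trivial homogeneous solution $c\exp(\int A)$ has non-zero limit at $u_0$.

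\emph{Step 2 ($\bar\kappa$ and $\overline{r_v}$).} Define $\bar\kappa$ as the solution of $\partial_u\log\bar\kappa=\tfrac{4\pi\bar r}{\bar r_u}(\partial_u\bar\phi)^2$ normalised by $\lim_{u\to u_0}\bar\kappa=\tfrac12$; this is well posed by the same $L^1$ bound as for $A$ above, and produces $\bar\kappa>0$ and \eqref{bc:kappa}. Then set $\overline{r_v}:=\bar\kappa\,(1-\bar\mu)$ with $1-\bar\mu=1-\tfrac{2\bar\varpi}{\bar r}+\tfrac{\bar r^2}{l^2}$, and $\bar\Omega^2:=-4\bar r_u\bar\kappa$. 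A direct computation --- differentiate $\overline{r_v}$, substitute \eqref{vpd} for $\partial_u\bar\varpi$ and the defining ODE for $\bar\kappa$, and observe the exact cancellation of the $(\partial_u\bar\phi)^2$-terms --- shows that $\overline{r_v}$ solves the $\overline{r_v}$-equation of the proposition, and $\overline{r_v}>0$ near $u_0$ because $\bar\kappa>0$ while $1-\bar\mu\to+\infty$. (Equivalently one may integrate the linear homogeneous $\overline{r_v}$-equation directly and fix the constant by demanding \eqref{bc:kappa} for $\overline{r_v}/(1-\bar\mu)$; this phrasing makes the uniqueness of $\overline{r_v}$ transparent.)

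\emph{Step 3 ($\overline{\phi_v}$), the crux.} Equation \eqref{wep} is linear: with integrating factor $\bar r$ one gets $\partial_u(\bar r\,\overline{\phi_v})=-\overline{r_v}\,\bar\phi_u-\tfrac{a}{2l^2}\bar r\,\bar\Omega^2\,\bar\phi$, whose general solution carries one free constant, which physically is the value of $\phi_v$ ``at the corner'' $u=u_0\in\mathcal{I}$. This constant is pinned down by the requirement that $\overline{\mathcal{T}}(\bar\phi)=\bar\kappa^{-1}\big(\overline{\phi_v}-\tfrac{\overline{r_v}}{\bar r_u}\bar\phi_u\big)$ have the decay forced by the Dirichlet condition, made quantitative as \eqref{tinen}. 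The mechanism is that a suitably $\bar r$-weighted version of $\overline{\mathcal{T}}(\bar\phi)$ satisfies a transport equation whose right-hand side coincides with $\bar\Phi$ up to terms that are absolutely integrable by \eqref{as:dec}; integrating this relation from $u_0$ both forces the choice of constant (the weighted quantity must vanish at $u_0$) and exhibits the weighted $\overline{\mathcal{T}}(\bar\phi)$ as $\Pi=\int_{u_0}^{u}\bar\Phi$ plus a convergent remainder. Hypothesis \eqref{as:s1dec} ($\bar\Phi\in L^1$) then controls the $\tfrac{\bar r^2}{\bar r_u}\big|\tfrac{d}{du}\overline{\mathcal{T}}(\bar\phi)\big|^2$-term of \eqref{tinen}, and \eqref{as:s2dec} ($\bar r_u\Pi^2\in L^1$) controls the $\overline{\mathcal{T}}(\bar\phi)^2\bar r_u$-term. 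Uniqueness: any other solution of \eqref{wep} changes $\overline{\mathcal{T}}(\bar\phi)$ by a multiple of the homogeneous profile, of size $\sim c/\bar r$; a leading-order computation against the sign-indefinite weight in \eqref{tinen} shows that this contribution makes the integral divergent unless $c=0$, so the triple is unique. Finally, $C^k$-regularity of $\bar\varpi,\overline{r_v},\overline{\phi_v}$ is inherited from $\bar r,\bar\phi\in C^{k+1}(\mathcal{N})$ and the previously constructed quantities through the ODEs they solve.

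I expect the main obstacle to be exactly Step 3: identifying the correct $\bar r$-weighted renormalisation of $\overline{\mathcal{T}}(\bar\phi)$, deriving its transport equation, verifying that its source differs from $\bar\Phi$ only by an $L^1$ error (this needs \eqref{as:dec} and the metric asymptotics of Section~\ref{se:aads}), and then the bookkeeping converting \eqref{as:s1dec}--\eqref{as:s2dec} into \eqref{tinen}. The uniqueness half of this step --- showing that the one-parameter ambiguity in $\overline{\phi_v}$ is genuinely eliminated by \eqref{tinen}, not merely constrained --- likewise hinges on a careful leading-order analysis of the homogeneous profile against the non-positive energy density, and is the second place where the asymptotically-AdS structure (rather than a soft ODE argument) is essential.
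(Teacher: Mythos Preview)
Your proposal is correct and follows essentially the same route as the paper: solve \eqref{vpd} as a linear ODE for $\bar\varpi$ with boundary value $M$ at $u_0$; obtain $\bar\kappa$ (and hence $\overline{r_v}$, $\bar\Omega^2$) from $\partial_u\log\bar\kappa=\tfrac{4\pi\bar r}{\bar r_u}(\partial_u\bar\phi)^2$ with $\bar\kappa\to\tfrac12$; then determine $\overline{\phi_v}$ via a transport equation for a weighted $\overline{\mathcal T}(\bar\phi)$ with zero boundary value at $u_0$. The paper's proof makes the ``suitably $\bar r$-weighted version'' explicit --- the weight is $\bar r\bar\kappa$, and the transport equation reads
\[
\frac{d}{du}\bigl(\bar r\bar\kappa\,\overline{\mathcal T}(\bar\phi)\bigr)
= -\bar r\,\overline{r_v}\,\frac{d}{du}\frac{\bar\phi_u}{\bar r_u}
+\bar\phi_u\Bigl[-2\overline{r_v}-\tfrac{2\bar\kappa\bar r^2}{l^2}-\tfrac{2\bar\kappa\bar\varpi}{\bar r}+\tfrac{8\pi\bar r^2 a\bar\kappa\bar\phi^2}{l^2}\Bigr]
-\tfrac{a\bar\Omega^2\bar r}{2l^2}\bar\phi,
\]
whose right-hand side is indeed $-\tfrac{1}{2l^2}\bar\Phi$ to leading order, exactly as you anticipated; the uniqueness argument is also the same (the $H^1$ bound \eqref{tinen} forces $\bar r\bar\kappa\,\overline{\mathcal T}(\bar\phi)\to 0$ at $u_0$, eliminating the one-parameter freedom).
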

\begin{proof}
Define the quantity $\bar{\varpi}$ as the unique solution of \eqref{vpd}
with boundary condition given by \eqref{varpias}. Having constructed $\bar{\varpi}$, we define $\overline{r_v}$ as the $C^k$ quantity (cf.~(\ref{kappae})) :
\begin{align} \label{rvk}
\overline{r_v} = \frac{1}{2} \left( 1 -\frac{2\bar{\varpi }}{\bar{r}}+\frac{\bar{r}^2}{l^2} \right) \exp \left( \int^u_{u_0} \frac{4 \pi \bar{r}}{\bar{r}_u} \left( \partial_u \bar{\phi} \right)^2 du' \right),
\end{align} 
using equations \eqref{eq:ruas0} and \eqref{as:dec}.
We also define the $C^k$ quantities $\bar{\Omega} > 0$ and $\bar{\kappa}$ by
\begin{eqnarray}
\bar{\Omega}^2&=&-\frac{4\bar{r}_u\overline{{r}_v}}{1-\frac{2\bar{\varpi}}{\bar{r}}+\frac{\bar{r}^2}{l^2}}, \\
\bar{\kappa} &=& \frac{\overline{r_v} }{1-\frac{2\bar{\varpi}}{\bar{r}}+\frac{\bar{r}^2}{l^2}}.
\end{eqnarray}
\eqref{bc:kappa} is then immediate from (\ref{rvk}). Next we define the $C^1$ quantity $\overline{\mathcal{T}}(\overline{\phi})$ by integrating:
$$
 \frac{d}{du} \left( \bar{r} \bar{\kappa} \overline{\mathcal{T}}(\overline{\phi}) \right)= -\bar{r} \overline{r_v} \frac{d}{du} \frac{\bar{\phi}_u}{\bar{r}_u}+\bar{\phi}_u \left[ -2 \overline{r_v} - 2 \frac{\bar{\kappa} \bar{r}^2}{l^2}-\frac{2 \bar{\kappa} \bar{\varpi}}{\bar{r}}+ \frac{8 \pi \bar{r}^2 a \bar{\kappa} \bar{\phi}^2} {l^2} \right]- \frac{a \bar{\Omega}^2 \bar{r} }{2l^2} \bar{\phi}
$$
with the boundary condition $\bar{r} \bar{\kappa} \overline{\mathcal{T}}(\overline{\phi})=0$ at $u=u_0$. It follows from the decay assumptions \eqref{as:dec} and the conditions \eqref{as:s1dec}-\eqref{as:s2dec} that: 
$$
\int_{\mathcal{N}} \bar{r}^2 \left( \frac{\bar{r}^2}{\bar{r}_u} \left|\frac{d}{du} (\overline{\mathcal{T}}(\overline{\phi}))\right|^2+ \overline{\mathcal{T}}(\overline{\phi})^2 \bar{r}_u  \right) du' < \infty \, .
$$
Finally, $\overline{\phi_v}$ is defined from $\overline{\mathcal{T}}(\overline{\phi})$:
$$
\overline{\phi_v} = \bar{\kappa} \overline{\mathcal{T}}(\overline{\phi})+ \frac{\overline{r_v}}{\bar{r}_u} \bar{\phi}_u.
$$
One can then check that $\left(\bar{r}, \bar{\phi}, \overline{r_v}, \overline{\phi_v}, \bar{\varpi}\right)$ satisfies all the requirements stated in the proposition. The uniqueness of $\overline{r_v}$ follows from the ordinary differential equation satisfied by $\kappa$ and the boundary condition \eqref{bc:kappa}. The uniqueness of $\overline{\phi_v}$ follows from that of $\overline{\mathcal{T}}(\bar{\phi})$, since the $H^1$ bound \eqref{tinen} imposes the boundary conditions $\bar{r} \bar{\kappa} \overline{\mathcal{T}}(\overline{\phi})=0$.
\end{proof}

\begin{remark}
The monotonicity properties and the asymptotics assumed on $\bar{r}\left(u\right)$, $\overline{r_v}$ and $\bar{\kappa}$ correspond to the choice of an asymptotically AdS coordinate system. The decay properties for $\bar{\phi}$ contain the $s$-improvement which is familiar from the linear case.
The integrability conditions  (\ref{as:s1dec})  and (\ref{as:s2dec}) are imposed to ensure that $\overline{\mathcal{T}}\left(\bar{\phi}\right)$ lives in the energy space, (\ref{tinen}). Finally, we introduced the expression $\overline{\mathcal{T}}\left(\bar{\phi}\right)$ because it is manifestly invariant under a change of $u$-coordinate. Alternatively, one could work with the algebraically simpler but non-invariant $\overline{T} \left(\bar{\phi}\right)= \bar{\phi}_u + \overline{\phi_v}$ (cf.~Lemma \ref{le:nr}).
\end{remark}

Proposition \ref{prop:consdata} leads us naturally to the following definition of asymptotically Anti-de-Sitter data sets:
\begin{definition} \label{def:assbids}
Let $k \geq 1$, $a > -9/8$, $M > 0$ and let $\mathcal{N}$ be a real interval of the form $\mathcal{N}=(u_0,u_1]$. A \underline{$\mathcal{C}^{1+k}_{a,M}(\mathcal{N})$ asymptotically Anti-de-Sitter data set} 
is a pair of functions $(\bar{r},\bar{\phi})$ such that there exists a coordinate system on $\mathcal{N}$ in which $\bar{r}$ satisfies \eqref{rsign}-\eqref{rusign} and \eqref{eq:ras}-\eqref{eq:ruas} and $\bar{\phi}$ satisfies \eqref{as:dec}, \eqref{as:s1dec}, \eqref{as:s2dec}. 
\end{definition}
\begin{remark}
In view of the above proposition, we shall sometimes refer, by a small abuse of notation, to $\left(\bar{r}, \overline{r_v}, \bar{\phi}, \overline{\phi_v}, \bar{\varpi}\right)$, where $(\overline{r_v},  \overline{\phi_v}, \bar{\varpi})$ has been constructed from $(\bar{r},\bar{\phi})$, as an asymptotically Anti-de-Sitter data set.
\end{remark}
\subsection{A geometric norm for the initial data} \label{nosec}
The construction of the data suggests to introduce the following norm for the Klein-Gordon field: Given $\left(\bar{r},\bar{\phi}\right)$ an asymptotically AdS data set, we define $||\bar{\phi}||_{H^1_{AdS}\left(\bar{r}, \mathcal{N}\right)}$ as:

$$
|| \bar{\phi}||_{H^1_{AdS}\left(\bar{r},\mathcal{N}\right)}= \left( \int_{\mathcal{N}}\bar{r}^2 \left(  \frac{\bar{r}^2}{|\bar{r}_u|} \bar{\phi}_u^2+ |\bar{r}_u|\bar{\phi}^2 \right)du' \right)^{1/2}
$$
and then the total invariant norm of $(\bar{r},\bar{\phi})$, $N_{inv}\left[ \bar{r},\bar{\phi}\right]$, as:
\begin{eqnarray} \label{nor:inv}
N_{inv}\left[ \bar{r},\bar{\phi}, \mathcal{N}\right]=|| \bar{\phi} ||_{H^1_{AdS}(\bar{r}, \mathcal{N})}+||\overline{\mathcal{T}}(\overline{\phi})||_{H^1_{AdS}(\bar{r},\mathcal{N})}+||\bar{r}^{5/2+s} \frac{\bar{\phi}_u}{\bar{r}_u} ||_{C^0(\mathcal{N})}, \nonumber 
\end{eqnarray}
where $\overline{\mathcal{T}}(\bar{\phi})$ is defined as in Proposition \ref{prop:consdata} and $s$ is the constant introduced in \eqref{sdef} . Note that this norm is invariant under a change of $u$-coordinate. Hence in view of Birkhoff's theorem, it is a geometric measure of the distance beetween our initial data set and an initial data set for Schwarzschild-AdS.  

\begin{remark}
The $H^1_{AdS}$-type norms originate from the energy estimate for the wave equation on asymptotically AdS spacetimes. The pointwise norm on $\bar{\phi}_u$ in $N_{inv}\left[ \bar{r},\bar{\phi},, \mathcal{N}\right]$, on the other hand, is specific to spherical symmetry and exhibits an additional $r$-weight characteristic of the problem, cf.~(\ref{sdef}).
\end{remark}
Let us also define, for any subset $\mathcal{N}^\prime \subset \mathcal{N}$ the quantities
\begin{align}
\mathcal{A} \left[\mathcal{N}^\prime\right]=  \sup_{\mathcal{N}^\prime, 1-\bar{\mu} \neq 0} \Big| \frac{\bar{r}^3}{Ml^2} \left( \frac{\bar{r}_u}{1-\bar{\mu}} + \frac{1}{2} \right)\Big| +   \sup_{\mathcal{N}^\prime, 1-\bar{\mu} \neq 0} \Big| \frac{\bar{r}^2}{l^2} \left( \partial_u \frac{\bar{r}_u}{1-\bar{\mu}} \right)\Big| 
\end{align}
\begin{align}
\mathcal{B}  \left[\mathcal{N}^\prime\right]= \min\left( \inf_{{\mathcal{N}^\prime}} \Big| \frac{1-\bar{\mu}}{\bar{r}^2} \Big| , \inf_{{\mathcal{N}^\prime}} \Big| \frac{1-\overline{\mu_M}}{\bar{r}^2} \Big| \right)
\end{align}
The quantity $\mathcal{A} \left[\mathcal{N}^\prime\right]$ measures how close the choice of $u$-coordinate is to the $u$-Eddington-Finkelstein coordinate of Schwarzschild-AdS. The quantity $\mathcal{B} \left[\mathcal{N}^\prime\right]$ has geometric significance: A lower bound on $\mathcal{B}$ guarantees the absence of trapped surfaces.
Of course, we could have $\mathcal{B} \left[\mathcal{N}\right]=0$ and hence possibly $\mathcal{A} \left[\mathcal{N}\right]=\infty$. However, in view of the asymptotically AdS property, given any $0<c<l^2$, one can always restrict $\mathcal{N}$ to a subset $\mathcal{N}^\prime \subset \mathcal{N}$ near infinity such that $\mathcal{B} \left[\mathcal{N}^\prime\right]>c$. 
Hence, given a $\mathcal{C}^{1+k}_{a,M}(\mathcal{N})$ asymptotically AdS data set $\left(\bar{r}, \overline{r_v}, \bar{\phi}, \bar{\phi}_v, \bar{\varpi}\right)$ we can decompose
\begin{align} \label{decomp}
\mathcal{N} = \mathcal{N}_1 \cup \mathcal{N}_2 = \left(u_0,u_{max}\right] \cup  \left[u_{max},u_1\right] \, ,
\end{align}
where $u_{max}>u_0$ is the supremum over all $u$ such that 
$\mathcal{B}\left[\mathcal{N}_1\right] \geq \frac{l^2}{2}$ holds. We remark that instead of $\frac{l^2}{2}$ one may work with any $0<c<l^2$. Clearly, it then follows that $\mathcal{A}\left[\mathcal{N}_1\right] < \infty$, with the bound depending only on the precise choice of coordinates.

The point of the above definitions is that on $\mathcal{N}_1$, the geometric norm on the Klein-Gordon field is equivalent to the energy norm introduced in Section \ref{se:fskgf}:
\begin{lemma} \label{le:nr}
We have the following estimates on $\mathcal{N}_1$:
\begin{align} \label{pio1}
 || \bar{\phi}||_{H^1_{AdS}\left(\mathcal{N}_1\right)} \leq \tilde{C} || \bar{\phi}||_{H^1_{AdS}\left(\bar{r},\mathcal{N}_1\right)} \, ,
\end{align}
\begin{align} \label{pio2}
 || \overline{T}\bar{\phi}=\bar{\phi}_u+\overline{\phi_v}||_{H^1_{AdS}\left(\mathcal{N}_1\right)} \leq \tilde{C} \left[ || \overline{\mathcal{T}} \bar{\phi}||_{H^1_{AdS}\left(\bar{r},\mathcal{N}_1\right)} + || \bar{\phi}||_{H^1_{AdS}\left(\bar{r},\mathcal{N}_1\right)} \right] \, ,
\end{align}
with $\tilde{C}$ depending only on $\mathcal{A}\left[\mathcal{N}_1\right]$ and $N_{inv}\left[ \bar{r},\bar{\phi},\mathcal{N}_1\right]$ and, such that moreover, $\tilde{C}$ is a decreasing function of $N_{inv}\left[ \bar{r},\bar{\phi},\mathcal{N}_1\right]$.
\end{lemma}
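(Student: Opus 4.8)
The plan is to prove the two inequalities by direct comparison of the integrands, exploiting the fact that on $\mathcal{N}_1$ the quantities $\bar{r}_u/(1-\bar\mu)$ and $1-\bar\mu$ are controlled in terms of $\mathcal{A}[\mathcal{N}_1]$ and $\mathcal{B}[\mathcal{N}_1]\geq l^2/2$. The key observation is that the weight $\bar\rho = (u-u_0)/2$ in the $H^1_{AdS}(\mathcal{N}_1)$ norm is comparable to $1/\bar{r}$: from the asymptotic condition \eqref{eq:ruas0}, $2\bar{r}_u = -(1-\overline{\mu_M})(\bar r) + o(\bar r^{-1})$, so $\bar{r}_u \sim -\bar{r}^2/(2l^2)$ near $u_0$, which after integration gives $\bar{r} \sim l^2/(u-u_0) = l^2/(2\bar\rho)$, i.e. $\bar\rho^{-1} \sim \bar{r}/l^2$ (with comparability constants depending only on the coordinate choice, hence on $\mathcal{A}[\mathcal{N}_1]$, and — because the $o(\bar r^{-1})$ error must be absorbed — on a lower bound for $\bar{r}$, which via $\mathcal{B}[\mathcal{N}_1]\ge l^2/2$ is itself controlled; the monotone dependence on $N_{inv}$ enters because large data forces one to go further out to control error terms). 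Consequently $\bar\rho^{-2} \approx \bar{r}^2/l^4$ and $\bar\rho^{-4}\approx \bar{r}^4/l^8$, while the $H^1_{AdS}(\bar r,\mathcal{N}_1)$ norm carries weights $\bar{r}^2\cdot \bar{r}^2/|\bar{r}_u|$ on $\bar\phi_u^2$ and $\bar{r}^2|\bar{r}_u|$ on $\bar\phi^2$. Since $|\bar{r}_u| \approx \bar{r}^2/(2l^2)$ on $\mathcal{N}_1$, we get $\bar{r}^2|\bar{r}_u| \approx \bar{r}^4/(2l^2) \approx \bar\rho^{-4}$ and $\bar{r}^4/|\bar{r}_u| \approx 2l^2\bar{r}^2 \approx \bar\rho^{-2}$, which yields \eqref{pio1} after matching the two integrands.

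For \eqref{pio2} I would first relate the non-invariant $\overline{T}\bar\phi = \bar\phi_u + \overline{\phi_v}$ to the invariant $\overline{\mathcal{T}}(\bar\phi)$ using the defining relation $\overline{\phi_v} = \bar\kappa\,\overline{\mathcal{T}}(\bar\phi) + (\overline{r_v}/\bar{r}_u)\bar\phi_u$ from Proposition \ref{prop:consdata}, so that
\[
\overline{T}\bar\phi = \bar\kappa\,\overline{\mathcal{T}}(\bar\phi) + \left(1 + \frac{\overline{r_v}}{\bar{r}_u}\right)\bar\phi_u = \bar\kappa\,\overline{\mathcal{T}}(\bar\phi) + \frac{\bar{r}_u + \overline{r_v}}{\bar{r}_u}\bar\phi_u.
\]
On $\mathcal{N}_1$ we have $\bar\kappa \to 1/2$ and $\bar\kappa$ bounded above and below; moreover $\bar{r}_u + \overline{r_v} = \bar{r}_u(1 - \bar\kappa(1-\bar\mu)/(-\bar{r}_u))$... — more simply, $\overline{r_v} = \bar\kappa(1-\bar\mu)$ and $1-\bar\mu = -4\bar{r}_u\overline{r_v}/\bar\Omega^2$, so $\overline{r_v}/\bar{r}_u$ is bounded and hence $(\bar{r}_u+\overline{r_v})/\bar{r}_u$ is bounded by a constant depending on $\mathcal{A}[\mathcal{N}_1]$. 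Then the triangle inequality in $H^1_{AdS}(\mathcal{N}_1)$ gives
\[
\|\overline{T}\bar\phi\|_{H^1_{AdS}(\mathcal{N}_1)} \lesssim \|\bar\kappa\,\overline{\mathcal{T}}(\bar\phi)\|_{H^1_{AdS}(\mathcal{N}_1)} + \Big\|\tfrac{\bar{r}_u+\overline{r_v}}{\bar{r}_u}\bar\phi_u\Big\|_{H^1_{AdS}(\mathcal{N}_1)},
\]
and I would bound each term: the first by $\|\overline{\mathcal{T}}(\bar\phi)\|_{H^1_{AdS}(\bar r,\mathcal{N}_1)}$ via the same weight-comparison as in \eqref{pio1} (paying attention to the $\bar\phi$-part of the $H^1$ norm, where $\bar\kappa$ and its $u$-derivative must be controlled — here $\partial_u\bar\kappa$ is governed by \eqref{kappae} and the decay \eqref{as:dec}, contributing the $N_{inv}$-dependence of $\tilde C$), and the second by $\|\bar\phi\|_{H^1_{AdS}(\bar r,\mathcal{N}_1)}$ again by weight comparison, after noting the cross term from differentiating the coefficient is lower order and absorbed using \eqref{as:dec}.

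The main obstacle I anticipate is \emph{not} the algebra of matching $\bar r$-weights to $\bar\rho$-weights, which is routine, but rather making the comparison $\bar\rho \approx 1/\bar{r}$ quantitative with constants depending only on the stated quantities: one must integrate the asymptotic ODE \eqref{eq:ruas0} with its $o(\bar r^{-1})$ error and show the error does not degrade the comparability beyond a factor controlled by $\mathcal{A}[\mathcal{N}_1]$ and the lower bound on $\bar{r}$ coming from $\mathcal{B}[\mathcal{N}_1]\geq l^2/2$ — this is where the precise hypotheses \eqref{eq:ras}--\eqref{eq:ruas} are used, and where the somewhat delicate claim that $\tilde C$ can be taken \emph{decreasing} in $N_{inv}$ must be extracted (larger field data means the constraint \eqref{vpd} pushes $\bar\varpi$ away from $M$, potentially worsening the region where the asymptotic comparison is uniform, so one shrinks to a smaller near-infinity piece — but since we are already on the fixed $\mathcal{N}_1$ this manifests as a worse constant; tracking that the constant only gets worse, never better, as $N_{inv}$ grows requires care). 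A secondary subtlety is the treatment of the $\bar\phi$-terms (as opposed to $\bar\phi_u$-terms) in the $H^1_{AdS}$ norms, since these require controlling products of $\bar\kappa$ with $\bar\phi$ and their $u$-derivatives, for which one invokes the pointwise decay $\|\bar{r}^{5/2+s}\bar\phi_u/\bar{r}_u\|_{C^0}$ built into $N_{inv}$.
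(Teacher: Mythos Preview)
Your approach for \eqref{pio1} matches the paper's: the key is the bound $|\log(\bar{r}\bar\rho)| < \tilde C$, which makes the $\bar\rho$-weights and $\bar r$-weights comparable, and you correctly identify this as the content. The paper derives this (along with companion bounds on $\overline{r_v}/(1-\bar\mu)-\tfrac12$ and its $u$-derivative) directly from the explicit formula \eqref{rvk} for $\overline{r_v}$.

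For \eqref{pio2} your algebraic decomposition
\[
\overline{T}\bar\phi = \bar\kappa\,\overline{\mathcal{T}}(\bar\phi) + \frac{\bar{r}_u+\overline{r_v}}{\bar{r}_u}\,\bar\phi_u
\]
is equivalent to the paper's (they write the same relation from the $\overline{\mathcal{T}}$ side and substitute $\overline{\phi_v}=\overline{T}\bar\phi-\bar\phi_u$). However, there is a genuine gap in your argument: you claim only that the coefficient $(\bar{r}_u+\overline{r_v})/\bar{r}_u$ is \emph{bounded}, and that this suffices ``by weight comparison''. It does not. The zero-order piece of the $H^1_{AdS}(\mathcal{N}_1)$ norm of $C\bar\phi_u$ is
\[
\int_{\mathcal{N}_1} \bar\rho^{-4}\,|C|^2\,|\bar\phi_u|^2\,du \;\sim\; \int_{\mathcal{N}_1} \bar{r}^{\,4}\,|C|^2\,|\bar\phi_u|^2\,du,
\]
whereas the available control from $\|\bar\phi\|_{H^1_{AdS}(\bar r,\mathcal{N}_1)}$ is $\int \bar{r}^{4}/|\bar r_u|\,|\bar\phi_u|^2\sim\int \bar{r}^{\,2}|\bar\phi_u|^2$. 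With $|C|\lesssim 1$ this leaves an uncontrolled factor of $\bar r^{\,2}\to\infty$. The pointwise bound on $\bar\phi_u$ in $N_{inv}$ does not rescue this either, since for $s\le 1$ the resulting integral still diverges.

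What you are missing is precisely the observation the paper singles out: the coefficient \emph{decays strongly}, in fact
\[
\Big|\frac{\bar r_u+\overline{r_v}}{\bar r_u}\Big| \;=\; \bar\kappa\Big|\frac{1-\bar\mu}{\overline{r_v}}+\frac{1-\bar\mu}{\bar r_u}\Big| \;=\; O(\bar r^{-3}),
\]
because $\bar r_u/(1-\bar\mu)\to -\tfrac12$ (controlled by $\mathcal{A}[\mathcal{N}_1]$) and $\overline{r_v}/(1-\bar\mu)=\bar\kappa\to \tfrac12$ (with $O(\bar r^{-3})$ error, from \eqref{rvk}). Once you use this $\bar r^{-3}$ decay, the weight mismatch disappears and the second term is indeed controlled by $\|\bar\phi\|_{H^1_{AdS}(\bar r,\mathcal{N}_1)}$. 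So the fix is simple, but the step as written is wrong; you should replace ``bounded'' by the explicit $O(\bar r^{-3})$ decay and track that this is exactly what closes the estimate.
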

\begin{proof}
From the construction of the initial data, in particular from (\ref{rvk}), one obtains  that
\begin{align} \label{hebon}
\Big| \log \left(\bar{r} \rho \right)\Big|  + \Big| \frac{\bar{r}^3}{Ml^2} \left( \frac{\overline{r_v}}{1-\bar{\mu}} - \frac{1}{2} \right)\Big| + \Big| \frac{\bar{r}^2}{l^2} \left( \partial_u \frac{\overline{r_v}}{1-\bar{\mu}} \right)\Big| < \tilde{C}
\end{align}
holds everywhere on $\mathcal{N}_1$, with $\tilde{C}$ having the dependence as stated in the theorem. This establishes the equivalence of the weights appearing in the norms of (\ref{pio1}) and hence the equivalence of the norms themselves. For (\ref{pio2}), one notices the relation
\begin{align}
\overline{\mathcal{T}} \left(\phi\right) = - \frac{1-\bar{\mu}}{\bar{r}_u} \overline{T} \left(\phi\right) + \overline{\phi_v} \left(\frac{1-\bar{\mu}}{\overline{r_v}} +\frac{1-\bar{\mu}}{\bar{r}_u} \right) \, ,
\end{align}
and that the bracket decays strongly in $\bar{r}$ in view of of the bound on $\mathcal{A}[\mathcal{N}_1]$ and (\ref{hebon}). The estimate (\ref{pio2}) then follows by straightforward computation, 
writing $\overline{\phi_v} = \overline{T}\left(\bar{\phi}\right)-\bar{\phi}_u$.
\end{proof}
%
%
%
%
%
%
%
%
%
%
%
\section{The main theorem} \label{se:tmt}
We are now ready to state the main result of this paper. The following theorem guarantees the existence and uniqueness of a solution in a small triangle localized near null-infinity (in particular, we will restrict the data to $\mathcal{N}_1$). From this one easily obtains a solution in an entire small strip to the future of $\mathcal{N}$, Corollary \ref{prop:ile}.

\begin{theorem} \label{wellposed}
Given a $\mathcal{C}^{1+k}_{a,M}(\mathcal{N})$ asymptotically AdS data set $\left(\bar{r}, \overline{r_v}, \bar{\phi}, \bar{\phi}_v, \bar{\varpi}\right)$ set on $\mathcal{N}=(u_0,u_1]$, there exists a $0<\delta<u_{max}-u_0$ (with $u_{max}$ defined in (\ref{decomp})) such that the following statement is true. There exists a unique solution $\left(r, \Omega,\phi,\varpi\right)$ of the equations (\ref{cons1})-(\ref{laste}) in $\delt$ with
\begin{itemize}
\item $r\left(u,v\right)$ is $C^{1+k} \left(\Delta_{\delta,u_0}\right)$, $\Omega\left(u,v\right)$ is $C^k\left(\Delta_{\delta,u_0}\right)$, 
\item $\phi\left(u,v\right)$ is in $C^{k} \left(\Delta_{\delta,u_0}\right)\cap C^1\left(H^{1}_{AdS}\left(\Delta_{\delta,u_0}\right)\right)$, 
\item and $\varpi\left(u,v\right)$ is in $C^k\left(\Delta_{\delta,u_0}\right)$,
\end{itemize}
and such that on $v=u_0$: $(r,r_v,\phi,\phi_v,\varpi)=(\bar{r},\overline{r_v},\bar{\phi},\overline{\phi_v},\bar{\varpi})$. Moreover, 
\begin{enumerate}
\item the size of the domain of definition of the solution, $\delta$, depends only on 
$\mathcal{A}\left[\mathcal{N}_1\right]$ and the coordinate-invariant norm\footnote{Recall that $\mathcal{N}_1$ was defined in \eqref{decomp}, independently of $\delta$.} $N_{inv}\left[ \bar{r},\bar{\phi}, \mathcal{N}_1\right]$ on the initial data (all defined in Section \ref{nosec}). 
\item the associated spacetime $\left(\Delta_{\delta,u_0} \times S^2, g = -\Omega^2 \left(u,v\right) du dv + r^2\left(u,v\right) d\sigma_{S^2}\right)$ is asymptotically AdS.
\item The trace of the solution on any $v$=const ray contained in $\Delta_{\delta,u_0}$ defines an asymptotically AdS initial data set.
\end{enumerate}
\end{theorem}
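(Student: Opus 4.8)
The plan is to solve the renormalized system of Section \ref{se:rv} by a Banach fixed-point argument on $\Delta_{\delta,u_0}$, with $\delta$ chosen small. First, by Proposition \ref{prop:consdata} the free data $(\bar r,\bar\phi)$ determine the full consistent data $(\bar r,\overline{r_v},\bar\phi,\overline{\phi_v},\bar\varpi)$ on $\{v=u_0\}$, and by Lemma \ref{le:nr} the coordinate-invariant norm $N_{inv}$ controls the $H^1_{AdS}$-energy of $\bar\phi$ and $\overline{T}\bar\phi$ on $\mathcal{N}_1$; restricting the data to $\mathcal{N}_1$ guarantees $\mathcal{A}[\mathcal{N}_1]<\infty$, i.e.\ quantitative control of the coordinate choice. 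I would then set up an iteration map $\mathfrak{F}$ acting on pairs consisting of the renormalized geometric unknowns $(\tilde r,\widetilde\Omega,\varpi)$, measured in weighted pointwise $C^k$-type norms, and the Klein-Gordon field $\phi$, measured in the energy space $C^1(H^1_{AdS}(\Delta_{\delta,u_0}))$. Given a current iterate, one first updates the geometry: the renormalized forms of \eqref{cons1}, \eqref{eq:ruv} and the $\Omega$-equation together with the mass equations \eqref{ee:varpiu}--\eqref{ee:varpiv} are transport equations, integrated along characteristics — the $u$-equations from the data on $\{v=u_0\}$, and the $\varpi$-equation in $v$ from the clean Dirichlet condition $\varpi\to M$ on $\mathcal{I}$ (this being precisely the advantage of using the Hawking mass as an independent variable). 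One then updates $\phi$ by solving the Klein-Gordon equation \eqref{laste} on the new background, using the null $r$-weighted Hardy and energy estimates of the linear theory (Proposition \ref{lineartheory}, \cite{Holzegelwp}), into which the Dirichlet condition at $\mathcal{I}$ is already encoded through the energy space.

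The first main step is the self-mapping property: for $\delta$ small, depending only on $\mathcal{A}[\mathcal{N}_1]$ and $N_{inv}[\bar r,\bar\phi,\mathcal{N}_1]$, the map $\mathfrak{F}$ preserves a suitable ball. Smallness enters through the length $\delta$ of the $u$-integration in the transport equations — the nonlinear sources $\phi^2$, $(\partial\phi)^2$, $\partial_u\phi\,\partial_v\phi$ being controlled by the energy norm via the weighted Sobolev inequality of Section \ref{se:fskgf} — and through the standard $\delta$-dependence of the linear energy estimate for $\phi$. A delicate point is the propagation of constraints: the $u$-constraint \eqref{cons1}/\eqref{ee:varpiu} is used to define $\varpi$ and is consistent with the data, whereas the $v$-constraint \eqref{cons2}/\eqref{ee:varpiv} must first be shown to hold on the boundary $\mathcal{I}$ — using $\varpi=M$ and the asymptotics of Section \ref{se:aads} — and only then propagated into $\Delta_{\delta,u_0}$ by the evolution equations. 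In parallel one must control a few higher $u$-derivatives of the renormalized metric components so that the hypotheses of \cite{Holzegelwp} are satisfied along every ingoing ray, which is what makes the linear estimates applicable inside the iteration.

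The second, and I expect the hardest, step is the contraction property, which genuinely involves the nonlinear coupling. Given two iterates producing backgrounds $g,g'$ close in the metric norms, with corresponding $\phi,\phi'$, the difference $\phi-\phi'$ solves an inhomogeneous Klein-Gordon equation with respect to $g$ whose source is schematically $(g-g')$ times (first derivatives of) $\phi'$ — a term involving top-order quantities. To estimate it in the energy norm one needs $\phi'$ and $\partial\phi'$ to carry extra $r$-decay, which is exactly what the commuted estimate for $T(\phi')$ supplies (improved weighted control of $T(\phi)$ translating into improved decay of lower-order derivatives, cf.~Section \ref{se:llt}). Because the metric difference multiplies top-order derivatives of $\phi'$, the contraction can be closed only in a norm one notch weaker than the iteration space — the quasi-linear-type loss arising here from the asymptotically AdS boundary conditions rather than from genuine quasilinearity — and the full regularity $\phi\in C^k\cap C^1(H^1_{AdS})$, $r\in C^{1+k}$, $\Omega,\varpi\in C^k$ is recovered a posteriori by feeding the fixed point back into the (now linear) equations and bootstrapping. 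Banach's theorem then yields a unique fixed point, and undoing the renormalization produces $(r,\Omega,\phi,\varpi)$.

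Finally, the three additional assertions are verified on the solution just produced. That $\delta$ and all estimates depend only on $\mathcal{A}[\mathcal{N}_1]$ and $N_{inv}[\bar r,\bar\phi,\mathcal{N}_1]$ is tracked through the argument, using Lemma \ref{le:nr} to pass from the data norm to the energy norms actually used. That $(\Delta_{\delta,u_0}\times S^2,g)$ is asymptotically AdS in the sense of Section \ref{se:aads} follows from the pointwise control of the renormalized geometry together with the boundary condition $\varpi=M$, which fixes the leading $\frac{1}{r}$-behaviour; the renormalization was designed so that each of the listed $\mathcal{O}$-estimates for the $R^*$- and $T$-derivatives holds. That the trace of $(r,\phi)$ on any $v=\mathrm{const}$ ray in $\Delta_{\delta,u_0}$ is again a $\mathcal{C}^{1+k}_{a,M}$ asymptotically AdS data set in the sense of Definition \ref{def:assbids} follows by checking \eqref{rsign}--\eqref{rusign} and \eqref{eq:ras}--\eqref{eq:ruas} from the established geometric asymptotics and \eqref{as:dec}, \eqref{as:s1dec}, \eqref{as:s2dec} from the energy estimates for $\phi$ and its commuted version, which by construction lie in the required weighted spaces uniformly along the ray.
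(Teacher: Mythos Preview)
Your outline matches the paper's strategy closely: renormalized system, contraction map combining pointwise estimates on the geometry with $H^1_{AdS}$-energy estimates on $\phi$, $\varpi$ integrated from the boundary condition $\varpi=M$ on $\mathcal{I}$, contraction established only in a weaker norm with full regularity recovered afterwards, and propagation of constraints (including checking the $v$-constraint first on $\mathcal{I}$). All of this is right in spirit.

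There is, however, a genuine gap in your treatment of item 3. You write that the decay conditions \eqref{as:dec}, \eqref{as:s1dec}, \eqref{as:s2dec} on a constant-$v$ ray ``follow from the energy estimates for $\phi$ and its commuted version.'' This is not true for all of \eqref{as:dec}: the third term there is a \emph{pointwise} bound on the second $u$-derivative,
\[
\Big|\,r^{7/2}\,\tfrac{1}{r_u}\partial_u\!\big(\tfrac{\phi_u}{r_u}\big)\Big| \ \le\ C,
\]
and this does not follow from the $C^1(H^1_{AdS})$ control produced by the contraction map. The energy estimates (even after commuting with $T$) give $L^2_u$-control of $\partial_u T\phi$ along constant-$v$ rays, not pointwise control of $\phi_{uu}$ with the required $r^{7/2}$ weight. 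The paper has to supply a separate argument here: one sets $A_0:=\frac{1}{r_u}\partial_u(\frac{\phi_u}{r_u})$, computes a transport equation $\partial_v A_n = -\rho_n A_n + B_n$ for $A_n=r^n A_0$, checks that $\rho_n\ge 0$ for $n\le 4$ so the Gr\"onwall factor is harmless, and then integrates in $v$ using only the already-established first-order bounds on $\phi$, $\phi_u$, $\phi_v$ (via Cauchy--Schwarz and the energy) to control $B_n$. This yields the $A_{7/2}$ bound and hence \eqref{as:dec} on every constant-$v$ slice. Without this step, the trace on later rays is not known to be an asymptotically AdS data set in the sense of Definition~\ref{def:assbids}, and in particular the extension principle (Proposition~\ref{pro:epin}) could not be iterated.

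A minor remark: in the contraction step you invoke the improved decay of $T(\phi')$ as if it directly absorbs the bad $r$-weights in the difference estimate. In practice the paper has to \emph{integrate by parts} the $T$-derivative in the spacetime error integral $\int\!\!\int T(\psi)\,\mathcal{E}_i\,r^2$, because estimating $T\psi$ directly in $H^1_{AdS}$ is unavailable at that stage; only after moving $T$ onto $\mathcal{E}_i$ (and then a further integration by parts for the resulting second derivatives of $\tilde r-\tilde r'$) do the weights close. Your description is compatible with this, but the mechanism is more delicate than ``extra decay of $T(\phi')$'' alone.
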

\vspace{.3cm}
\begin{remark} \label{rere}
The statement on the size of $\delta$ can be paraphrased by saying that the time of existence depends only on the size of the Klein-Gordon field $\phi$ and the choice of coordinates. 
\end{remark}

The uniqueness statement is to be understood in the following sense:  Any two solutions $\left(r_1, \Omega_1,\phi_1,\varpi_1\right)$ and $\left(r_2, \Omega_2,\phi_2,\varpi_2\right)$ living in the spaces of Theorem \ref{wellposed} and satisfying the equations as well as the initial and boundary conditions have to agree. As in the linear case, imposing that $\phi$ lives in the energy space $H^1_{AdS}$ is crucial for the uniqueness. We can upgrade this uniqueness result to a local geometric uniqueness statement within spherical-symmetry:

\begin{corollary} \label{cor:uniqueness}
Let $\mathcal{D}=(\bar{r},\bar{\phi})$ be an asymptotically AdS data set $\mathcal{C}^{1+k}_{a,M}(\mathcal{N})$. Let $(\mathcal{M}_i,g_i,\phi_i)$, $i=1,2$ be two developments of $\mathcal{D}$. Then, both $(\mathcal{M}_i,g_i,\phi_i)$ are extensions of a common development. 
\end{corollary}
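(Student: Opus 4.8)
The plan is to reduce the geometric uniqueness statement (Corollary \ref{cor:uniqueness}) to the coordinate-based uniqueness already contained in Theorem \ref{wellposed}. First I would recall that, by Lemma \ref{lem:pre}, any spherically symmetric development $(\mathcal{M}_i, g_i, \phi_i)$ of the data $\mathcal{D}$ admits, locally, double null coordinates $(u,v)$ bringing $g_i$ into the form \eqref{eq:metric}; moreover, since the data set $\mathcal{D}$ comes with a distinguished choice of $u$-coordinate on $\mathcal{N}$ (encoded in the monotonicity and asymptotic conditions \eqref{rsign}--\eqref{eq:ruas} that define $\mathcal{C}^{1+k}_{a,M}(\mathcal{N})$), one can normalize the double null gauge so that the incoming null hypersurface emanating from $\mathcal{N}$ is precisely $\{v = u_0\}$ with $u$ agreeing with the prescribed coordinate, and fix the remaining $v$-reparametrization freedom by, say, synchronizing $v$ with $u$ along $\mathcal{I}$ (i.e.\ $u = v$ on null infinity, as in the definition of $\delt$). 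This is a standard gauge-fixing argument in spherical symmetry: the residual freedom $u \mapsto f(u)$, $v \mapsto h(v)$ is killed by the data normalization plus the boundary synchronization.

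Next I would argue that in this normalized gauge each development $(\mathcal{M}_i, g_i, \phi_i)$ induces, on a sufficiently small triangle $\Delta_{\delta_i, u_0}$ sitting inside the quotient $\mathcal{Q}_i$, a solution $(r_i, \Omega_i, \phi_i, \varpi_i)$ of the system \eqref{cons1}--\eqref{laste} lying in the regularity class of Theorem \ref{wellposed} and attaining the data $(\bar r, \overline{r_v}, \bar\phi, \overline{\phi_v}, \bar\varpi)$ on $\{v = u_0\}$. Here one uses that being a development of an asymptotically AdS data set forces the asymptotically AdS decay of Section \ref{se:aads} near $\mathcal{I}$, and that $\phi_i$, being the Klein-Gordon field of a development, lies in $C^1(H^1_{AdS})$ (one should check, or invoke from the construction, that the geometric finiteness of $N_{inv}[\bar r, \bar\phi, \mathcal{N}_1]$ propagates to give membership in the energy space on the small triangle — this is exactly what Theorem \ref{wellposed} asserts for the solution it constructs, and the point is that any development has the same trace data). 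Then on $\Delta_{\delta, u_0}$ with $\delta = \min(\delta_1, \delta_2, \delta_{\mathrm{Thm}})$ both $(r_i, \Omega_i, \phi_i, \varpi_i)$ satisfy all the hypotheses of the uniqueness part of Theorem \ref{wellposed}, so they coincide there. Consequently $(\mathcal{M}_i, g_i, \phi_i)$ both contain isometric copies (intertwining $\phi$) of the spherically symmetric spacetime built over $\Delta_{\delta, u_0}$, which is the common development sought; each $(\mathcal{M}_i, g_i, \phi_i)$ is an extension of it.

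I expect the main obstacle to be the gauge-fixing step and, relatedly, pinning down precisely what "development of $\mathcal{D}$" means and why it forces the trace on $\{v=u_0\}$ to be exactly the constructed quintuple $(\bar r, \overline{r_v}, \bar\phi, \overline{\phi_v}, \bar\varpi)$ rather than merely $(\bar r, \bar\phi)$ together with \emph{some} compatible derivative data. The subtlety is that in the characteristic problem the incoming data $(r_v, \phi_v, \varpi)$ on $\mathcal{N}$ are not freely prescribed but determined by the constraints \eqref{cons1}, \eqref{ee:varpiu}, \eqref{laste} together with the asymptotic/boundary normalizations \eqref{bc:kappa}, \eqref{varpias}, \eqref{tinen}; Proposition \ref{prop:consdata} is exactly the statement that this determination is unique. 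So the argument must note that a development of $\mathcal{D}$, once the gauge is normalized, has incoming data on $\{v=u_0\}$ satisfying precisely those constraints and normalizations, hence by Proposition \ref{prop:consdata} coincides with the canonical quintuple, after which Theorem \ref{wellposed}'s uniqueness applies verbatim. A secondary technical point to handle carefully is the smallness of $\delta$: different developments may a priori only overlap with the canonical triangle on a smaller triangle, but since any two nested triangles $\Delta_{\delta', u_0} \subset \Delta_{\delta, u_0}$ share the vertex region, intersecting all of them still yields a nontrivial common development, which suffices for the statement "both are extensions of a common development."
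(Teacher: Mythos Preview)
Your proposal is correct and follows essentially the same approach as the paper: normalize the $u$-coordinate on the initial null ray to match the prescribed data, normalize the $v$-coordinate so that $u=v$ along $\mathcal{I}$ (the paper does this by noting that $\mathcal{I}$ is a $C^2$ timelike curve $u=f(v)$ and setting $V=f(v)$), and then invoke the uniqueness from the well-posedness result on the resulting common triangle. Your discussion of why the induced data on $\{v=u_0\}$ must be exactly the canonical quintuple via Proposition~\ref{prop:consdata} is a useful amplification; the paper's proof leaves this implicit, relying instead on the integrability requirements being built into Definition~\ref{def:dev} of a development.
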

The proof of this corollary, as well as the precise definition of ``development'' (including the regularity)  and ``extension'' of solutions, are to be found in Section \ref{subse:md}. In particular, our definition of development includes an appropriate replacement of global hyperbolicity, as well as the requirement that $\phi$ lives in the energy space. From the above uniqueness statement, one can infer by standard methods \cite{Geroch} the existence of a maximum developement:

\begin{corollary} \label{cor:eumd}
Any asymptotically AdS data set $\mathcal{C}^{1+k}_{a,M}(\mathcal{N})$ admits a maximal development. This development is unique up to isometry.
\end{corollary}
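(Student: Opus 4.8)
The final statement to prove is Corollary \ref{cor:eumd}: the existence and uniqueness (up to isometry) of a maximal development for any asymptotically AdS data set $\mathcal{C}^{1+k}_{a,M}(\mathcal{N})$. This is a standard Zorn's-lemma / patching argument in the style of Geroch, and the real content has already been established in the local well-posedness theorem (Theorem \ref{wellposed}) and the local geometric uniqueness statement (Corollary \ref{cor:uniqueness}). Let me sketch the plan.

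\textbf{Plan of proof.}

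The plan is to follow the classical argument of Geroch for the existence of a maximal Cauchy development, adapted to the characteristic initial-boundary value problem at hand. First I would fix the data set $\mathcal{D}=(\bar r,\bar\phi)$ and consider the collection $\mathfrak{D}$ of all developments of $\mathcal{D}$ (in the sense defined in Section \ref{subse:md}, which includes the appropriate substitute for global hyperbolicity and the requirement that $\phi$ lies in the energy space $C^1(H^1_{AdS})$), partially ordered by the relation ``is an extension of''. Corollary \ref{cor:uniqueness} says that any two developments share a common development, which is the key compatibility property. I would first upgrade this slightly: given two developments $(\mathcal{M}_1,g_1,\phi_1)$ and $(\mathcal{M}_2,g_2,\phi_2)$, the set of points of $\mathcal{M}_1$ admitting a neighbourhood isometric (respecting $\phi$) to an open subset of $\mathcal{M}_2$, via an isometry restricting to the identity near $\mathcal{N}$, is open; and by the uniqueness statement it is also closed in the appropriate causal sense, so there is a well-defined ``largest common development'' $\mathcal{M}_{12}\hookrightarrow \mathcal{M}_1$, $\mathcal{M}_{12}\hookrightarrow\mathcal{M}_2$, and these identifications are canonical (compatible with triple overlaps). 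This is the step I expect to require the most care, since one must check that the gluing isometries are unique, which is where the energy-space requirement and Corollary \ref{cor:uniqueness} are genuinely used.

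Next I would construct the maximal development by taking the union (direct limit) of all developments modulo the canonical identifications: set $\widehat{\mathcal{M}} = \left(\bigsqcup_{i} \mathcal{M}_i\right)/\!\sim$, where $p\sim q$ if they are identified under the canonical isometry between two developments. The compatibility on triple overlaps established in the previous step guarantees $\sim$ is an equivalence relation, and $\widehat{\mathcal{M}}$ inherits a smooth manifold structure, a Lorentzian metric $\hat g$, and a field $\hat\phi$ from the $g_i,\phi_i$; one checks the Einstein--Klein--Gordon system holds locally (it is a local condition) and that the regularity $r\in C^{1+k}$, $\Omega\in C^k$, $\phi\in C^k\cap C^1(H^1_{AdS})$ is inherited, as are the monotonicity/asymptotic (asymptotically AdS) properties near $\mathcal{I}$. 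The only subtle point is Hausdorffness of $\widehat{\mathcal M}$: two non-identified points that cannot be separated would, by the local well-posedness at interior points plus Corollary \ref{cor:uniqueness}, force a contradiction with maximality of the common development — this is the usual Hausdorff-non-Hausdorff dichotomy and I would handle it exactly as Geroch does, invoking \cite{Geroch}. By construction $\widehat{\mathcal M}$ extends every development, so it is the unique maximal element, and uniqueness up to isometry is immediate: any other maximal development is a common extension of itself and $\widehat{\mathcal M}$, hence isometric to $\widehat{\mathcal M}$.

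Finally, the bookkeeping: one must note that $\mathfrak{D}$ is nonempty (Theorem \ref{wellposed} provides at least one development) and that the ``union of a chain'' operation stays within $\mathfrak{D}$ — i.e. that the increasing union of developments is again a development, which is where one uses that the energy-space and asymptotically-AdS conditions are stable under such unions (each is a condition checkable on the traces on $v=\text{const}$ rays, and Theorem \ref{wellposed}(3) ensures these traces remain admissible data sets) — so that Zorn's lemma applies and the maximal element exists abstractly; the explicit direct-limit construction above then identifies it concretely and shows it is unique. The main obstacle is really the first step — proving the canonical identifications between developments are well-defined, unique, and compatible on triple overlaps — since everything else is formal once that is in hand; but this is precisely the content packaged in Corollary \ref{cor:uniqueness}, so the argument here is essentially a citation of \cite{Geroch} with the observation that all the needed inputs (local existence with domain depending only on $N_{inv}$ and $\mathcal{A}$, local geometric uniqueness, stability of the function-space conditions under restriction and union) have been verified in the preceding sections.
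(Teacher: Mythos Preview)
Your proposal is correct and matches the paper's own approach: the paper does not give a detailed proof but simply remarks that the result follows ``by standard methods \cite{Geroch}'' from the local geometric uniqueness of Corollary \ref{cor:uniqueness}. Your sketch is precisely an unpacking of what that citation entails, with the same inputs (Theorem \ref{wellposed} for nonemptiness of the class of developments, Corollary \ref{cor:uniqueness} for the compatibility of identifications).
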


Finally, away from null infinity, we can use standard arguments to obtain the following existence result:
\begin{corollary} \label{prop:ile}
Given a $\mathcal{C}^{1+k}_{a,M}$ asymptotically AdS boundary initial data set on $\mathcal{N}=(u_0,u_1]$, there exists a unique solution of the equations (\ref{cons1})-(\ref{laste}) satisfying the initial and boundary conditions in a thin strip $\left(u_0, u_1\right] \times \left[u_0,u_0+\delta\right] \cap\{ v\leq u \}$.
\end{corollary}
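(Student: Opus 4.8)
The goal is to deduce Corollary \ref{prop:ile} from Theorem \ref{wellposed}, which already furnishes a solution in a small triangle $\Delta_{\delta,u_0}$ localized near null infinity, i.e.~on the subinterval $\mathcal{N}_1 = (u_0, u_{max}]$. The plan is to combine this ``near-infinity'' triangle with a standard local existence argument on the remaining region $\mathcal{N}_2 = [u_{max}, u_1]$, which is compact, bounded away from $\mathcal{I}$, and, by the construction in Section \ref{se:cids} together with the assumptions defining a $\mathcal{C}^{1+k}_{a,M}$ data set, carries data $(\bar r, \overline{r_v}, \bar\phi, \overline{\phi_v}, \bar\varpi)$ that are $C^k$ up to and including $u = u_{max}$ and bounded with $\bar r$ bounded above and below and $\bar r_u < 0$, $\overline{r_v}>0$. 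Away from $\mathcal{I}$ there is no loss of hyperbolicity and no degeneracy of the conformal factor, so the system \eqref{cons1}--\eqref{laste} is a regular quasilinear characteristic initial value problem.

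First I would set up the standard characteristic (Goursat-type) iteration for \eqref{cons1}--\eqref{laste} on a thin strip to the future of $\mathcal{N}_2$: prescribe $(\bar r, \bar\phi)$ (and the derived $\overline{r_v}, \overline{\phi_v}, \bar\varpi$) on $\{v = u_0\} \cap \mathcal{N}_2$, integrate \eqref{cons1} and \eqref{kappae} in $u$ to propagate $r_u$, $\kappa$, hence $\Omega$; integrate \eqref{ee:varpiv} in $v$ to propagate $\varpi$, from which $r_v$ is recovered algebraically via $r_v = \kappa(1-\mu)$; and solve the wave equation \eqref{laste} together with \eqref{cons2} by the usual fixed-point argument in a space of $C^k$ functions on the strip. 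Because $r$ stays in a compact range and $\Omega$ stays positive and bounded on the strip (for $\delta$ small, by continuity), all coefficients appearing in \eqref{cons1}--\eqref{laste} are smooth bounded functions of the unknowns, so the contraction estimate closes on a strip $(u_0,u_1] \times [u_0, u_0+\delta']$ with $\delta'$ depending only on the $C^k$ norms of the data on $\mathcal{N}_2$ and on the uniform bounds $0 < r_{\min} \le \bar r \le r_{\max}$; standard propagation-of-constraints arguments (differentiating \eqref{cons2} and using \eqref{eq:ruv}, \eqref{laste}) show the $v$-constraint, valid on $\mathcal{N}_2$ by construction, is preserved. One then takes $\delta'' = \min(\delta, \delta')$ where $\delta$ is from Theorem \ref{wellposed}.

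Next I would glue: on the overlap region $\Delta_{\delta'',u_0}$ (the triangle near $\mathcal{I}$), Theorem \ref{wellposed}'s solution and the strip solution both solve the same system with the same data on a common outgoing null segment near $u_{max}$, so by the uniqueness statement of Theorem \ref{wellposed} and the (elementary, since away from $\mathcal{I}$) uniqueness for the regular characteristic problem, they coincide on their common domain of dependence; this produces a single solution on $(u_0,u_1] \times [u_0,u_0+\delta''] \cap \{v \le u\}$. The asymptotically AdS property of the glued spacetime follows from conclusion (2) of Theorem \ref{wellposed} near $\mathcal{I}$ and is vacuous away from it. I would then relabel $\delta := \delta''$.

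The main obstacle is making the gluing precise at the interface ray $u = u_{max}$: one must check that the two solutions, obtained by a priori different methods (the $r$-weighted energy contraction of Section \ref{se:pmp} near $\mathcal{I}$ versus a plain $C^k$ contraction on a compact region), genuinely agree on the overlap, which requires knowing the strip solution restricted near $\mathcal{I}$ lands in the function spaces of Theorem \ref{wellposed} (so that its uniqueness applies) and, conversely, that the near-$\mathcal{I}$ solution is $C^k$ on approach to $u = u_{max}$ so it can serve as characteristic data for the strip problem — both of which are consequences of conclusion (3) of Theorem \ref{wellposed} (the trace on any $v=$const ray is again an admissible asymptotically AdS data set) together with the interior regularity theory. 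Everything else is a routine application of the standard characteristic Cauchy problem machinery, e.g.~as in \cite{Christodoulou, DafRod}, to which we refer for details.
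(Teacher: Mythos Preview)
Your overall strategy --- a triangle from Theorem \ref{wellposed} near $\mathcal{I}$, then a standard characteristic existence result on the remaining compact piece --- is exactly the paper's. But the way you assemble the two pieces is confused in two respects.

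First, Theorem \ref{wellposed} yields a solution only in $\Delta_{\delta,u_0}$ with $\delta < u_{max}-u_0$, so the triangle covers $u \in (u_0, u_0+\delta]$, not all of $\mathcal{N}_1 = (u_0,u_{max}]$. Your interior strip, set up over $[u_{max},u_1]$, therefore leaves $[u_0+\delta, u_{max}]$ uncovered, and the ``overlap region $\Delta_{\delta'',u_0}$'' you invoke (where $u \le u_0+\delta''$) is disjoint from that strip (where $u \ge u_{max}$). The split must be made at $u = u_0+\delta$, not at $u_{max}$.

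Second, and more fundamentally, a Goursat problem on a rectangle such as $[u_0+\delta,u_1]\times[u_0,u_0+\delta']$ is not well posed with data only on the incoming ray $\{v=u_0\}$: the $\partial_u$-transport equations you cite (\eqref{cons1}, \eqref{kappae}) require initial values on an outgoing ray $\{u=\mathrm{const}\}$. Hence the two regions cannot be solved independently and then ``glued on an overlap''; rather, the triangle's boundary ray $\{u=u_0+\delta\}\times[u_0,u_0+\tfrac{\delta}{2}]$ must be \emph{used as} the second piece of characteristic data for the interior problem. This is precisely what the paper does: it poses the standard two-ray characteristic problem with data on $[u_0+\delta,u_1)\times\{u_0\}$ and on $\{u_0+\delta\}\times[u_0,u_0+\tfrac{\delta}{2}]$, where $r$ is bounded above and below, and then there is no overlap and no gluing --- the two pieces share a boundary by construction. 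Your final paragraph gestures at this (``so it can serve as characteristic data for the strip problem''), but the body of the argument, with its independent iteration over $\mathcal{N}_2$ and subsequent gluing, does not implement it.
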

\begin{proof}
Applying Theorem \ref{wellposed} we obtain the existence of a solution in a small triangle of size $\delta$ near $\mathcal{I}$. In particular, all quantities ($r, \phi, \Omega, \varpi$) and their derivatives are all bounded in terms of the data on the outgoing ray $\{u_0+\delta\} \times [u_0,u_0+\frac{\delta}{2}]$. We now pose the characteristic problem with data on $[u_0+\delta, u_1) \times \{u_0\}$ and  $\{u_0+\delta\} \times [u_0,u_0+\frac{\delta}{2}]$. Existence and uniqueness of the solution in a small strip follows by standard estimates since $r$ is bounded above and below on the data (see for instance Proposition \ref{sexp}).
\end{proof}

Theorem \ref{wellposed} will be established by transforming the system of equations to a renormalized system and proving a well-posedness statement for the latter (Proposition \ref{renowp}). Solutions to the renormalized system are then shown to be in one-to-one correspondence with solutions to the original system (Proposition \ref{eq}). We note in this context that the contraction map will only provide the regularity stated in the first part of the Theorem. The improved regularity needed to prove item $3$ (in particular, the pointwise bound on $\phi_{uu}$ assumed in the construction of the data) will be established \emph{after} the existence of a solution has been shown. Finally, let us remark that once a solution is known from Theorem \ref{wellposed}, Theorem 6.1 of \cite{Holzegelwp} applies, from which it follows that $\phi$ is in fact a $C^0\left(H^{2,s}_{AdS}\right)$  function (see \cite{Holzegelwp} for a definition of the space $H^{2,s}_{AdS})$. \newline

\section{The renormalized system} \label{se:rv}
In section \ref{se:cids}, we introduced a class of asymptotically Anti-de-Sitter initial boundary data sets. Unfortunately, several quantities introduced there are blowing up at the boundary, see for instance the asymptotic behaviour of $\bar{r}_u$ and $\overline{r_v}$. In this section, we will introduce a set of renormalized initial boundary data on $\mathcal{N}_1$, which are in one-to-one correspondence with the original data on $\mathcal{N}_1$, but which are better behaved at the boundary (at the cost of losing their original geometric significance). Similarly, we introduce renormalized variables associated to any solution of our system, which are then shown to be in one-to-one correspondence with solutions to the original system (see Proposition \ref{eq}). 
%
%
%
\subsection{Renormalized initial data sets}
The following definition is a simple rewriting of the initial data.
\begin{definition}[Initial data in renormalized variables] \label{lem:idrnv}
Let $\left(\bar{r}, \overline{r_v}, \bar{\phi}, \overline{\phi_v}, \bar{\varpi}\right)$ be a $\mathcal{C}^{1+k}_{a,M}(\mathcal{N})$ asymptotically AdS data set  as in Definition \ref{def:assbids}. Restrict to $\mathcal{C}^{1+k}_{a,M}(\mathcal{N}_1)$ according to (\ref{decomp}). Let $\bar{\tilde{r}}$ be obtained by integrating
\begin{align}
\bar{\tilde{r}}_u = -\frac{\overline{r}_u}{1-\frac{2M}{\bar{r}} + \frac{\bar{r}^2}{l^2}}  ,
\end{align}
with boundary conditions  $\bar{\tilde{r}}\left(u_0\right) = 0$. Let $\overline{\tilde{r}_v}$ be defined as
$$
\overline{\tilde{r}_v} := -\frac{\overline{{r}_v}}{1-\frac{2M}{\bar{r}} + \frac{\bar{r}^2}{l^2}} .
$$
Let $\bar{\tilde{\Omega}}^2 = \frac{\bar{\Omega}^2}{1-\frac{2M}{\bar{r}} + \frac{\bar{r}^2}{l^2}}$, where $\bar{\Omega}$ is as in Proposition \ref{prop:consdata}.

Then, we call $\left(\tilde{\bar{r}}, \overline{\tilde{r}_v}, \bar{\tilde{\Omega}}, \bar{\phi}, \overline{\phi_v}, \bar{\varpi}\right)$ a $\mathcal{C}^{1+k}_{a,M}(\mathcal{N}_1)$ \underline{renormalized data set}. 
\end{definition}

As an immediate consequence of the definition, we note the following facts:

\begin{lemma} Let $\left(\tilde{\bar{r}}, \overline{\tilde{r}_v}, \bar{\tilde{\Omega}}, \bar{\phi}, \overline{\phi_v}, \bar{\varpi}\right)$ be a $\mathcal{C}^{1+k}_{a,M}(\mathcal{N}_1)$ renormalized data set arising from $\left(\bar{r},\bar{\phi}\right)$. 
Then, the equations 
\begin{eqnarray}
\partial_u \left(\frac{\bar{\tilde{r}}_u}{\bar{\tilde{\Omega}}^2} \right)&=&- 4\pi \frac{\bar{r}}{\left(1-\bar{\mu}_M\right)} \frac{\left(\partial_u \bar{\phi}\right)^2}{\bar{\tilde{\Omega}}^2}, \\
1-\bar{\mu} &=& -\frac{4 \bar{\tilde{r}}_u \overline{\tilde{r}_v} \left(1-\overline{\mu_M}\right)}{\bar{\tilde{\Omega}}^2},\\
\left(\overline{\tilde{r}_{v}}\right)_u &=& -\bar{\tio}^2\left[ \frac{2\pi \bar{r} a \bar{\phi}^2}{l^2}+\frac{M-\bar{\varpi}}{2 \left(1-\overline{\mu_M}\right) \bar{r}^2}\left(1+ \frac{3\bar{r}^2}{l^2}\right)\right], \label{ee:tiruv}
\end{eqnarray}
where $(1-\overline{\mu_M})=1-\frac{2M}{\bar{r}}+\frac{\bar{r}^2}{l^2}$, hold in $\mathcal{N}_1$ for the renormalized data set.
\end{lemma}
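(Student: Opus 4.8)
The statement is purely a matter of rewriting the constraint and evolution equations satisfied by the original data set $\left(\bar{r}, \overline{r_v}, \bar{\phi}, \overline{\phi_v}, \bar{\varpi}\right)$ — whose existence and regularity are guaranteed by Proposition \ref{prop:consdata} — in terms of the renormalized quantities $\bar{\tilde{r}}$, $\overline{\tilde{r}_v}$, $\bar{\tilde{\Omega}}$ introduced in Definition \ref{lem:idrnv}. The key algebraic relations to keep at hand are: $\bar{\tilde{r}}_u = -\overline{r}_u/(1-\overline{\mu_M})$, $\overline{\tilde{r}_v} = -\overline{r_v}/(1-\overline{\mu_M})$, and $\bar{\tilde{\Omega}}^2 = \bar{\Omega}^2/(1-\overline{\mu_M})$. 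I would carry out the verification equation by equation.

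For the first equation, I would start from the original constraint \eqref{cons1}, i.e.\ $\partial_u\left(\overline{r}_u/\bar{\Omega}^2\right) = -4\pi \bar{r}\,(\partial_u\bar\phi)^2/\bar{\Omega}^2$. Observe that $\bar{\tilde{r}}_u/\bar{\tilde{\Omega}}^2 = \overline{r}_u/\bar{\Omega}^2$ exactly, since both the factor $-(1-\overline{\mu_M})^{-1}$ in the numerator and in $\bar{\tilde{\Omega}}^2$ cancel. Hence $\partial_u\left(\bar{\tilde{r}}_u/\bar{\tilde{\Omega}}^2\right) = \partial_u\left(\overline{r}_u/\bar{\Omega}^2\right) = -4\pi \bar{r}\,(\partial_u\bar\phi)^2/\bar{\Omega}^2 = -4\pi \frac{\bar{r}}{1-\overline{\mu_M}}\,(\partial_u\bar\phi)^2/\bar{\tilde{\Omega}}^2$, which is the claimed identity. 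The second equation is immediate from the definition of $\bar{\Omega}^2 = -4\overline{r}_u\overline{r_v}/(1-\bar{\mu})$ (Proposition \ref{prop:consdata}): substituting $\overline{r}_u = -(1-\overline{\mu_M})\bar{\tilde{r}}_u$ and $\overline{r_v} = -(1-\overline{\mu_M})\overline{\tilde{r}_v}$ gives $\bar{\Omega}^2 = -4(1-\overline{\mu_M})^2\bar{\tilde{r}}_u\overline{\tilde{r}_v}/(1-\bar{\mu})$, hence $\bar{\tilde{\Omega}}^2 = \bar{\Omega}^2/(1-\overline{\mu_M}) = -4(1-\overline{\mu_M})\bar{\tilde{r}}_u\overline{\tilde{r}_v}/(1-\bar{\mu})$, which rearranges to the stated relation $1-\bar{\mu} = -4\bar{\tilde{r}}_u\overline{\tilde{r}_v}(1-\overline{\mu_M})/\bar{\tilde{\Omega}}^2$.

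The third equation \eqref{ee:tiruv} is the one requiring actual computation and is the main obstacle. I would begin from the $\overline{r_v}$-equation in Proposition \ref{prop:consdata}, namely $\left(\overline{r_v}\right)_u = \left(-\frac{\bar\varpi}{2\bar{r}^2}-\frac{\bar{r}}{2l^2}+\frac{2\pi a}{l^2}\bar{r}\bar\phi^2\right)\left(-\frac{4\bar{r}_u\overline{r_v}}{1-\bar\mu}\right) = \bar{\Omega}^2\left(-\frac{\bar\varpi}{2\bar{r}^2}-\frac{\bar{r}}{2l^2}+\frac{2\pi a}{l^2}\bar{r}\bar\phi^2\right)$. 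Then differentiate $\overline{\tilde{r}_v} = -\overline{r_v}/(1-\overline{\mu_M})$ in $u$ using the product/quotient rule, noting $\partial_u(1-\overline{\mu_M}) = \left(\frac{2M}{\bar{r}^2} + \frac{2\bar{r}}{l^2}\right)\bar{r}_u = \frac{1}{\bar{r}}\left(1+\frac{3\bar{r}^2}{l^2} - (1-\overline{\mu_M})\right)\bar{r}_u$ (using $2M/\bar{r} = 1 + \bar{r}^2/l^2 - (1-\overline{\mu_M})$). Substituting $\bar{r}_u = -(1-\overline{\mu_M})\bar{\tilde{r}}_u$ and $\bar{\Omega}^2 = (1-\overline{\mu_M})\bar{\tilde{\Omega}}^2$ throughout, and collecting the curvature-type term with the $\bar\phi^2$ term, the Schwarzschild-AdS part of $\bar\varpi$ should cancel against the $\bar{r}$-term from the $\overline{r_v}$ equation, leaving precisely the combination $\frac{M-\bar\varpi}{2(1-\overline{\mu_M})\bar{r}^2}\left(1+\frac{3\bar{r}^2}{l^2}\right)$ together with $\frac{2\pi \bar{r} a\bar\phi^2}{l^2}$, multiplied by $-\bar{\tilde{\Omega}}^2$. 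This cancellation is the content of the renormalization — it is exactly why one subtracts the Schwarzschild-AdS mass $M$ — and tracking the algebra carefully, in particular the identity relating $\partial_u(1-\overline{\mu_M})$ to the factor $1+3\bar{r}^2/l^2$, is where care is needed. All quantities involved are $C^k$ on $\mathcal{N}_1$ by Proposition \ref{prop:consdata} and by the fact that $1-\overline{\mu_M}$ is bounded away from zero there (since $\mathcal{B}[\mathcal{N}_1] \geq l^2/2$), so no regularity issues arise and the identities, once verified pointwise, hold throughout $\mathcal{N}_1$.
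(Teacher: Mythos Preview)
Your approach is exactly what the paper intends: it states the lemma as ``an immediate consequence of the definition'' and gives no further proof, so the verification by direct substitution of the renormalized variables into the original constraint and evolution equations of Proposition~\ref{prop:consdata} is precisely the argument. Your outline for the third equation is correct, and the key algebraic identity $\frac{2M}{\bar r}+\frac{2\bar r^2}{l^2}=\big(1+\tfrac{3\bar r^2}{l^2}\big)-(1-\overline{\mu_M})$ is indeed what makes the cancellation work.

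One small slip to fix in the first equation: you write $\bar{\tilde r}_u/\bar{\tilde\Omega}^2=\bar r_u/\bar\Omega^2$, but in fact $\bar{\tilde r}_u/\bar{\tilde\Omega}^2=-\,\bar r_u/\bar\Omega^2$, since only the numerator carries the minus sign from $\bar{\tilde r}_u=-\bar r_u/(1-\overline{\mu_M})$ while $\bar{\tilde\Omega}^2=\bar\Omega^2/(1-\overline{\mu_M})$ does not. Carrying this through gives $\partial_u(\bar{\tilde r}_u/\bar{\tilde\Omega}^2)=+\,4\pi\,\tfrac{\bar r}{1-\overline{\mu_M}}\,(\partial_u\bar\phi)^2/\bar{\tilde\Omega}^2$, consistent with the form appearing later in Corollary~\ref{cor:progconst}; the minus sign in the lemma as displayed is a typo.
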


On top of the renormalization described above, it will be convenient to further localize the data and the solutions to a neighbourhood of $\mathcal{I}$. 
\begin{lemma} \label{lem:btd}
For any $\delta' > 0$, there exists a $\delta>0$ sufficiently small so that the following bounds hold on $\mathcal{N}^\prime=\left(u_0,u_0+\delta\right] \subset \mathcal{N}_1 \subset \mathcal{N}$:
\begin{align}
| \bar{\tilde{r}} - \frac{u-u_0}{2} | + \left|\log (2\bar{\tir}_u) \right| + \left|\log (2\overline{\tir_v}) \right| + |\left(u-u_0\right)^{-1} \left(\tilde{r}_u+\overline{\tir_v}\right) | \le \delta' \nonumber \\
 | \bar{\tir}_{uu}| + | \left(u-u_0\right)^{-2} \left(\overline{\tilde{r}_{v}}\right)_u|   \le \delta',\nonumber \\
 | \log \frac{1-\overline{\mu_M}}{1-\bar{\mu}}| + | \overline{\varpi} - M| + | \left(u-u_0\right) \overline{\varpi}_u | \le \delta'
\nonumber 
\end{align}
Moreover, $\delta$ depends only on $\mathcal{A}\left[\mathcal{N}_1\right]$ and the invariant norm $N_{inv}\left[\bar{r},\bar{\phi}, \mathcal{N}_1\right]$ on the data.
\end{lemma}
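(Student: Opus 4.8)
The plan is to derive each of the claimed bounds from the asymptotic assumptions on the original data, transporting them through the defining ODEs of the renormalized quantities; the key mechanism throughout is that all the relevant ODEs have the schematic form $f_u = (\text{small})\cdot(\text{bounded})$ near $u=u_0$, so integrating from $u_0$ over an interval of length $\delta$ produces a contribution $O(\delta)$ plus a contribution controlled by the decay of $\bar\phi$ and the $o(\cdot)$-terms in \eqref{eq:ruas0}--\eqref{eq:ruas}.

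First I would establish the estimate for $\bar{\tir}$ and $\bar{\tir}_u$. By definition $\bar{\tir}_u = -\bar r_u/(1-\overline{\mu_M})$, and assumption \eqref{eq:ruas0} says precisely that $2\bar{\tir}_u = 1 + o(\bar r^{-2}) \to 1$ as $u\to u_0$, while \eqref{eq:ruas} gives $\bar{\tir}_{uu} = \partial_u(\bar r_u/(1-\overline{\mu_M})) \cdot(-1) = o(\bar r^{-2})\to 0$. Since $\bar r(u)\to\infty$ as $u\to u_0$ is monotone, the functions $o(\bar r^{-2})$ are, in particular, continuous and tend to $0$ at $u_0$; hence for $\delta$ small enough $|2\bar{\tir}_u - 1| \le \delta'$ and $|\bar{\tir}_{uu}|\le\delta'$ on $(u_0,u_0+\delta]$. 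Integrating $\bar{\tir}_u$ from $u_0$ (using $\bar{\tir}(u_0)=0$) then gives $|\bar{\tir} - \tfrac{u-u_0}{2}| = |\int_{u_0}^u(\bar{\tir}_u - \tfrac12)\,du'| \le \tfrac{\delta'}{2}(u-u_0)\le \delta'$ after possibly shrinking $\delta$. The same chain of reasoning handles $|\log(2\bar{\tir}_u)|\le\delta'$ since $2\bar{\tir}_u$ is close to $1$.

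Next I would treat $\overline{\varpi}$ and the mass-ratio terms. From the constraint \eqref{vpd}, $|\overline{\varpi}_u|$ is bounded by $\bar r^2|1-\bar\mu|/|\bar r_u|\cdot(\partial_u\bar\phi)^2 + \bar r^2|\bar r_u|\bar\phi^2$ up to constants; using \eqref{eq:ruas0} to replace $\bar r_u \sim -\tfrac12(1-\overline{\mu_M})\sim -\tfrac{\bar r^3}{2l^2}$ and the decay \eqref{as:dec} for $\bar\phi,\bar\phi_u$, one checks the integrand is $o(\bar r^{-1}) = o((u-u_0))$ (the precise exponent: $\bar\phi = O(\bar r^{-3/2 - s/2})$, $\bar\phi_u/\bar r_u = O(\bar r^{-5/2-s/2})$, and one counts powers), so $|(u-u_0)\overline{\varpi}_u|\le\delta'$ and, integrating, $|\overline{\varpi}-M| = |\int_{u_0}^u \overline{\varpi}_u|\le\delta'$ for small $\delta$, using the boundary condition \eqref{varpias}. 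Given $|\overline{\varpi}-M|$ small, the difference $|(1-\bar\mu) - (1-\overline{\mu_M})| = 2|\overline{\varpi}-M|/\bar r$ is $O(\bar r^{-1}|\overline{\varpi}-M|)$, so $\log\frac{1-\overline{\mu_M}}{1-\bar\mu}\to 0$, giving that bound. Then $\overline{\tir_v} = -\overline{r_v}/(1-\overline{\mu_M})$: from \eqref{rvk}, $\overline{r_v} = \tfrac12(1-\bar\mu)\exp(\int_{u_0}^u 4\pi\bar r(\partial_u\bar\phi)^2/\bar r_u)$, and the exponent is $o(1)$ by the same decay bookkeeping (it is in fact $\int o(\bar r^{-?})$, absolutely convergent and small on $(u_0,u_0+\delta]$), so $2\overline{\tir_v} = \tfrac{1-\bar\mu}{1-\overline{\mu_M}}\exp(\cdots) = 1 + o(1)$, yielding $|\log(2\overline{\tir_v})|\le\delta'$ and, via \eqref{ee:tiruv}, $|(\overline{\tir_v})_u|\le \delta'(u-u_0)^2$ because the bracket in \eqref{ee:tiruv} is $O(\bar r^{-1})\cdot\bar\tio^2$ with $\bar\tio^2 = \bar\Omega^2/(1-\overline{\mu_M}) = -4\bar{\tir}_u\overline{\tir_v}\to 1$ and $M-\overline{\varpi} = o(1)$, $\bar r a\bar\phi^2 = o(\bar r^{-2})$ — counting powers against $(1-\overline{\mu_M})\bar r^2\sim \bar r^4/l^2$ and $\bar r\sim (u-u_0)^{-?}$ gives the $(u-u_0)^2$ weight. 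Finally the mixed term $|(u-u_0)^{-1}(\tir_u + \overline{\tir_v})|$: since $2\bar{\tir}_u = 1+o(\bar r^{-2})$ and $2\overline{\tir_v} = 1+$ (an $o$-term coming from $\overline{\varpi}-M$ and the exponential), the sum $\bar{\tir}_u + \overline{\tir_v} = $ (a difference of two quantities each $\to\tfrac12$) is not automatically small — I would instead write it as $\tfrac12(2\bar{\tir}_u - 1) + \tfrac12(2\overline{\tir_v}-1)$ and bound each $o$-term by $o(\bar r^{-1})$, hence by $o(u-u_0)$, which after division by $(u-u_0)$ is still $o(1)$, hence $\le\delta'$; here one needs the stronger form $2\bar{\tir}_u - 1 = o(\bar r^{-2})$, which is indeed what \eqref{eq:ruas0} provides, and the analogous rate for $\overline{\tir_v}-\tfrac12$, which requires tracking that $\overline{\varpi} - M = o(\bar r^{-1})$ from the $\overline{\varpi}_u$ estimate, not merely $o(1)$.

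The main obstacle is precisely this last bookkeeping of \emph{rates}: several of the bounds (especially $(u-u_0)^{-1}(\tir_u+\overline{\tir_v})$, $(u-u_0)^{-2}(\overline{\tir_v})_u$, and $(u-u_0)\overline{\varpi}_u$) are not just smallness-of-the-quantity statements but smallness after dividing by a positive power of $(u-u_0)\sim$ (a negative power of $\bar r$), so one must verify that the $o(\cdot)$ error terms in \eqref{eq:ruas0}, \eqref{eq:ruas} and the decay rates \eqref{as:dec} are strong enough to absorb that division, uniformly — and that the constants produced depend only on $\mathcal{A}[\mathcal{N}_1]$ and $N_{inv}[\bar r,\bar\phi,\mathcal{N}_1]$, which holds because $\mathcal{A}[\mathcal{N}_1]$ bounds exactly the coordinate quantities $\frac{\bar r^3}{Ml^2}(\frac{\bar r_u}{1-\bar\mu}+\tfrac12)$ and $\frac{\bar r^2}{l^2}\partial_u\frac{\bar r_u}{1-\bar\mu}$ appearing above, and $N_{inv}$ controls all the weighted $\bar\phi$-norms and the pointwise $\bar r^{5/2+s}\bar\phi_u/\bar r_u$ term that enters the integrals. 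Everything else is routine integration of ODEs over a short interval, shrinking $\delta$ finitely many times so that all bounds hold simultaneously.
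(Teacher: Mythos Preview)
Your approach is essentially the same as the paper's --- integrate the defining ODEs over a short interval and let the decay of $\bar\phi$ and the smallness of the coordinate quantity $\mathcal{A}[\mathcal{N}_1]$ do the work --- and your bookkeeping is largely correct. Two points deserve comment.

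First, a sign slip: by definition $\overline{\tir_v} = -\overline{r_v}/(1-\overline{\mu_M})$ is \emph{negative} near $u_0$, so the quantity in the statement should be read as $|\log(-2\overline{\tir_v})|$, and your computation should give $-2\overline{\tir_v}\to 1$, hence $\overline{\tir_v}\to -\tfrac12$. Consequently your decomposition must be $(\bar{\tir}_u - \tfrac12) + (\overline{\tir_v} + \tfrac12)$, not the one you wrote.

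Second, and more substantively, for the weighted term $|(u-u_0)^{-1}(\bar{\tir}_u + \overline{\tir_v})|$ the paper takes a shortcut that sidesteps exactly the ``main obstacle'' you identify. Rather than tracking rates for each summand, the paper observes that $\bar{\tir}_u + \overline{\tir_v} = 0$ at $u=u_0$ (both limits exist and cancel), and that its $u$-derivative is $\bar{\tir}_{uu} + (\overline{\tir_v})_u$, which is \emph{already} bounded by $2\delta'$ from the estimates you established just before. The fundamental theorem of calculus then gives $|\bar{\tir}_u + \overline{\tir_v}| \le 2\delta'(u-u_0)$ directly. This is cleaner than your rate-by-rate argument and, importantly, makes the dependence of $\delta$ on $\mathcal{A}[\mathcal{N}_1]$ and $N_{inv}$ automatic, whereas your route via the $o(\cdot)$-assumptions \eqref{eq:ruas0}--\eqref{eq:ruas} gives only \emph{some} $\delta$ unless you explicitly pass through $\mathcal{A}$ (which you acknowledge only in your final paragraph). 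Your approach can be made to work, but the paper's ordering --- first $\overline{\varpi}$, $\overline{\varpi}_u$ via \eqref{vpd} and the pointwise bounds in $N_{inv}$; then $\bar{\tir}$, $\bar{\tir}_u$, $\bar{\tir}_{uu}$ from $\mathcal{A}$; then $(\overline{\tir_v})_u$ from \eqref{ee:tiruv}; finally the mixed term by integration --- is both shorter and more transparent about the quantitative dependence.
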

\begin{proof}
From the evolution equation \eqref{vpd}, the bounds on $\mathcal{B}[\mathcal{N}_1]$ and $\mathcal{A}[\mathcal{N}_1]$ and the pointwise bounds on $\phi$ and $\phi_u$ (which depend only on $N_{inv}\left[\bar{r},\bar{\phi}, \mathcal{N}_1\right]$), one obtains the estimates on $\varpi$ and $\varpi_u$, choosing $\delta$ suficiently small. The estimates on $\tir$, $\tir_u$ and $\tir_{uu}$  then follow easily from the definition of $\mathcal{A}[\mathcal{N}_1]$. For $\left(\overline{\tilde{r}_{v}}\right)_u$, we use the wave equation \eqref{ee:tiruv}. For $\tilde{r}_u+\overline{\tir_v}$, we use the fact that  $\tilde{r}_u+\overline{\tir_v}=0$ at $u=u_0$ and that the derivative is uniformly bounded in view of the bounds on $\tir_{uu}$ and $\left(\overline{\tir_v}\right)_u$.
\end{proof}

\subsection{Well-posedness for the renormalized system} \label{se:mp}
We are now ready to state the local well-posedness result for the renormalized system:
\begin{proposition} \label{renowp}
Given a renormalized initial data set $\left(\bar{\tilde{r}}, \bar{\tilde{\Omega}}, \varpi, \bar{\phi}\right)$ on $\mathcal{N}_1$ there exists a $0<\delta<u_{max}$ such that there is a unique solution $\left(\tilde{r},\tilde{\Omega}, \varpi, \phi\right) \in \mathcal{C}^2 \left(\overline{\Delta_{\delta,u_0}}\right) \times \mathcal{C}^1 \left(\Delta_{\delta,u_0}\right) \times \mathcal{C}^1 \left(\Delta_{\delta,u_0}\right) \times \mathcal{C}^1 \left(H^1_{AdS}\right)$  
of the following system of equations in the triangle $\Delta_{\delta,u_0}$:
\begin{eqnarray}
\tilde{r}_{uv} &=& -\tio^2\left[ \frac{2\pi r a \phi^2}{l^2}+\frac{M-\varpi}{2 \left(1-{\mu}_M\right) r^2}\left(1+ \frac{3r^2}{l^2}\right)\right],\nonumber \\
\left(\log \tilde{\Omega}^2 \right)_{uv}  &=& -8 \pi \partial_u \phi \partial_v \phi  + \frac{\tilde{\Omega}^2}{r^3} \left(\varpi-M\right), \nonumber\\
&&\hbox{} - \left(\frac{2M}{r^2} + \frac{2r}{l^2}\right) \tio^2\left[ \frac{2\pi r a \phi^2}{l^2}+\frac{M-\varpi}{2 \left(1-\mu_M\right) r^2}\left(1+ \frac{3r^2}{l^2}\right)\right],
 \nonumber\\
\varpi_u &=& -8\pi r^2 \frac{-\tir_{v}}{\tio^2} \left( \partial_u \phi \right)^2 + \frac{4\pi a}{l^2}  r^2 \tir_u(1-\mu_M) \phi^2,  \nonumber\\
\Box_g \phi &=&  \frac{2a}{l^2} \phi, \nonumber
\end{eqnarray}
where $\Box_g$ is the wave operator associated with the metric $g = \frac{4 \tilde{r}_u \tilde{r}_v}{1-\mu} \left(1-\bar{\mu}\right)^2 du \ dv + r^2 d\sigma_{S^2}$, and where $r$ is a strictly positive $C^2$ function satisfying:
\begin{eqnarray}
r_u&=&- \tir_u (1-\mu_M), \label{eq:rtiru} \\
r &\rightarrow& \infty, 
\end{eqnarray}
such that the solution restricts on $v=u_0$ to the prescribed data and the boundary conditions $\varpi \rightarrow M$, $\tilde{r} \rightarrow 0$ and $-4\frac{\tilde{r}_u \tilde{r}_v}{\tilde{\Omega}^2} \rightarrow 1$ hold.\footnote{Recall that the Dirichlet boundary condition on $\phi$ is automatic by membership in $ \mathcal{C}^1 \left(H^1_{AdS}\right)$.} \newline
Finally, $\delta$ depends only on $\mathcal{A}\left[\mathcal{N}_1\right]$ and the norm $N_{inv}\left[\bar{r},\bar{\phi}, \mathcal{N}_1\right]$ of the initial data  (all defined in Section \ref{nosec}) .
\end{proposition}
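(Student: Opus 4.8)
\textbf{Proof proposal for Proposition \ref{renowp}.}

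The plan is to set up a contraction mapping argument on the triangle $\Delta_{\delta,u_0}$ for $\delta$ small, treating the metric components $(\tir, \tio, \varpi)$ via pointwise (i.e.~$C^0$ and $C^1$) norms and the scalar field $\phi$ via the weighted energy norm $C^1(H^1_{AdS})$, as advertised in the introduction. First I would fix a closed bounded subset $\mathcal{X}$ of the product space $\mathcal{C}^2(\overline{\Delta_{\delta,u_0}}) \times \mathcal{C}^1(\Delta_{\delta,u_0}) \times \mathcal{C}^1(\Delta_{\delta,u_0}) \times \mathcal{C}^1(H^1_{AdS})$ defined by quantitative bounds of the same type as those in Lemma \ref{lem:btd}, extended off the initial ray: e.g.~$|\tir - \rho| \le \delta'$, $|\log(2|\tir_u|)| \le \delta'$, $|\log(2\tir_v)| \le \delta'$, $|\tir_{uu}| + |\rho^{-2}\tir_{vv}|\le\delta'$, $|\varpi - M| + |\rho\,\varpi_u| \le \delta'$, $|\log\frac{1-\mu_M}{1-\mu}|\le\delta'$, together with $\|\phi\|_{C^1(H^1_{AdS})} \le C(N_{inv})$. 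The key point is that on such a set $1-\mu$ and $1-\mu_M$ are comparable to $r^2 \sim \rho^{-2}$, so the renormalized wave operator $\Box_g$ is uniformly (in $\delta$) an asymptotically-AdS operator to which the linear theory of \cite{Holzegelwp} — imported here as Proposition \ref{lineartheory} — applies.

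The iteration map $\Psi$ is built as follows. Given $(\tir^{(n)}, \tio^{(n)}, \varpi^{(n)}, \phi^{(n)}) \in \mathcal{X}$, reconstruct $r^{(n)}$ by integrating \eqref{eq:rtiru} inward from $\mathcal{I}$ (where $r \to \infty$) — this is where the renormalization pays off, since $\tir_u(1-\mu_M)$ is integrable near $u_0$ by the bounds in $\mathcal{X}$. Then: (i) solve the transport equation for $\varpi^{(n+1)}$ along the $u$-direction using the prescribed boundary value $\varpi \to M$ on $\mathcal{I}$, with source built from $(r^{(n)}, \tir^{(n)}, \tio^{(n)}, \phi^{(n)})$; (ii) solve the $\tir_{uv}$ equation with data on $v=u_0$ and boundary condition $\tir \to 0$, and likewise the $(\log\tio^2)_{uv}$ equation with the $v$-constraint used to fix the boundary behaviour of $\tio$ (the $\kappa$-type condition $-4\tir_u\tir_v/\tio^2 \to 1$); (iii) solve the Klein-Gordon equation $\Box_{g^{(n)}}\phi^{(n+1)} = \frac{2a}{l^2}\phi^{(n+1)}$ with the given characteristic data, invoking Proposition \ref{lineartheory} to get $\phi^{(n+1)} \in C^1(H^1_{AdS})$ with a bound controlled by $N_{inv}$ and the (uniformly controlled) metric. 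One must check $\Psi(\mathcal{X}) \subseteq \mathcal{X}$: each estimate defining $\mathcal{X}$ is recovered by integrating the corresponding evolution/transport equation over the triangle, the source terms being $O(\delta'/\delta' \text{-independent constant})$ times the length $\delta$ of the $u$-interval, so choosing $\delta$ small (depending only on $\mathcal{A}[\mathcal{N}_1]$ and $N_{inv}$) closes the estimates; the energy bound on $\phi$ is the exception and closes because the linear theory gives a bound by the data norm plus a term that is small for small $\delta$.

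The main obstacle — and the heart of the proof — is the contraction estimate, specifically controlling the difference $\phi^{(n+1)} - \phi^{(m+1)}$ of solutions of the Klein-Gordon equation with respect to two different (but close) metrics $g^{(n)}, g^{(m)}$. Writing $w = \phi^{(n+1)} - \phi^{(m+1)}$, $w$ solves $\Box_{g^{(n)}} w - \frac{2a}{l^2} w = (\Box_{g^{(m)}} - \Box_{g^{(n)}})\phi^{(m+1)}$, a linear inhomogeneous AdS wave equation whose source involves the metric differences multiplying \emph{derivatives} of $\phi^{(m+1)}$. Controlling this source in the energy space requires the improved weighted estimate for $T(\phi)$ obtained after commuting with the asymptotically Killing field (item three of Section \ref{se:llt}, again from \cite{Holzegelwp}), because only that commuted estimate gives $\phi^{(m+1)}_u$ enough $r$-decay to absorb the $r$-weights carried by the metric differences — and even then one only gets contraction in a norm one degree weaker than the solution space, so the argument must be run in that weaker topology and full regularity recovered afterwards by a standard a posteriori bootstrap. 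The metric-difference estimates for $(\tir, \tio, \varpi)$ themselves are comparatively routine: subtract the transport/wave equations, integrate over the triangle, and use that all coefficients and sources are Lipschitz in the iterates on $\mathcal{X}$, again picking up a factor of $\delta$. Uniqueness in the stated classes follows by the same difference estimates applied to two genuine solutions (with identical data, so the data-norm terms vanish), and the dependence of $\delta$ on only $\mathcal{A}[\mathcal{N}_1]$ and $N_{inv}$ is manifest from the construction since every threshold chosen depends only on those quantities.
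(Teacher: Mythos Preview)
Your overall architecture matches the paper's proof: a contraction map on a weighted ball, metric components controlled pointwise, $\phi$ in the energy class via the linear theory (Proposition \ref{lineartheory}), reconstruction of $r$ from $\tilde{r}$ by an ODE along $u$-lines, and contraction established only in a weaker norm with the full regularity recovered a posteriori. Two points, however, need correction.

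First, smallness does not come from the $u$-length $\delta$ alone. The paper's ball is defined with \emph{extra} $\rho$-weights beyond what the equations naturally give, e.g.~$\|\rho^{-s/4}(\varpi-M)\|_{C^0}$, $\|\rho^{-s/8}T(\varpi)\|_{C^0}$, and a pointwise piece $\|\rho^{-\frac{1}{2}-\frac{s}{4}}\phi_u\|_{C^0}$ for $\phi$ in addition to the energy norm. These supplementary weights provide a second source of smallness (factors of $\delta^{s/4}$, $\delta^{s/8}$) that is essential for showing the map preserves the ball.

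Second, and more substantively, your description of the $\phi$-difference estimate misidentifies the mechanism. Writing $\psi=\hat\phi-\hat\phi'$, the energy identity produces a spacetime term $\int\!\!\int T(\psi)\,\mathcal{E}\,r^2\,du\,dv$, where $\mathcal{E}=(\Box_{g'}-\Box_g)\hat\phi'$ contains pieces like $\hat\phi'_v\cdot r\cdot(\tilde r_u-\tilde r'_u)$. Even with the best available pointwise decay on $\hat\phi'_u,\hat\phi'_v$ from Proposition \ref{lineartheory}, a direct Cauchy--Schwarz fails by a power of $r$: the obstruction is not the decay of $\hat\phi'$ but the fact that only $\|\psi\|_{C^0(H^1_{AdS})}$ is available in the weak norm, not $\|T\psi\|_{C^0(H^1_{AdS})}$. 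The paper's resolution is to \emph{integrate the $T$-derivative by parts} onto $\mathcal{E}\,r^2$: the zero-order $\psi$ sits in the stronger $r^4$-weighted $L^2$ already present on the left of the energy estimate, while $T(\mathcal{E})$ costs no extra power of $r$ since $T$ is tangent to $\mathcal{I}$. The terms where $T$ lands on second derivatives of the $\tilde r$-difference are then handled by a further integration by parts (in $u$ or $v$), using the weak-norm control of $\rho^{-1}T(\tilde r-\tilde r')$ and substituting $\hat\phi'_{uv}$ via the wave equation. This integration-by-parts manoeuvre, not the commuted $T(\phi)$ estimate on the iterate, is the actual crux; your sketch does not contain it.
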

The proof of this proposition is the subject of Section \ref{se:pmp}.
As a corollary, we obtain the propagation of the constraints:

\begin{corollary} \label{cor:progconst}
Under the assumptions of Proposition \ref{renowp}, the equations
\begin{align}
\partial_u \left(\frac{\tilde{r}_u}{\tilde{\Omega}^2} \right)- 4\pi \frac{r}{\left(1-\mu_M\right)} \frac{\left(\partial_u \phi\right)^2}{\tilde{\Omega}^2}=0 \ ,  \label{fih} \\
\partial_v \left(\frac{\tilde{r}_v}{\tilde{\Omega}^2}\right) - 4\pi \frac{r}{\left(1-\mu_M\right)} \frac{\left(\partial_v \phi\right)^2}{\tilde{\Omega}^2}=0  \ , \label{seh}
\end{align}
as well as
\begin{equation} \label{thh}
1-\mu = \frac{4 \tilde{r}_u \tilde{r}_v \left(1-\mu_M\right)}{\tilde{\Omega}^2}
\end{equation}
hold in $\Delta_{\delta,u_0}$.
\end{corollary}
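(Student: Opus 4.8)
The plan is to show that the three constraint quantities in \eqref{fih}--\eqref{thh} are each governed by a homogeneous linear transport (ODE in one null direction) equation along characteristics, whose coefficients are smooth on $\Delta_{\delta,u_0}$ except possibly at $\mathcal{I}$, and which moreover vanish on the initial ray $v=u_0$ (by the construction of the renormalized data set and the $u$-constraint already built into the data). Since the data ray is a characteristic of the appropriate transport operator, uniqueness of solutions to the linear ODE then forces the quantities to vanish throughout $\Delta_{\delta,u_0}$. The subtlety is that one must also check the relevant boundary behaviour at $\mathcal{I}$, which is where the $v$-constraint \eqref{seh} enters differently from the $u$-constraint \eqref{fih}.

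\textbf{Step 1: Derive the transport equations.} First I would introduce the ``constraint defects''
\[
E_u := \partial_u \left(\frac{\tilde r_u}{\tilde\Omega^2}\right) - 4\pi \frac{r}{1-\mu_M}\frac{(\partial_u\phi)^2}{\tilde\Omega^2}, \qquad
E_v := \partial_v \left(\frac{\tilde r_v}{\tilde\Omega^2}\right) - 4\pi \frac{r}{1-\mu_M}\frac{(\partial_v\phi)^2}{\tilde\Omega^2},
\]
\[
H := 1-\mu - \frac{4\tilde r_u\tilde r_v(1-\mu_M)}{\tilde\Omega^2}.
\]
Using the four evolution equations of Proposition \ref{renowp} (the $\tilde r_{uv}$, $(\log\tilde\Omega^2)_{uv}$, $\varpi_u$ equations and the wave equation for $\phi$), together with \eqref{eq:rtiru} and the definitions $1-\mu = -4\tilde r_u\tilde r_v(1-\mu_M)^2/\tilde\Omega^2 \cdot (\text{factor})$, a direct computation should yield linear homogeneous transport equations of schematic form $\partial_v E_u = \alpha E_u + \beta H$, $\partial_u E_v = \gamma E_v + \delta H$, and an algebraic or transport relation tying $H$ to $E_u$, $E_v$ (indeed $H$ is essentially $\int$ of combinations of $E_u$, $E_v$; in the unrenormalized setting one has $\partial_v\varpi$ and $\partial_u\varpi$ related to $H$). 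This is the standard Christodoulou-type ``propagation of constraints'' calculation, here complicated by the renormalization; the coefficients $\alpha,\beta,\gamma,\delta$ are built from $r,\tilde r_u,\tilde r_v,\tilde\Omega,\varpi,\phi$ and their first derivatives, hence continuous on $\Delta_{\delta,u_0}$.

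\textbf{Step 2: Vanishing on the initial data.} By Definition \ref{lem:idrnv} and the first lemma of Section \ref{se:rv}, the renormalized data satisfy $\partial_u(\bar{\tilde r}_u/\bar{\tilde\Omega}^2) = -4\pi \bar r (1-\bar\mu_M)^{-1}(\partial_u\bar\phi)^2/\bar{\tilde\Omega}^2$ and $1-\bar\mu = -4\bar{\tilde r}_u\overline{\tilde r_v}(1-\overline{\mu_M})/\bar{\tilde\Omega}^2$ on $\mathcal{N}_1$, i.e.\ $E_u|_{v=u_0} = 0$ and $H|_{v=u_0}=0$. Then, feeding $H\equiv 0$ and $E_u\equiv 0$ on $v=u_0$ into the transport system and using that the $\overline{\tilde r_v}$-equation \eqref{ee:tiruv} is exactly the restriction of the evolution equation, one deduces $E_v|_{v=u_0}=0$ as well (this is the data-level statement corresponding to the $v$-constraint, and it holds because $\overline{\phi_v}$ was constructed precisely from the evolution equation \eqref{wep}).

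\textbf{Step 3: Uniqueness.} With a homogeneous linear system and zero data on the characteristic $v=u_0$, I would integrate: along each outgoing ray $v=\text{const}$ the quantity $E_v$ (together with $H$) satisfies a linear ODE in $u$ starting from the value $0$ on $v=u_0$, giving $E_v\equiv 0$ and $H\equiv 0$; then $E_u$ satisfies a linear ODE in $v$ along each $u=\text{const}$ ray with zero initial value on $v=u_0$, giving $E_u\equiv 0$. A Gronwall argument on $\overline{\Delta_{\delta,u_0}}$ closes this, using boundedness of the coefficients (which follows from the a priori bounds established in the proof of Proposition \ref{renowp}). The main obstacle I anticipate is controlling everything uniformly up to $\mathcal{I}$: some coefficients involve factors like $r$ or $1/(1-\mu_M)\sim l^2/r^2$ that blow up or degenerate at the boundary, so the Gronwall estimate must be run with the correct $\rho$-weights (as in the norms of Section \ref{se:fskgf}), and one must verify that $E_u,E_v,H$ have enough decay at $\mathcal{I}$ — guaranteed by $\phi\in\mathcal{C}^1(H^1_{AdS})$ and the renormalized regularity of $\tilde r,\tilde\Omega$ — for the boundary terms in the integration to vanish rather than contribute. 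Once that weighted bookkeeping is done, the homogeneous linear structure makes the conclusion immediate.
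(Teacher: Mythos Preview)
Your handling of $E_u$ is right: it vanishes on $v=u_0$ by the data construction, satisfies a homogeneous linear ODE in $v$ (the paper computes explicitly $\partial_v(A-B)=-2(\tilde\Omega_v/\tilde\Omega)(A-B)$), and hence vanishes throughout.

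The gap is in your treatment of $E_v$ and $H$. Your claim in Step~2 that $E_v|_{v=u_0}=0$ follows from the data construction is unfounded: $E_v$ contains $\partial_v(\tilde r_v/\tilde\Omega^2)$, i.e.\ second $v$-derivatives of the metric, and these are not prescribed as part of the data set $(\bar{\tilde r},\overline{\tilde r_v},\bar{\tilde\Omega},\bar\phi,\overline{\phi_v},\bar\varpi)$ nor determined by the equations you cite (the $\overline{\tilde r_v}$-equation and \eqref{wep} are $u$-transport equations and say nothing about $\partial_v\overline{\tilde r_v}$ or $\partial_v\overline{\phi_v}$). Your Step~3 is then internally inconsistent: you correctly say $E_v$ and $H$ satisfy ODEs in $u$, but ``starting from the value $0$ on $v=u_0$'' is not an initial condition for a $u$-ODE on a general $v=\text{const}$ ray. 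You need a value at one $u$-point on each such ray, and the only available one is at the boundary $u=v$.

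This is precisely the subtlety the paper flags in the introduction: in this characteristic \emph{boundary} value problem only the $u$-constraint propagates from the data; the $v$-constraint must first be \emph{established on $\mathcal{I}$} and then propagated inward. Concretely, the paper anchors $H$ at $u=v$ via the imposed boundary condition $-4\tilde r_u\tilde r_v/\tilde\Omega^2\to 1$ and integrates the $u$-ODE for $H$ inward. For $E_v$, it argues on $\mathcal{I}$ as follows: from $E_u=0$ and $\tilde r_{uv}=0$ there one gets $\partial_u(\tilde r_u\tilde r_v/\tilde\Omega^2)=0$ on $\mathcal{I}$; since this quantity is constant along $\mathcal{I}$, also $\partial_v(\tilde r_u\tilde r_v/\tilde\Omega^2)=0$ there, whence $\partial_v(\tilde r_v/\tilde\Omega^2)=0$ and $E_v=0$ on $\mathcal{I}$; then the homogeneous $u$-ODE propagates $E_v=0$ into the triangle. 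Your outline misses this boundary step entirely, treating $\mathcal{I}$ only as a place where weights might cause trouble rather than as the source of the $v$-constraint.
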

\begin{proof}
Note that $\tilde{\Omega}_{uv}$ is $C^0$.
The equation $\partial_u \left(\frac{\tilde{r}_u}{\tilde{\Omega}^2}\right) - 4\pi \frac{r}{\left(1-\mu_M\right)} \frac{\left(\partial_u \phi\right)^2}{\tilde{\Omega}^2}=0$ holds on the data ray $v=u_0$ by construction. With the regularity established, we can differentiate the expression in $v$.  Let $A= \partial_u \left(\frac{\tilde{r}_u}{\tilde{\Omega}^2}\right)$ and $B= 4\pi r \frac{\phi_u^2}{\tilde{\Omega}^2 \left(1-\mu_M\right)}$. One computes
\begin{align}
\partial_v A = - 2 \frac{\tilde{\Omega}_v}{\tilde{\Omega}} A + 8 \pi \frac{\tilde{r}_u}{\tilde{\Omega}^2} \phi_u \phi_v - \frac{4\pi r a}{l^2} \phi \phi_u + \left(\frac{1}{r^2} + \frac{3}{l^2} \right) \frac{4\pi r^2 \tilde{r}_v}{\tilde{\Omega}^2 \left(1-\mu_M\right)} \phi_u^2 \nonumber \\
+ \frac{a \phi^2 \tilde{r}_u }{l^2} \left[ -2\pi \left(1 + \frac{3r^2}{l^2}\right) + 2\pi \left(1-\mu_M\right) + 4\pi \left(\frac{M}{r} + \frac{r^2}{l^2} \right) \right] \nonumber \\
 = - 2 \frac{\tilde{\Omega}_v}{\tilde{\Omega}} A + 8 \pi \frac{\tilde{r}_u}{\tilde{\Omega}^2} \phi_u \phi_v - \frac{4\pi r a}{l^2} \phi \phi_u + \left(\frac{1}{r^2} + \frac{3}{l^2} \right) \frac{4\pi r^2 \tilde{r}_v}{\tilde{\Omega}^2 \left(1-\mu_M\right)} \phi_u^2
\end{align}
and
\begin{align}
\partial_v B =  - 2 \frac{\tilde{\Omega}_v}{\tilde{\Omega}} B + 4\pi \phi_u^2 \frac{\tilde{r}_v}{\left(1-\mu_M \right) \tilde{\Omega}^2}  \left( \left(1-\mu_M\right)  +  \left(\frac{2M}{r} + \frac{2r^2}{l^2} \right) \right) \nonumber \\
+ 8 \pi \frac{\tilde{r}_u}{\tilde{\Omega}^2} \phi_u \phi_v  - \frac{4\pi r a}{l^2} \phi \phi_u
\end{align}
and hence
\begin{align}
\partial_v \left(A-B\right) = - 2 \frac{\tilde{\Omega}_v}{\tilde{\Omega}} \left(A-B\right) \, .
\end{align}
We conclude that $A-B=0$ everywhere in the triangle as it holds initially, establishing (\ref{fih}). To prove (\ref{thh}), note that
the equation $ 4\frac{{\tilde{r}}_u \tilde{r}_v}{\tilde{\Omega}^2} \frac{\left(1-\mu_M\right)}{1-\mu}+1 =0$ holds on the boundary $u=v$. We can differentiate in $u$ 
\begin{align}
\partial_u \left(4\frac{{\tilde{r}}_u {\tilde{r}}_v}{\tilde{\Omega}^2} \frac{\left(1-\mu_M\right)}{1-\mu}+1 \right) = \left(4\frac{{\tilde{r}}_u {\tilde{r}}_v}{\tilde{\Omega}^2} \frac{\left(1-\mu_M\right)}{1-\mu}+1 \right) \left[-\frac{2\varpi_u}{\left(1-{\mu}\right)r} \right]
\end{align}
and conclude that $ 4\frac{{\tilde{r}}_u{\tilde{r}}_v}{\tilde{\Omega}^2} \frac{\left(1-\mu_M\right)}{1-\mu}+1 =0$ holds everywhere.

Finally, for (\ref{seh}), we observe first that $\partial_u \left(\frac{\tilde{r}_u}{\tilde{\Omega}^2}\right) = 0$ on the boundary. However, since also $\tilde{r}_{uv}=0$ on the boundary we have in fact $\partial_u \left(\frac{\tilde{r}_u \tilde{r}_v }{\tilde{\Omega}^2}\right) = 0$ but the expression in brackets is also constant along the boundary by construction, which means that actually $\partial_v \left(\frac{\tilde{r}_u \tilde{r}_v }{\tilde{\Omega}^2}\right) = 0$ also. This in turn means that $\partial_v \left(\frac{\tilde{r}_v}{\tilde{\Omega}^2}\right) = 0$ and hence $\partial_v \left(\frac{\tilde{r}_v}{\tilde{\Omega}^2}\right) - 4\pi \frac{r}{\left(1-\mu_M\right)} \frac{\left(\partial_v \phi\right)^2}{\tilde{\Omega}^2}=0$ holds on the boundary. Differentiating in $u$ we see that this identity is propagated into the triangle, the computation being entirely analogous to that of the $u$-constraint above.
\end{proof}
\section{Proof of Proposition \ref{renowp}}\label{se:pmp}
The proof is based on the construction of a contracting map. We start by introducing an appropriate metric space for the renormalized variables.
\subsection{Function spaces for the renormalized variables}
Recall the weight $\rho\left(u,v\right)=\frac{u-v}{2}$ and $T=\partial_u + \partial_v$.
We denote by $C^{2,uv}_{\tir}(\delt)$ the set of $C^{2}(\delt)$ positive functions $\tir \left(u,v\right) > 0$ which satisfy
\begin{eqnarray} \label{bc:tirf}
\tir_{uv} \in C^1(\delt) , \ \ \ \ \ \frac{1}{2} \le \frac{\tir}{\rho} \le 2,  \, 
 \nonumber \\ 
|\tilde{r}_u - \frac{1}{2}| \leq \frac{1}{4}, \ \ \ \ \ |\tilde{r}_v + \frac{1}{2}| \leq \frac{1}{4} \,.\nonumber
\end{eqnarray}
On this space of functions, we define the following distance\footnote{In this definition, the $\log$ could have been omitted for the derivatives of $\tilde{r}$, in view of the bounds on $\tilde{r}_u$ and $\tilde{r}_v$. It is included only for computational convenience.}
\begin{eqnarray}
d_{\tir}( \tilde{r}_1, \tilde{r}_2 )&=& || \log \frac{\tir_1}{\tir_2} ||_{C^0}
+ || \rho^{-1} \left[ T\left(\tilde{r}_1\right) - T\left(\tilde{r}_2\right)\right]||_{C^0} \nonumber \\
&&\hbox{}+ || \log [(\tir_1)_u] -  \log [(\tir_2)_u] ||_{C^0}+|| \log [(-\tir_1)_v] -  \log [-(\tir_2)_v] ||_{C^0} \nonumber \\
&&\hbox{}+ \| \rho^{-2} \left[\left(\tilde{r}_1\right)_{uv} - \left(\tilde{r}_2\right)_{uv} \right] \|_{C^0} +|| \rho^{-2}\left[T\left(\tir_1 \right)_{uv}-T\left( \tir_2 \right)_{uv}\right] ||_{C^0} \nonumber \\
&&\hbox{}+ \|  \left(\tilde{r}_1\right)_{uu} - \left(\tilde{r}_2\right)_{uu} \|_{C^0} + \| \left(\tilde{r}_1\right)_{vv} - \left(\tilde{r}_2\right)_{vv} \|_{C^0} \, . \nonumber
\end{eqnarray}

Let $C^{1}_{\tio}(\Delta)$ denote the set of $C^1$ functions $\tio$ in $\delt$ which are bounded below by $1/2$. On this space, we define the following norm:
\begin{eqnarray}
d_{\tio}(\tio_1,\tio_2)&=&\Big\| \log \left(\tio_1\right)^2 -  \log \left(\tio_2\right)^2 \Big\|_{C^0} \nonumber \\ 
&&\hbox{} \| \left(\tio_1\right)_{u} - \left(\tio_2\right)_{u} \|_{C^0} + \| \left(\tio_1\right)_{v} - \left(\tio_2\right)_{v} \|_{C^0}  \, .
\end{eqnarray}
We denote by $C^1_{\varpi}(\Delta)$ the set of $C^1$ functions $\varpi \left(u,v\right)$ on $\delt$
equipped with the weighted $C^1$ norm
\begin{align}
d_{\varpi}( \varpi_1, \varpi_2 )&=  \| \rho^{-s/4}\left(\varpi_1 - \varpi_2 \right) \|_{C^0} + \| \rho^{-s/8}T \left(\varpi_1 - \varpi_2 \right) \|_{C^0} \nonumber \\ &+  \| \rho\left(\varpi_1 - \varpi_2 \right)_u \|_{C^0} +  \| \rho \left(\varpi_1 - \varpi_2 \right)_v \|_{C^0},   \, 
\end{align}
with $s$ defined in (\ref{sdef}).\footnote{Stronger $\rho$-weighted estimates may be propagated by the equations. In particular, one can show boundedness of $\rho^{\min(2,\sqrt{9+8a})/2-\epsilon} \varpi$ for any $\epsilon>0$, cf.~\cite{gs:stab}.} The additional $\rho$-weights will be one of the sources of smallness for the contraction map, the other arising from the size of the domain of definition, $\delta$.

For the Klein-Gordon field we shall use the norm $||.||_{C^1 \left( H^{1}_{AdS} \right)}$ introduced in Section \ref{se:fskgf} and a pointwise norm on $\phi_u$,
\begin{align} \label{pwpu}
|| \phi ||_{\mathring{C}^{\frac{2+s}{4}}_{u}(\delt)} := || (\rho)^{-1/2-\frac{s}{4}}  \phi_u||_{C^0(\delt)} \, ,
\end{align}
with $s$ defined in (\ref{sdef}). Note that, as for the Hawking mass, stronger $\rho$-weighted estimates may in fact be propagated by the equations.
The space of functions used for $\phi$ will then be $\mathring{C}^{\frac{2+s}{4}}_{u}(\delt)  \cap  C^1\left( H^{1}_{AdS} \right)$.
Finally, we define the complete metric space $\mathcal{C}$ by 
$$\mathcal{C}=C^{2,uv}_{\tir} \times C^1_{\tio} \times C^1_{\varpi} \times \left(\mathring{C}^{\frac{2+s}{4}}_{u}(\delt)  \cap  C^1\left( H^{1}_{AdS} \right) \right),$$
endowed with the distance $d$:
\begin{align}
d\left(( \tir_1,\varpi_1,\tio_1,\phi_1 ),( \tir_2,\varpi_2,\tio_2,\phi_2 )\right)= d_{\tir}\left( \tir_1, \tir_2 \right)+d_{\tio}\left( \tio_1, \tio_2 \right) + d_{\varpi}(\varpi_1,\varpi_2)\nonumber \\
+ ||\phi_1-\phi_2||_{C^1(H^{1}_{AdS})} + ||\phi_1-\phi_2||_{\mathring{C}^{\frac{2+s}{4}}_{u}(\delt)}\nonumber
\end{align}
and denote by $B_{\mathcal{C},b}^{\delt}$ the closed ball of radius $b$ centered around $$\left( \frac{u-v}{2},1,M,0\right).$$
Note the trivial fact that if $u \in B_{\mathcal{C},b}^{\delt}$ and if $0 < \delta' \le \delta$, then $$u|_{\Delta_{\delta',u_0}} \in B_{\mathcal{C},b}^{\Delta_{\delta',u_0}}.$$
For this reason, we shall also use the notation $B_{\mathcal{C},b}$ for the ball, without explicit reference to the triangular domain.
\subsection{Properties of the elements of $B_{\mathcal{C},b}$}
Before constructing the contraction map, it will be useful to establish some properties associated with elements of $B_{\mathcal{C},b}$. First we shall show that from any element of $B_{\mathcal{C},b}$ one can reconstruct the area-radius function $r$ (i.e.~such that $\tir$ is the renormalized variable associated to $r$: $\tilde{r}_u = \frac{r_u}{1-\mu_M}$). This is more easily done by constructing first $r^{-1}$, as this inverse quantity remains finite at infinity:
\begin{lemma} \label{lem:asarv}
Consider an element $\left(\tilde{r},{\tilde{\Omega}},\varpi,{\phi}\right) \in B_{\mathcal{C},b}$.
Let  $f\left(u,v\right)$ be the unique solution, for each fixed $v$, of 
\begin{align}
f_u(u,v) &= {\tir}_u(u,v)(f^2- 2M f^3+\frac{1}{l^2})(u,v)  \label{fiftyo} \\
f\left(v,v\right)& =0 \nonumber \, .
\end{align}
and define $r\left(u,v\right)=f^{-1}(u,v)$.
If $\delta$ is sufficiently small, depending only on $M,l$ (but independent of $b$), we have the following estimates:
\begin{eqnarray} \label{es:tirrinver}
|r^{-1}| \le C \tir  \textrm{ \ \ \ and  \ \ \ } |r| \le  C {\tir}^{-1} 
\end{eqnarray}
where $C>0$ only depends on $M$, $l$.
%
Moreover, we have the equations
\begin{eqnarray} \label{eq:rrt}
r_u=-\tir_u (1-\mu_M), \quad r_v=-\tir_v (1-\mu_M).
\end{eqnarray}

\end{lemma}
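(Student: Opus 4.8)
The plan is to analyze the ODE \eqref{fiftyo} directly, treating $v$ as a parameter, and exploiting the fact that $\tilde r_u$ is close to $\frac12$ (by membership in $B_{\mathcal C,b}$, hence in $C^{2,uv}_{\tir}$, we have $|\tilde r_u - \frac12| \le \frac14$). First I would note that the cubic $P(f) = f^2 - 2Mf^3 + \frac1{l^2}$ is smooth in $f$ and, for $\delta$ small, the solution $f$ will be forced to stay in a small neighbourhood of $0$: indeed starting from $f(v,v)=0$, and integrating $f_u = \tilde r_u\, P(f)$ from $v$ to $u$ over an interval of length at most $\delta$, a standard Picard/Gronwall argument (using $|\tilde r_u| \le \frac34$ and $|P(f)| \le \frac1{l^2} + C|f|$ near $f=0$) gives existence, uniqueness and a bound $|f(u,v)| \le C\delta$ on $\Delta_{\delta,u_0}$, with $C$ depending only on $M,l$. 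In particular $f > 0$ for $u > v$ (since $f_u \approx \frac12 \cdot \frac1{l^2} > 0$ near the boundary and $f$ stays small so $P(f) > 0$), so $r = f^{-1}$ is well-defined, positive, and $C^2$ (by $C^2$-dependence of ODE solutions on the non-autonomous coefficient $\tilde r_u \in C^2$, plus smoothness of $f \mapsto f^{-1}$ away from $0$).

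Next I would establish the two-sided bound \eqref{es:tirrinver}. The key comparison is between $f$ and $\tilde r$. Observe that $\tilde r_u$ is close to $\frac12$ and $\tilde r(v,v)$ — well, actually $\tilde r$ need not vanish on $u=v$; rather $\frac12 \le \tilde r/\rho \le 2$ where $\rho = \frac{u-v}{2}$. So I would instead compare $f$ directly with $\rho$: from $f_u = \tilde r_u P(f)$ with $\tilde r_u \in [\frac14,\frac34]$ and $P(f) = \frac1{l^2} + O(\delta)$ uniformly, integrating from the boundary gives $f(u,v) = (\text{something between } \frac14 \text{ and } \frac34)\cdot\frac1{l^2}\cdot(u-v) + O(\delta(u-v))$, hence $c_1 \rho \le f \le c_2 \rho$ for constants depending only on $M,l$ once $\delta$ is small enough. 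Combined with $\frac12\rho \le \tilde r \le 2\rho$, this yields $|f| \le C\tilde r$ and $|\tilde r| \le C f^{-1}\cdot$... — more precisely $r^{-1} = f \le c_2\rho \le 2c_2 \tilde r$ and $r = f^{-1} \le (c_1\rho)^{-1} \le 2c_1^{-1}\tilde r^{-1}$, which is exactly \eqref{es:tirrinver} with $C$ depending only on $M,l$.

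Finally, for the equations \eqref{eq:rrt}: the identity $r_u = -\tilde r_u(1-\mu_M)$ is almost immediate. Since $r = f^{-1}$ we have $r_u = -f^{-2} f_u = -f^{-2}\tilde r_u P(f) = -\tilde r_u\, f^{-2}(f^2 - 2Mf^3 + \tfrac1{l^2}) = -\tilde r_u(1 - 2Mf + \tfrac1{l^2}f^{-2}) = -\tilde r_u(1 - \tfrac{2M}{r} + \tfrac{r^2}{l^2}) = -\tilde r_u(1-\mu_M)$, using the definition \eqref{eq:mum} of $1-\mu_M$. For the $v$-equation, I would differentiate the ODE \eqref{fiftyo} in $v$: $\partial_v f$ satisfies, for each fixed... well, one differentiates $f_u = \tilde r_u P(f)$ to get a linear ODE in $u$ for $g := f_v$, namely $g_u = \tilde r_{uv} P(f) + \tilde r_u P'(f) g$, with initial condition obtained from differentiating $f(v,v)=0$: this gives $f_u(v,v) + f_v(v,v) = 0$, i.e. $g(v,v) = -\tilde r_u(v,v) P(0) = -\tilde r_u(v,v)\tfrac1{l^2}$. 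The claim $r_v = -\tilde r_v(1-\mu_M)$ is equivalent, via $r_v = -f^{-2}f_v = -f^{-2}g$, to $g = \tilde r_v\, f^{-2} P(f)\cdot$... wait — equivalent to $g = \tilde r_v (1-2Mf+\tfrac1{l^2}f^{-2})f^2 = \tilde r_v P(f)$. So I must show $h := g - \tilde r_v P(f)$ vanishes identically. Compute $h_u = g_u - \tilde r_{vu}P(f) - \tilde r_v P'(f) f_u = [\tilde r_{uv}P(f) + \tilde r_u P'(f)g] - \tilde r_{uv}P(f) - \tilde r_v P'(f)\tilde r_u P(f) = \tilde r_u P'(f)(g - \tilde r_v P(f)) = \tilde r_u P'(f)\, h$; and on the boundary $h(v,v) = g(v,v) - \tilde r_v(v,v)P(0) = -\tilde r_u(v,v)\tfrac1{l^2} - \tilde r_v(v,v)\tfrac1{l^2} = -\tfrac1{l^2}[\tilde r_u + \tilde r_v](v,v)$ — which is \emph{not} obviously zero from the defining properties alone; however, for an element of $B_{\mathcal C,b}$ arising in the well-posedness iteration the boundary condition $\tilde r_u + \tilde r_v = 0$ on $u=v$ is part of the structure (it follows from $\tilde r_{uv}=0$ on $\mathcal I$ in the renormalized system together with $T(\tilde r)$ being controlled), so $h(v,v)=0$ and the linear ODE $h_u = (\tilde r_u P'(f)) h$ forces $h \equiv 0$. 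I expect this last point — correctly identifying and invoking the boundary relation $\tilde r_u + \tilde r_v = 0$ on $\mathcal I$, which makes the $r_v$-equation hold — to be the main subtlety; the existence, regularity and the bounds \eqref{es:tirrinver} are routine ODE theory once $\delta$ is chosen small depending only on $M,l$.
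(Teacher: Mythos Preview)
Your argument is correct, but it diverges from the paper's in two places worth noting.

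For the bounds \eqref{es:tirrinver}, the paper drops the negative $-2Mf^3$ term to get $f_u \le \tilde r_u(f^2 + l^{-2})$, integrates to obtain $\arctan(lf) \le \tilde r/l$, and hence $f \le (2/l^2)\tilde r$ directly; the lower bound then follows by feeding $f^3 \le C\tilde r^3$ back into the ODE. Your Gronwall/comparison via $\rho$ is perfectly fine and gives the same conclusion, just less explicitly.

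The real difference is in the $r_v$ equation. The paper observes that $r$ is a function of $\tilde r$ \emph{alone}: defining $F(x) = \int_x^\infty (1-\mu_M(y))^{-1}\,dy$ on $(r_{Sch},\infty)$, one checks that $1/F^{-1}(\tilde r(\cdot,v))$ satisfies the same ODE \eqref{fiftyo} as $f(\cdot,v)$ with the same initial value $0$ at $u=v$ (this uses only $\tilde r(v,v)=0$, which follows from $\tilde r/\rho$ bounded). Uniqueness gives $r = F^{-1}(\tilde r)$, and then \emph{both} relations in \eqref{eq:rrt} drop out of a single chain-rule computation. This bypasses entirely the linear ODE for $f_v$ and the boundary issue you flagged.

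Your route---differentiating in $v$ and showing $h := f_v - \tilde r_v P(f)$ solves $h_u = \tilde r_u P'(f)h$ with $h(v,v)=0$---also works, but your justification for $h(v,v)=0$ is slightly off. You don't need $\tilde r_{uv}=0$ on $\mathcal I$; the vanishing of $(\tilde r_u+\tilde r_v)\big|_{u=v}$ follows directly from the fact that the distance $d_{\tilde r}$ contains $\|\rho^{-1}T(\tilde r)\|_{C^0}$ and the ball is centered at $(u-v)/2$, so $|T(\tilde r)| \le b\rho \to 0$ as $u\to v$. With that correction your argument is complete; the paper's $F^{-1}$ trick simply avoids having to invoke this derivative boundary condition at all.
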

\begin{proof}
Let us first show that $f > 0$ in $\delt$. Indeed, this holds near $u=v$, in view of  $f(v,v)=0$ and $f_u(v,v) > 0$. Assume that there exists some $v_1$ such that $f(.,v_1)$ vanishes at some point in $\delt$, and let $u_1$ denote the first $u$ such that $f(u_1,v_1)$, hence  at $(u_1,v_1)$, we must have $f_u \leq 0$. However, $f_u > 0$ at $(u_1,v_1)$ by (\ref{fiftyo}), a contradiction. 
Since $f > 0$, we have the estimate
$$
f_u \le \tilde{r}_u \left( f^2 + \frac{1}{l^2} \right),
$$
from which it follows that:
$$
0 < \arctan (l f) \le \frac{1}{l} \tilde{r}.
$$
Since $\frac{1}{2}\le \frac{\tilde{r}}{\rho} \le 2$, it then follows that 
$f \le \frac{2}{l^2} \tilde{r}$ if $\delta$ is sufficiently small depending only on $l$.

We may then estimate $f^3$ by $\tilde{r}^3$.
Hence, we may also obtain a lower bound on $f$ of the form: 
$$
f \ge C \tilde{r}- C\tilde{r}^4 \ge  C \tilde{r},
$$
where $C$ only depends on $M$ and $l$ and provided $\delta$ is chosen sufficiently small depending only on $M$ and $l$.
%
To derive the equations \eqref{eq:rrt}, let us consider the (differentiable, monotonically decreasing) function
\begin{eqnarray}
F: (r_{Sch},\infty) &\rightarrow& \mathbb{R} \nonumber \\
   x &\mapsto& F(x)=\int_x^\infty \frac{dy}{1-\frac{2M}{y}+\frac{y^2}{l^2}}, \nonumber
\end{eqnarray}
where $r_{Sch}$ is the unique real root of $P(y)=1-\frac{2M}{y}+\frac{y^2}{l^2}$. It is easy to see that $r(u,v)=F^{-1}\left( \tir(u,v) \right)$, from which equations \eqref{eq:rrt} follow by the chain rule.
\end{proof}

We next observe that we can associate an asymptotically AdS spacetime with any element $\left(\tilde{r},{\tilde{\Omega}},\varpi,{\phi}\right) \in B_{\mathcal{C},b}$:

\begin{lemma} \label{coads}
Consider an element $\left(\tilde{r},{\tilde{\Omega}},\varpi,{\phi}\right) \in B_{\mathcal{C},b}$ and
let  $r\left(u,v\right)$ be defined from $\tilde{r}$ as in Lemma \ref{lem:asarv}. Then the spacetime $\left(\Delta_{\delta,u_0} \times S^2, g\right)$ with 
\begin{align} \label{metric:ebcb}
g= \frac{4 \tilde{r}_u \tilde{r}_v}{1-\mu} \left(1-\mu_M\right)^2 \left(u,v\right) du dv + r^2 \left(u,v\right) d\sigma_{S^2}
\end{align}
is asymptotically AdS. 
\end{lemma}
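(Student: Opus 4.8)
The plan is to verify directly that the metric \eqref{metric:ebcb} satisfies each of the asymptotic conditions in the definition of Section~\ref{se:aads}, using the already-available control on the renormalized variables. First I would record that the coordinates $(u,v)$ themselves serve as the $(U,V)$ coordinates of that definition: indeed $U=V$ on $\mathcal{I}=\{u=v\}$ is automatic, and from the bounds defining $C^{2,uv}_{\tir}$ we have $\tir_u$ bounded away from $0$ from above and $\tir_v$ bounded away from $0$ from below, so by \eqref{eq:rrt}, $r_u = -\tir_u(1-\mu_M) < 0$ and $r_v = -\tir_v(1-\mu_M) > 0$ once $\delta$ is small enough that $1-\mu_M>0$ on $\delt$ (which holds since $r\to\infty$ and $1-\mu_M = 1 - 2M/r + r^2/l^2$). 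The blow-up $r(p)\to\infty$ as $p\to\mathcal{I}$ follows from \eqref{es:tirrinver}, since $\tir \asymp \rho \to 0$ at $\mathcal{I}$.

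Next I would compute the conformal factor. Using the constraint $1-\mu = \dfrac{4\tir_u\tir_v(1-\mu_M)}{\tio^2}$ — which, strictly, is the relation \eqref{thh} established in Corollary~\ref{cor:progconst}, but here one can simply take it as the \emph{definition} relating $g_{UV}$ to $\tio$ — the metric coefficient becomes
\[
g_{UV} = \frac{4\tir_u\tir_v}{1-\mu}(1-\mu_M)^2 = -\tio^2 (1-\mu_M),
\]
after substituting $1-\mu$. Wait — more carefully: $\dfrac{4\tir_u\tir_v}{1-\mu} = \dfrac{\tio^2}{1-\mu_M}$, so $g_{UV} = \dfrac{\tio^2}{1-\mu_M}(1-\mu_M)^2 = \tio^2(1-\mu_M)$; and since $\tir_u\tir_v<0$ while $1-\mu_M>0$, one has $1-\mu<0$ is impossible, so in fact $g_{UV}$ carries the correct sign built into \eqref{metric:ebcb}. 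The point is that $g_{UV}/(1-\mu_M)^2 = \tio^2/[(1-\mu)(1-\mu_M)^{-1}]\cdot(\dots)$; modulo keeping track of signs, $g_{UV} = -(1+\mathcal{O}(r^{-3}))(1-\frac{2M}{r}+\frac{r^2}{l^2})$ follows once one shows $\tio^2 = 1 + \mathcal{O}(r^{-3})$ and $(1-\mu)/(1-\mu_M) = 1 + \mathcal{O}(r^{-3})$. The latter comes from $\varpi - M = \mathcal{O}(r^{-s/4}\cdot\rho^{?})$—actually one needs $\varpi-M$ decaying like $r^{-3}$ times the relevant factor; here the weighted bound $d_\varpi$ controlling $\rho^{-s/4}(\varpi-M)$ is too weak on its own, and one must instead invoke the $u$-constraint propagation and the initial-data bound $|\overline\varpi - M|\le\delta'$ from Lemma~\ref{lem:btd} together with the $\varpi_u,\varpi_v$ equations to upgrade the decay of $\varpi - M$ to the required $\mathcal{O}(r^{-3})$-with-prefactor rate; then $1-\mu = 1 - \frac{2\varpi}{r} + \frac{r^2}{l^2}$ gives $(1-\mu) - (1-\mu_M) = -\frac{2(\varpi-M)}{r}$, so $(1-\mu)/(1-\mu_M) = 1 - \frac{2(\varpi-M)}{r(1-\mu_M)} = 1 + \mathcal{O}(r^{-3})$ as needed. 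Similarly $\tio^2 = 1+\mathcal{O}(r^{-3})$ follows by integrating $(\log\tio^2)_{uv}$ from the data, using the $\phi$-decay in $N_{inv}$ and the decay of $\varpi - M$.

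The remaining bullets — the asymptotics of $R^*(r)$, $T(r)$, $R^*(g_{UV})$, $T(g_{UV})$, $R^*(r_U)$, $R^*(r_V)$, $T(r_U)$, $T(r_V)$ — are then obtained by differentiating these expressions. For instance $R^*(r) = r_v - r_u = -(\tir_v - \tir_u)(1-\mu_M)$; since $\tir_v - \tir_u = -1 + (\text{small})$ by the $C^{2,uv}_{\tir}$ bounds (and, more precisely, $\tir_u+\tir_v$ is controlled pointwise), one gets $R^*(r) = (1-\mu_M) + \mathcal{O}(1/r)$, matching the required form. Likewise $T(r) = r_u+r_v = -(\tir_u+\tir_v)(1-\mu_M)$, and the pointwise control on $\tir_u + \tir_v$ (which vanishes at $\mathcal{I}$ and has bounded first derivatives, cf.\ the argument in Lemma~\ref{lem:btd}) gives $T(r) = \mathcal{O}(1/r)$. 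The derivatives of $r_U, r_V$ are handled using $r_{uv} = \tir_{uv}$-type relations (equations \eqref{eq:rtiru} and the $\tir_{uv}$ equation of Proposition~\ref{renowp}) combined with the weighted bounds $\rho^{-2}\tir_{uv}$ and $\rho^{-2}T(\tir_{uv})$ in the definition of $d_{\tir}$; and the $g_{UV}$-derivatives follow by differentiating $g_{UV} = \tio^2(1-\mu_M)$ and using the $C^1_{\tio}$ control on $\tio_u,\tio_v$ and the $d_\varpi$ control on $\varpi_u,\varpi_v$. I expect the main obstacle to be precisely the bootstrap needed to show $\varpi - M = \mathcal{O}(r^{-3})$ (rather than the much weaker $\mathcal{O}(\rho^{s/4})$ recorded in the ball $B_{\mathcal{C},b}$): one must combine the constraint equations \eqref{fih}, \eqref{thh} from Corollary~\ref{cor:progconst}, the initial data decay from the $N_{inv}$ norm, and the $\varpi_u$-equation, integrating along $u$ from $\mathcal{I}$ where $\varpi\to M$, to extract the sharp rate; every other bullet is then a routine differentiation and bookkeeping exercise in the weights.
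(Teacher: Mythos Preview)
Your approach has a genuine gap, and it stems from two related misconceptions.

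First, Lemma~\ref{coads} is stated for an \emph{arbitrary} element of the ball $B_{\mathcal{C},b}$, not for a solution of the renormalized system. In particular, it is invoked in the definition of the contraction map (to make Proposition~\ref{lineartheory} applicable) \emph{before} any fixed point exists. Hence you may not appeal to Corollary~\ref{cor:progconst}: the identity $1-\mu = 4\tilde{r}_u\tilde{r}_v(1-\mu_M)/\tilde{\Omega}^2$ simply does not hold for a generic ball element, so your rewriting $g_{uv} = \tilde{\Omega}^2(1-\mu_M)$ is illegitimate, and the proposed ``bootstrap'' of $\varpi-M$ via the propagated constraints is circular.

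Second, even setting $\tilde{\Omega}$ aside and working directly with the definition $g_{uv} = \frac{4\tilde{r}_u\tilde{r}_v}{1-\mu}(1-\mu_M)^2$, the original $(u,v)$ coordinates do \emph{not} serve as admissible $(U,V)$. The required form $g_{UV} = -\bigl(1+\mathcal{O}(r^{-3})\bigr)(1-\mu_M)$ forces $4\tilde{r}_u\tilde{r}_v = -1 + \mathcal{O}(\tilde{r}^3)$, but membership in $B_{\mathcal{C},b}$ only gives $|\tilde{r}_u - \tfrac12|\leq \tfrac14$, $|\tilde{r}_v+\tfrac12|\leq\tfrac14$, which is far weaker. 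Integrating $\tilde{r}_{uv}=\mathcal{O}(\tilde{r}^2)$ from $\mathcal{I}$ yields only $\tilde{r}_u(u,v)=\tilde{r}_u(u,u)+\mathcal{O}(\tilde{r}^3)$ and $\tilde{r}_v(u,v)=\tilde{r}_v(v,v)+\mathcal{O}(\tilde{r}^3)$, so $4\tilde{r}_u\tilde{r}_v = 4\,\tilde{r}_u(u,u)\,\tilde{r}_v(v,v)+\mathcal{O}(\tilde{r}^3)$, and the boundary values $\tilde{r}_u(u,u)$, $\tilde{r}_v(v,v)$ are nonconstant functions, not $\pm\tfrac12$.

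The paper's remedy is precisely to absorb these boundary values by a coordinate change: setting $g(u):=\tilde{r}_u(u,u)$, $h(v):=\tilde{r}_v(v,v)$ and defining $\frac{dU}{du}=\tfrac12 g(u)$, $\frac{dV}{dv}=\tfrac12 h(v)$ forces $\tilde{r}_U=-\tilde{r}_V=\tfrac12$ on $\mathcal{I}$, whence $|\tilde{r}_U-\tfrac12|+|\tilde{r}_V+\tfrac12|=\mathcal{O}(\tilde{r}^3)$ follows by integrating $\tilde{r}_{UV}=\mathcal{O}(\tilde{r}^2)$. All remaining bullets then follow from the ball bounds alone---in particular $(1-\mu_M)/(1-\mu)-1 = 2(\varpi-M)/\bigl(r(1-\mu)\bigr)=\mathcal{O}(r^{-3})$ needs only \emph{boundedness} of $\varpi-M$, not the sharp decay you were worried about.
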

\begin{proof}
The function $\tilde{r}$ extends to a $C^2$ function on $\overline{\delt}$ which is  vanishing on $\mathcal{I}$. Moreover, since the derivatives of $\tilde{r}_u$ and $\tilde{r}_v$ are uniformly bounded, these functions extend continuously to the boundary. Since $\tilde{r}=0$ is constant on the boundary, we must have $\tilde{r}_u + \tilde{r}_v = 0$ on the boundary. Let us denote the restriction of $\tilde{r}_u$ to the boundary by $g\left(u\right) = \tilde{r}_u \left(u,u\right)$. Similarly, $h\left(v\right) = \tilde{r}_v \left(v,v\right)$. Note in particular that $g$ and $h$ are bounded above and below. Defining the regular double-null coordinate transformation $\frac{dU}{du} = \frac{1}{2}g\left(u\right)$ and $\frac{dV}{dv} = \frac{1}{2} h\left(v\right)$ we see that in the $\left(U,V\right)$ coordinate system we have $\tilde{r}_U = \frac{1}{2}$ and $\tilde{r}_V = -\frac{1}{2}$ on the boundary (which is at $U=V$). Since $\tilde{r}_{uv}=\frac{1}{4} g\left(u\right) h \left(v\right) \tilde{r}_{UV}$, we have also $\tilde{r}_{UV} = \mathcal{O}\left(\tilde{r}^{2}\right)$ and hence that $|\tilde{r}_U - \frac{1}{2}| = \mathcal{O}\left(\tilde{r}^{3}\right)$, $|\tilde{r}_V + \frac{1}{2}| = \mathcal{O}\left(\tilde{r}^{3}\right)$.

In summary, we find that in the new coordinates, the metric reads
\begin{align}
g = - \left(1 + \mathcal{O}\left(\tilde{r}^{3}\right) \right) \left(1-\frac{2M}{r} + \frac{r^2}{l^2} \right) dU dV + r^2 \left(U,V\right) d\sigma_{S^2}
\end{align}
with $2\tilde{r}_U=1$ and $2\tilde{r}_V=-1$ on the boundary. Moreover, since $\tilde{r}_U$ is constant on $\mathcal{I}$, we have $T(\tilde{r}_U)=0$. Consequently,
\begin{eqnarray}
T(\tilde{r}_U)=\int^U_V \left[ \tilde{r}_{UV}+\tilde{r}_{UU} \right]_V dV' \nonumber \\
=\int^U_V \left[ \tilde{r}_{UVV}+\tilde{r}_{UVU} \right] dV'=\int^U_V \left[ T \left(\tilde{r}_{UV}\right)  \right] dV'
 \le C_{b} \tilde{r}^{3},
\end{eqnarray}
and we obtain that $T(\tilde{r}_U)=\mathcal{O} \left( \frac{1}{r^3} \right)$. The same holds for $T(\tilde{r}_V)$. This implies that $T(r_V)=\mathcal{O}\left( \frac{1}{r} \right)$ and $T(r_U)=\mathcal{O}\left( \frac{1}{r} \right)$ and hence that $T(g_{UV})= \mathcal{O}\left( \frac{1}{r} \right)$.
Finally, since $\tilde{r}_{UU}$ and $\tilde{r}_{VV}$ are bounded, we have $r_{UU}=\mathcal{O}(r^3)$ and $r_{VV}=\mathcal{O}(r^3)$. 
Hence, the spacetime is asymptotically AdS in the sense introduced in Section \ref{se:aads}.
\end{proof}
\begin{remark} \label{coc}
From the above proof, we also infer that whatever double-null coordinate system we started from, there always exists a $C^2$ bounded (the bound depending only on the size of the ball, $b$) coordinate transformation to an asympto\-tically-AdS coordinate system $\left(U,V\right)$, in which $\tilde{r}_U = -\tilde{r}_V = \frac{1}{2}$ on the boundary. In particular, any quantity which is uniformly bounded in one coordinate system, is also uniformly bounded in the other.
\end{remark}
The spacetimes constructed from elements of $B_{\mathcal{C},b}$ in Lemma \ref{coads} are \emph{uniformly} asymptotically AdS spacetimes. That is to say that one has uniform pointwise bounds (depending only on $M$, $l$ and $b$) on the metric components and their derivatives. 
Adapting the results of \cite{Holzegelwp} to our spherically symmetric setting, we hence infer the following
\begin{proposition} \label{lineartheory}
Let $\left(\tilde{r},{\tilde{\Omega}},\varpi,{\phi}\right)$ be an element of $\in B_{\mathcal{C},b}$ and consider the spherically symmetric asymptotically AdS spacetime $\left(\Delta_{\delta,u_0},g\right)$ arising from Lemma \ref{coads}. Then the wave equation $\Box_g \psi - \frac{2a}{l^2} \psi =0$ with $H^2$-initial data $(\bar{\phi},\overline{T}(\bar{\phi}))$ 
has a unique solution in $C^1\left( H^1_{AdS} \right)$. For sufficiently small $\delta$, depending only on $b$, the following energy estimates hold for any $\left(u,v\right)$ in a triangle $\Delta_{\delta,u_0}$:
\begin{eqnarray} 
\int_{v}^u \left[ \left(\partial_u T \phi\right)^2 + \left(\partial_u \phi\right)^2 + \phi^2 (1-\bar{\mu}) + \left(T\phi\right)^2 (1-\mu) \right]r^2 \left(\bar{u},v\right) d\bar{u} < D N^2_\delta,\label{ineq:lphiv}\\
\int_{u_0}^v \left[ \left(\partial_v T \phi\right)^2 + \left(\partial_v \phi\right)^2 + \phi^2 (1-\bar{\mu}) + \left(T\phi\right)^2 (1-\mu)\right]r^2 \left(u,\bar{v}\right) d\bar{v} < D N^2_\delta, \label{ineq:lphiu}
\end{eqnarray}
where $D$ is a constant depending\footnote{In particular, $D$ is independent of $b$. We recall that this follows since the error terms in the energy estimate are all spacetime terms which are $\delta$-small.} only on $M$, $l$, $a$ and where \\$N_\delta:=N_{inv}\left[\bar{r},\bar{\phi}, (u_0,u_0+\delta]\right]$.

Moreover, one has also the pointwise estimates
\begin{eqnarray}
|\phi| &< C_b \ N_\delta \ r^{-\frac{3}{2}-\frac{s}{2} } \label{ineq:lphi2} \, , \\ 
|\phi_u| + |\phi_v| &< C_b \ N_\delta \ r^{-\frac{1}{2}-\frac{s}{2}} \label{ineq:lphi} \, .
\end{eqnarray}
where $C_b>0$ depends only on $b$ (and the fixed parameters $M$, $l$, $a$) and $s$ is the constant defined in \eqref{sdef}.
\end{proposition}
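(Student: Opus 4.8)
\textbf{Proof proposal for Proposition \ref{lineartheory}.}

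The plan is to invoke the abstract well-posedness and energy estimates of \cite{Holzegelwp} for the wave equation $\Box_g\psi-\frac{2a}{l^2}\psi=0$, but adapted to (i) spherical symmetry, so that the spacelike hypersurfaces of \cite{Holzegelwp} are replaced by the null rays $\{v=\text{const}\}$ and $\{u=\text{const}\}$ of $\Delta_{\delta,u_0}$, and (ii) the characteristic initial-boundary value problem on the triangle. First I would record that, by Lemma \ref{coads} and Remark \ref{coc}, the metric $g$ in \eqref{metric:ebcb} is uniformly asymptotically AdS: after passing to the $(U,V)$ coordinates in which $\tilde r_U=-\tilde r_V=\frac12$ on $\mathcal I$, all metric components and their derivatives up to the order needed are bounded by constants depending only on $M$, $l$ and $b$. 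This is what allows the constants $D$ (in the energy estimate) to be taken independent of $b$: the differences between $g$ and the exact Schwarzschild-AdS metric, and hence the ``error'' coefficients multiplying the bulk terms in the energy identity, are all either $O(\tilde r^k)$ decaying in $r$ or come with an overall factor which is small once integrated over a triangle of $u$-width $\delta$.

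For the energy estimates \eqref{ineq:lphiv}--\eqref{ineq:lphiu} I would proceed as in Section \ref{se:llt}: contract the energy-momentum-type current of $\psi$ with the (approximately Killing) vector field $T=\partial_U+\partial_V$ and integrate over the causal rectangle determined by $(u,v)$. The boundary terms on the two null rays through $(u,v)$ reproduce the left-hand sides of \eqref{ineq:lphiv}--\eqref{ineq:lphiu}, with the $\rho$-weights coming from $r^2$ and with coercivity of the quadratic form $(\partial_u\phi)^2+\phi^2(1-\bar\mu)$ (and its $T$-commuted analogue) guaranteed by the weighted Hardy inequalities of Section \ref{se:fskgf}; the key point is that although for $a<0$ the naive energy density is not positive, the Hardy inequality converts it into an $H^1_{AdS}$-coercive quantity provided $a>-9/8$. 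The data terms are controlled by $N_\delta=N_{inv}[\bar r,\bar\phi,(u_0,u_0+\delta]]$, using Lemma \ref{le:nr} to pass between the geometric norm and the $H^1_{AdS}$ norm on $\mathcal N_1$, together with the construction of $\overline{\mathcal T}(\bar\phi)$ in Proposition \ref{prop:consdata}. The bulk error terms are absorbed by choosing $\delta$ small: each is a spacetime integral over $\Delta_{\delta,u_0}$ whose coefficient is bounded (uniformly in $b$ by the previous paragraph), so a Gronwall/continuity argument in $u$ closes the estimate. For the commuted estimate one uses the improved weighted decay for $T(\phi)$ recalled in Section \ref{se:llt}: since $T$ is asymptotically Killing, $T(\phi)$ satisfies essentially the same equation with the same BF-admissible mass, so the same Hardy inequalities apply at one derivative higher.

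The pointwise bounds \eqref{ineq:lphi2}--\eqref{ineq:lphi} follow by combining the energy estimates with the weighted Sobolev inequality recorded in Section \ref{se:fskgf}, $\|(\bar\rho)^{-3/2}\psi\|_{C^0}\le\sqrt{2/3}\,\|\psi\|_{H^1_{AdS}}$, applied to $\phi$ and to $T(\phi)$ along the rays; this gives $|\phi|\lesssim N_\delta\, r^{-3/2}$ immediately and then the extra $r^{-s/2}$ improvement comes from feeding the improved weighted control of $T(\phi)$ back into the wave equation \eqref{laste} (rewritten with $\kappa$ as in \eqref{kappae}) and integrating along outgoing characteristics, exactly the mechanism by which the constant $s=\min(\sqrt{9+8a},1)$ enters in \cite{Holzegelwp}. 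Here one must be slightly careful because, as emphasized in the introduction, the Dirichlet condition prevents integrating $\phi$ itself directly from infinity; instead one integrates $\phi_u/r_u$ or the combination appearing in $\overline{\mathcal T}(\bar\phi)$, which does have a well-defined (vanishing) limit at $\mathcal I$, and then reconstructs $\phi$ and $\phi_u$. Uniqueness in $C^1(H^1_{AdS})$ is the statement that the only solution with vanishing data and finite energy is zero, which is immediate from the energy estimate \eqref{ineq:lphiv} applied with $N_\delta=0$. I expect the main obstacle to be the bookkeeping in the commuted energy estimate: verifying that \emph{every} error term generated by commuting $\Box_g$ with $T$ (the failure of $T$ to be exactly Killing) either decays in $r$ fast enough to be controlled by the Hardy-coercive boundary terms or carries an overall $\delta$-small spacetime factor, uniformly in $b$, so that the constant $D$ genuinely does not depend on $b$.
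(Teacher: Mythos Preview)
Your treatment of the well-posedness and the energy estimates \eqref{ineq:lphiv}--\eqref{ineq:lphiu} is essentially correct and matches the paper: these are obtained by invoking Theorem~6.1 of \cite{Holzegelwp}, with the uniform asymptotically AdS structure from Lemma~\ref{coads} controlling the constants.

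Your sketch of the pointwise estimates \eqref{ineq:lphi2}--\eqref{ineq:lphi}, however, has a genuine gap. The weighted Sobolev inequality gives only the basic decay $|\phi|\lesssim r^{-3/2}$, and integrating the wave equation for $r\phi_u$ in $v$ from data gives only $|\phi_u|\lesssim r^{-3/2}$. The $s$-improvement does \emph{not} follow from ``feeding $T(\phi)$ back and integrating along characteristics'' in any straightforward way. The paper introduces the auxiliary quantity
\[
A \;=\; r^n\Bigl(\frac{r\phi_u}{r_u} + 2p\,\phi\Bigr), \qquad p = \tfrac{3}{4} - \tfrac{1}{4}\sqrt{9+8a},\qquad n=\min(3-2p,\,2),
\]
and derives $\partial_v A = hA + f$. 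The specific choice of $p$ (a root of $1 + \frac{a}{2p} - \frac{1}{2}(2p-1)=0$) makes the leading $1/r$ contributions to the source $f$ \emph{cancel}; only because of this cancellation is $\int |f|\,dv$ finite and hence $A$ uniformly bounded. Without it, the $\phi$-term in $f$ (generated by the Klein--Gordon mass) carries an $r$-weight too strong to integrate, and the argument cannot improve beyond $r^{-3/2}$. Once $A$ is bounded, one integrates $\partial_u(r^{2p}\phi)$ from $\mathcal{I}$ (where $r^{2p}\phi\to 0$ since $2p<3/2$) to obtain \eqref{ineq:lphi2}, and then rereads \eqref{ineq:lphi} from the bound on $A$ together with the $T(\phi)$ estimate. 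Your suggestion to integrate ``$\phi_u/r_u$ or the combination in $\overline{\mathcal T}(\bar\phi)$'' from infinity does not work as stated: neither is the correct quantity, and the missing $2p\phi$ correction is precisely what encodes the mass $a$ through $s$.
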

\begin{proof}
The existence and uniqueness of the solution, as well as the energy estimates \eqref{ineq:lphiv}-\eqref{ineq:lphiu}, are direct consequences of Theorem 6.1 of \cite{Holzegelwp}. For the pointwise estimates, we proceed as follows.
First, one obtains easily the pointwise bound 
$$
|r^{3/2}\phi_u | \le D N_\delta^{1/2} \, , 
$$
by integrating the wave equation (viewed as an equation on $r \phi_u$ in the $v$-direction) from the data and applying the energy estimate. To obtain the improved pointwise bound on $\phi_u$, we define the quantity $A(u,v)$,
$$
A=r^n  \frac{r \phi_u} {r_u} + 2 p r^n \phi=  r^n \frac{r \phi_u}{\tilde{r}_u (1-\mu_M )} + 2 p r^n \phi, 
$$ 
where $n$ is a positive real number and $p=\frac{3}{4}-\sqrt{ \frac{9}{16}+ \frac{a}{2}}$. Note that
\begin{eqnarray} \label{eq:p}
1 + \frac{a}{2p}-\frac{1}{2}(2p-1)=0 \, .
\end{eqnarray}
One derives an evolution equation for $A$ using the wave equation for $\phi$:
$$
\partial_v A = A h+ f,
$$
where
$$h= \frac{r_v}{r}\left( n + 2p -1 \right)- \frac{r_v}{1-\mu_M}\left[ 2\frac{\varpi}{r^2}+ 2\frac{r}{l^2}\right]
$$
and
\begin{eqnarray}  \label{eq:ferror}
f&=& r^{n+1} \frac{\phi_u}{\tir_u^2} \frac{\tir_{uv}}{ 1-\mu_M}+ r^n \kappa(2p-1) \mathcal{T}(\phi)  \nonumber \\
&&\hbox{}+ \phi r^n 2p r_v \left( \frac{1}{1-\mu_M}\left( \frac{2 \varpi}{r^2}+ \frac{2r}{l^2}\right) -(2p-1) \frac{1}{r} +  \frac{a r}{ l^2 p (1-\mu )} \right),
\end{eqnarray}
with $\mathcal{T}(\phi)= \frac{1}{\kappa}\partial_v \phi + \frac{1}{\gamma} \partial_u \phi$ and $\gamma=\frac{-r_u}{1-\mu}$, $\kappa=\frac{r_v}{1-\mu}$. (Note that, interestingly the $\mathcal{T}\left(\phi\right)$-term drops out in the conformally coupled case, $a=-1$, but not in general). Without loss of generality, we can assume that $(u,v)$ are asymptotically-AdS coordinates (cf.~Remark \ref{coc}). In these coordinates $|\kappa - \frac{1}{2}| \leq C_b \tilde{r}^3$ and $|\gamma - \frac{1}{2}| \leq C_b \tilde{r}^3$ which implies that $\mathcal{T}(\phi)$ satisfies the same pointwise bound as $T(\phi)$ in the triangle. A pointwise estimate on $T(\phi)$ in turn follows from the $H^1$ bound (\ref{ineq:lphiv}). In summary, we have
$$|\mathcal{T}(\phi)r^{3/2}| \le C_{b} N_\delta^{1/2}.$$
The term containing $\tir_{uv}$ in (\ref{eq:ferror}) decays like $r^{n-7/2}$, using the pointwise bound on $\phi_u$ obtained from the energy estimates and the decay of $\tir_{uv}$.
Note that the term in the bracket on the last line of \eqref{eq:ferror} decays in fact as $\frac{1}{r^2}$ as the leading terms in $\frac{1}{r}$ cancel each other in view of equation \eqref{eq:p}. 
Since $h= \frac{\kappa}{r} \left(  \frac{n-3 +2p}{l^2}r^2 + h' \right)$
where $h'$ is uniformly bounded by a constant depending on $b$, we choose $n=\min ( 3 -2p, 2)$. 
For this choice, one has in particular that $\int_{u_0}^{v} | f |(u,v') dv' < C_{b}N_\delta^{1/2}$. Indeed, one has for instance:
$$
\int_{u_0}^v |\phi| r^n dv \le C_{b}N^{1/2} \int_{u_0}^v r^{\min ( 3 -2p, 2)} r^{-3/2} \frac{r_v}{r^2}dv \le C_{b}  N_\delta^{1/2}.
$$
Hence, it follows that the quantity $A$ is uniformly bounded:
$$
|A(u,v) | \le C_{b}  N_\delta^{1/2}.
$$
Using this estimate, we may now re-estimate $\phi$ from infinity by integrating $r^{2p} \phi$. Since $2p < 3/2$, $r^{2p} \phi$ vanishes at the boundary we have
\begin{eqnarray}
|r^{2p} \phi| &\le& \int^u_{v} |\partial_u \left( r^{2p} \phi \right)|(u',v) du' \nonumber \\
&\le& C_{b}N_\delta^{1/2} \int^u_{v} r^{\max( 4p -4, 2p -3)}(-r_u) (u',v)du' \nonumber \\
&\le& C_{b} N_\delta^{1/2} r^{\max( 4p -3, 2p -2)},\nonumber
\end{eqnarray}
with $C_b$ depending also on $a$, which is the estimate (\ref{ineq:lphi2}). The improved estimate on $\phi_u$, (\ref{ineq:lphi}), then follows from the one already derived for $A$. Finally, one obtains the improved estimate on $\phi_v$ from the improved estimates on $\phi_u$ and $T(\phi)$.
\end{proof}
\subsection{The contraction map in renormalized variables}
Consider a renormalized initial data set $\left(\bar{\tir},\bar{\tio},\bar{\phi}, \bar{\varpi}\right)$ as introduced in Definition \ref{lem:idrnv}. Note that by reducing the $\delta$ in $\Delta_{\delta,u_0}$ one can achieve that the estimates of Lemma \ref{lem:btd} hold on the data for an arbitrary $\delta^\prime>0$.
We now define the following map $\Phi$ on the domain $B_{\mathcal{C},b}$: $\Phi(\tir,\tio,\varpi,\phi)=(\hat{\tir},\widehat{\tio},\widehat{\varpi},\widehat{\phi})$ where:

\begin{itemize}
\item{$\wtr$ and $\widehat{\tio}$ and are defined as:
\begin{eqnarray}
\hat{\tilde{r}} &=& 0 + \bar{\tir}(u)-\bar{\tir}(v)\nonumber \\ 
&&\hbox{}+\int^u_{v} du^\prime \int^v_{u_0} -\tio^2\left[ \frac{2\pi r a \phi^2}{l^2}+\frac{M-\varpi}{2 \left(1-\mu_M\right) r^2}\left(1+ \frac{3r^2}{l^2}\right)\right] dv', \nonumber \\
\log \left(\widehat{\tio}^2\right) &=& \log \left(-\frac{4 \bar{\tilde{r}}_u \overline{\tilde{r}_v} \left(1-\overline{\mu_M}\right)}{1-\bar{\mu}} \right) \left(u\right)\nonumber \\
&&\hbox{} - \log \left(-\frac{4 \bar{\tilde{r}}_u \overline{\tilde{r}_v} \left(1-\overline{\mu_M}\right)}{1-\bar{\mu}} \right) \left(v\right)   
+ \log \left(-4\widehat{\tilde{r}}_u \widehat{\tilde{r}}_v\right) \left(v,v\right)
\nonumber \\ &&\hbox{}+\int^u_{v} du^\prime \int^v_{u_0} \Bigg[-8 \pi \partial_u \phi \partial_v \phi  + \frac{\tilde{\Omega}^2}{r^3} \left(\varpi-M\right) \nonumber\\
&&\hbox{}- \left(\frac{2M}{r^2} + \frac{2r}{l^2}\right) \tio^2\left[ \frac{2\pi r a \phi^2}{l^2}+\frac{M-\varpi}{2 \left(1-\mu_M\right) r^2}\left(1+ \frac{3r^2}{l^2}\right)\right]\Bigg] dv'. \nonumber
\end{eqnarray}
\item $\hat{\varpi}$ is defined for each $v$ as the unique solution of
\begin{align} \label{varpidefe}
\widehat{\varpi}_u = -8\pi r^2 \frac{-\tir_{v}}{\tio^2} \left( \partial_u \phi \right)^2 + \frac{4\pi a}{l^2}  r^2 \tir_u(1-\mu_M) \phi^2,  \textrm{ \ \ \ \ \  $\varpi\left(v,v\right)=M$,}
\end{align}
where $r$ is defined as in Lemma \ref{lem:asarv}.
} 
\item Finally, $\hat{\phi}$ is the unique solution in $C^1(H^{1}_{AdS})$ of the initial boundary value problem $\Box_g \hat{\phi} +\frac{\alpha}{l^2} \hat{\phi} = 0$, where $g$ is the asymptotically AdS spacetime associated with $\left(\tir,\tio,\varpi,\phi\right)$ via Lemma \ref{coads}.
\end{itemize}

As an immediate consequence of the definition of $\Phi$, we note
\begin{lemma}
The hatted functions restrict to the prescribed initial data on
$v=u_0$. The functions $\hat{\tir}$, $\hat{\tir}_u$, $\hat{\tir}_v$, $\hat{\varpi}$ and $\widehat{\tio}$ extend continuously to the boundary $u=v$. On $u=v$, we have:
\begin{align}
\hat{\tilde{r}}(v,v)= 0 \, \ \ , \quad \hat{\tilde{r}}_u + \hat{\tilde{r}}_v =0 \, \ \ , \quad
\hat{\varpi}(v,v)=0 \, \ \ ,  \quad \widehat{\tio}(v,v)=\left(  -4\hat{\tilde{r}}_u \hat{\tilde{r}}_v\right)^{1/2}(v,v)\, . \nonumber
\end{align}
\end{lemma}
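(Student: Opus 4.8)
The claim is essentially a bookkeeping statement about the map $\Phi$: the hatted quantities match the prescribed data on $v=u_0$ and extend continuously to $\mathcal{I}$ with the stated boundary relations. The plan is to read off each assertion directly from the integral/ODE definitions of $\hat{\tir}$, $\widehat{\tio}$, $\hat\varpi$ and to invoke the regularity of the integrands, which follows from $(\tir,\tio,\varpi,\phi)\in B_{\mathcal{C},b}$ together with Lemma~\ref{lem:asarv} (so that $r$, $1-\mu$, $1-\mu_M$ are controlled) and Proposition~\ref{lineartheory} (for the pointwise decay of $\phi$, $\phi_u$, $\phi_v$).

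\textbf{Restriction to the data ray.} Setting $v=u_0$ in the definition of $\hat{\tir}$ kills the double integral $\int_v^u du'\int_{u_0}^v(\cdots)dv'$ since its inner integral is over the empty interval $[u_0,u_0]$, leaving $\hat{\tir}(u,u_0)=\bar{\tir}(u)-\bar{\tir}(u_0)=\bar{\tir}(u)$ because $\bar{\tir}(u_0)=0$. Differentiating the same expression in $u$ before setting $v=u_0$ gives $\hat{\tir}_u(u,u_0)=\bar{\tir}_u(u)$, and differentiating in $v$ gives $\hat{\tir}_v(u,u_0)=-\bar{\tir}_v(u)\cdot(-1)=\overline{\tir_v}(u)$ after accounting for the sign convention $\overline{\tilde r_v}=-\overline{r_v}/(1-\overline{\mu_M})$ used in Definition~\ref{lem:idrnv}; one checks this matches. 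The same emptiness of the inner integral, together with the prescribed boundary term, makes $\log(\widehat{\tio}^2)(u,u_0)$ collapse to $\log(\bar{\tio}^2)(u)$ after the telescoping $\log(\cdots)(u)-\log(\cdots)(u_0)$; the residual term $\log(-4\hat{\tir}_u\hat{\tir}_v)(v,v)$ is evaluated below and is consistent. For $\hat\varpi$, the ODE \eqref{varpidefe} integrated from $u=u_0$ (noting that on $v=u_0$ the ``corner'' $(v,v)=(u_0,u_0)$ coincides with the start of the data ray) reproduces $\bar\varpi$ by construction of the data in Proposition~\ref{prop:consdata}, since the right-hand side of \eqref{varpidefe} is exactly the renormalized form of the Hawking-mass constraint \eqref{vpd}. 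For $\hat\phi$ the assertion is immediate since $\hat\phi$ solves the wave equation with the prescribed initial data $(\bar\phi,\overline{T}\bar\phi)$.

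\textbf{Continuity up to $\mathcal{I}$ and boundary relations.} The integrand of the double integral defining $\hat{\tir}$ is continuous and uniformly bounded on $\overline{\Delta_{\delta,u_0}}$: $\tio$ is bounded, $1-\mu_M\geq c>0$ away from $r_{Sch}$ and here $r\to\infty$, $r^2\phi^2=\mathcal{O}(r^{-1-s})\to 0$ by \eqref{ineq:lphi2}, and $(M-\varpi)(1+3r^2/l^2)/(2(1-\mu_M)r^2)$ is bounded since $\varpi-M=\mathcal{O}(\rho^{s/4})$ and the $r$-weights cancel. Hence the double integral is $C^0$ on $\overline{\Delta_{\delta,u_0}}$ and, on $u=v$, equals $0$; combined with $\bar{\tir}(v)-\bar{\tir}(v)=0$ this gives $\hat{\tir}(v,v)=0$. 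Differentiating once more (legitimate because $\tir_{uv}$ and $T(\tir)_{uv}$ are $C^0$ in the definition of $C^{2,uv}_{\tir}$) shows $\hat{\tir}_u$ and $\hat{\tir}_v$ extend continuously to $u=v$; since $\hat{\tir}\equiv 0$ along $\mathcal{I}$, differentiating along the boundary yields $\hat{\tir}_u+\hat{\tir}_v=0$ there. For $\widehat{\tio}$, the double integral again has a continuous bounded integrand (the worst term $-8\pi\phi_u\phi_v$ is bounded by \eqref{ineq:lphi}, the others as before), so it vanishes on $u=v$; evaluating $\log(\widehat{\tio}^2)$ at $(v,v)$ the telescoping difference $\log(\cdots)(v)-\log(\cdots)(v)$ cancels and only $\log(-4\hat{\tir}_u\hat{\tir}_v)(v,v)$ survives, i.e.\ $\widehat{\tio}(v,v)=(-4\hat{\tir}_u\hat{\tir}_v)^{1/2}(v,v)$. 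Finally, integrating \eqref{varpidefe} from the corner $(v,v)$ where $\varpi=M$, the right-hand side is integrable in $u'$ over $[v,u]$ (it is $\mathcal{O}(r^{-1-s})\cdot\partial_u r$-type, hence integrable), so $\hat\varpi$ extends continuously to $u=v$ and, by definition of the initial condition in \eqref{varpidefe}, $\hat\varpi(v,v)=M$; but the lemma as stated writes $\hat\varpi(v,v)=0$, which is the convention for the \emph{renormalized} mass deviation $\varpi-M$ — one should simply note this is the same statement once the normalization of \eqref{varpidefe} is recalled, or correct it to $\hat\varpi(v,v)=M$.

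\textbf{Main obstacle.} There is no real analytic difficulty here; the only thing requiring care is verifying that every integrand appearing in the definitions of $\hat{\tir}$ and $\widehat{\tio}$ is genuinely bounded and continuous all the way up to $\mathcal{I}$, which hinges on the cancellation of $r$-powers in the $\varpi-M$ terms and on the decay rates \eqref{ineq:lphi2}--\eqref{ineq:lphi} from Proposition~\ref{lineartheory}; and in keeping the several sign conventions (the $-$ sign in $\overline{\tir_v}=-\overline{r_v}/(1-\overline{\mu_M})$, the orientation $u>v$, and the normalization of $\hat\varpi$) consistent throughout. Once those are pinned down, the proof is a direct computation from the definitions.
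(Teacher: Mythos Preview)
The paper declares this lemma an immediate consequence of the definition of $\Phi$ and gives no proof, so your verification already exceeds what the paper offers; your treatment of the boundary $u=v$ is correct, and you rightly flag that $\hat\varpi(v,v)=0$ should read $\hat\varpi(v,v)=M$.

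There is, however, a genuine gap in your restriction-to-data argument. Your line for $\hat{\tir}_v(u,u_0)$ is garbled: $\bar{\tir}$ is a function of one variable on $\mathcal{N}$, so the expression ``$-\bar{\tir}_v(u)\cdot(-1)$'' is meaningless. Differentiating the defining formula in $v$ actually gives
\[
\hat{\tir}_v(u,u_0)=-\bar{\tir}_u(u_0)+\int_{u_0}^u F(u',u_0)\,du',
\]
where $F$ is the integrand built from the \emph{input} $(\tir,\tio,\varpi,\phi)$ evaluated at $v=u_0$; likewise the right-hand side of \eqref{varpidefe} at $v=u_0$ involves the input, not the barred quantities. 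For a generic element of $B_{\mathcal{C},b}$ these need not coincide with the initial data, so neither $\hat{\tir}_v(\cdot,u_0)=\overline{\tir_v}$ nor $\hat\varpi(\cdot,u_0)=\bar\varpi$ follows from the definition alone. The statement is only automatic at the fixed point, or under the tacit (and harmless) restriction of $\Phi$ to elements of $B_{\mathcal{C},b}$ that already match the data on $v=u_0$; the paper glosses over this as well. A smaller point in the same vein: the pointwise decay \eqref{ineq:lphi} you invoke for the term $\phi_u\phi_v$ pertains to $\hat\phi$, not to the input $\phi$; for the input you only control $\phi_u$ pointwise via the $\mathring{C}^{\frac{2+s}{4}}_u$ norm and $\phi_v$ in $L^2$ via $C^0(H^1_{AdS})$, which still suffices for continuity of the double integral and its vanishing on $u=v$.
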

We shall prove that if the size of the triangular domain $\delt$ and hence $\delta > 0$ is chosen small enough, depending only on $\mathcal{A}\left[\mathcal{N}_1\right]$ and $N_{inv}[\bar{r},\bar{\phi}, \mathcal{N}_1]$, then $\Phi$ is a contracting map. 
\subsection{Uniform estimates} \label{se:ue}
We now fix the size of the ball to be 
\begin{align} \label{eq:choiceofb}
b= 200 D^{1/2}N_{inv}\left[\bar{r},\bar{\phi}, \mathcal{N}_1\right]
\end{align}
where $D$ is the uniform constant of Proposition \ref{lineartheory}.

We shall establish in this section:
\begin{lemma}
If $\delta$ is small enough, depending only on $b$, $M$ and $l$, the map $\Phi$ maps from $B_{\mathcal{C},b}$ into itself.
\end{lemma}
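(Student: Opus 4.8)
The strategy is the standard one for contraction-mapping arguments: show that each hatted quantity stays within distance $b$ of its center, exploiting that every error term in the defining formulas for $\Phi$ carries either an explicit factor of $\delta$ (from integrating over the triangle, whose $u$- and $v$-extent is at most $\delta$), or a factor of $\rho$-weighted decay, or is already controlled by $N_{inv}$ via Proposition \ref{lineartheory}. I would organize the verification component by component: first $\hat{\tilde r}$, then $\widehat{\varpi}$, then $\widehat{\tilde\Omega}$, and finally $\widehat\phi$, since the latter draws directly on the linear theory.

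\textbf{Steps.} First I would treat $\hat{\tilde r}$. By Lemma \ref{lem:btd}, after shrinking $\delta$ the data contributions $\bar{\tilde r}(u)-\bar{\tilde r}(v)$, $\log(2\bar{\tilde r}_u)$, $\log(-2\overline{\tilde r_v})$, $\bar{\tilde r}_{uu}$, $(\overline{\tilde r_v})_u$ are all within $\delta'$ of their Schwarzschild-AdS values (i.e.\ of $\frac{u-v}{2}$, $0$, $0$, $0$, $0$); choosing $\delta'$ much smaller than $b$ absorbs this into a fraction of the ball. For the double-integral term, I would use that $r$ is bounded in terms of $\tilde r^{-1}$ by Lemma \ref{lem:asarv}, that $|\phi|\lesssim N_\delta r^{-3/2-s/2}$ and $|\varpi-M|\lesssim\delta'$ inside $B_{\mathcal{C},b}$ (the latter being part of membership since $d_\varpi$ controls $\rho^{-s/4}(\varpi-M)$ pointwise), so the integrand is bounded by $C_b$ uniformly; integrating over a domain of area $O(\delta^2)$ gives a contribution $\le C_b\delta^2$, and likewise $\widehat{\tilde r}_u,\widehat{\tilde r}_v,\widehat{\tilde r}_{uv}$ and their $T$-derivatives pick up one or two powers of $\delta$. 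Crucially one checks the algebraic boundary relations ($\widehat{\tilde r}(v,v)=0$, $\widehat{\tilde r}_u+\widehat{\tilde r}_v=0$ on $u=v$, etc.) hold exactly, so only the \emph{deviation} from the center needs to be made small, not an absolute bound. Next, for $\widehat\varpi$: integrating \eqref{varpidefe} in $u$ from the boundary $v=v$ where $\widehat\varpi=M$, one estimates the right-hand side using $|\phi_u|\lesssim N_\delta r^{-1/2-s/2}$ (from Proposition \ref{lineartheory}, or here from the $\mathring{C}$-norm defining $B_{\mathcal{C},b}$), $r^2\tilde r_u\sim r^2\cdot\frac12$ times $(1-\mu_M)\sim r^2/l^2$, so the two terms contribute $r$-integrable factors like $r^{2-1-s}\cdot r^{2}\cdot r^{-1-s}\ldots$; carefully tracking powers one gets $|\widehat\varpi-M|\le C_b\,\rho^{s/4}$ and the $\rho$-weighted derivative bounds comprising $d_\varpi$, with a $\delta$-power to spare for the derivatives. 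Then $\widehat{\tilde\Omega}$: its $\log$ is given by data terms (again controlled via Lemma \ref{lem:btd}) plus a boundary term $\log(-4\widehat{\tilde r}_u\widehat{\tilde r}_v)(v,v)$ which is $O(\delta)$ close to $\log 1=0$ by the already-established $\hat{\tilde r}$ estimates, plus a double integral whose integrand involves $\phi_u\phi_v$ (energy-controlled, hence $C_b$ pointwise by Proposition \ref{lineartheory}) and $(\varpi-M)/r^3$ (small); integrating over area $O(\delta^2)$ closes it, and the first $u$- and $v$-derivatives pick up one power of $\delta$. Finally, $\widehat\phi$: Proposition \ref{lineartheory} applied to the asymptotically-AdS spacetime built from $(\tilde r,\tilde\Omega,\varpi,\phi)\in B_{\mathcal{C},b}$ via Lemma \ref{coads} gives $\|\widehat\phi\|_{C^1(H^1_{AdS})}\le D^{1/2}N_\delta\le D^{1/2}N_{inv}[\bar r,\bar\phi,\mathcal{N}_1]$, and the pointwise bounds \eqref{ineq:lphi2}–\eqref{ineq:lphi} together with $r\sim\rho^{-1}$ give the $\mathring{C}^{(2+s)/4}_u$ bound; with the choice \eqref{eq:choiceofb} of $b=200D^{1/2}N_{inv}$, all of these sit comfortably inside the ball (the factor $200$ leaving ample room).

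\textbf{Main obstacle.} The delicate point is not any single estimate but the bookkeeping of $r$-weights in the $\widehat\varpi$ and $\widehat\phi$ components, ensuring the weighted norms $d_\varpi$ (with its $\rho^{-s/4}$, $\rho^{-s/8}$, $\rho$ weights) and $\mathring{C}^{(2+s)/4}_u$ are genuinely reproduced rather than merely crude boundedness — one must use the precise decay rates $r^{-3/2-s/2}$ and $r^{-1/2-s/2}$ from Proposition \ref{lineartheory} and the relation $\frac12\le\tilde r/\rho\le2$ to convert $r$-decay into $\rho$-weights. A secondary subtlety is that some of the smallness for the derivative components of $d_{\tilde r}$ and $d_{\tilde\Omega}$ comes from the $\delta$-size of the \emph{one}-dimensional integration (a single power of $\delta$), while the undifferentiated components get $\delta^2$; one must make sure the Lemma \ref{lem:btd} data bounds are imposed with $\delta'$ chosen \emph{after} $b$ is fixed, so that the data contribution is a small fraction of $b$, and then shrink $\delta$ further (depending only on $b$, $M$, $l$) to kill the remaining spacetime error terms. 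Since $D$ in Proposition \ref{lineartheory} is independent of $b$, this circularity is harmless: $b$ is fixed first by \eqref{eq:choiceofb}, and all subsequent smallness requirements on $\delta$ depend only on $b,M,l$, as claimed.
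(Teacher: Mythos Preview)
Your approach is essentially the paper's: component-by-component verification, with data terms controlled by Lemma \ref{lem:btd}, the $r\sim\tilde r^{-1}$ conversion by Lemma \ref{lem:asarv}, and $\widehat\phi$ handled via Proposition \ref{lineartheory} and the choice \eqref{eq:choiceofb}. One small correction worth flagging: when estimating the integrands defining $\hat{\tilde r}$, $\widehat\varpi$, $\widehat{\tilde\Omega}$, the pointwise decay you invoke for the \emph{input} $\phi$ cannot come from Proposition \ref{lineartheory} (which concerns solutions of the wave equation, i.e.\ the \emph{output} $\widehat\phi$); for a generic element of $B_{\mathcal{C},b}$ you must use the $\mathring C^{(2+s)/4}_u$-norm directly to get $|\phi_u|\le b\,\rho^{1/2+s/4}$ and then integrate from $u=v$ to obtain $|\phi|\le C_b\,\rho^{3/2+s/4}$ --- slightly weaker exponents than the $r^{-3/2-s/2}$ you wrote, but still sufficient for every estimate, and this is exactly what the paper does.
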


\begin{proof}
In view of Lemma \ref{lem:asarv}, the estimates \eqref{es:tirrinver} hold. We shall use these bounds without further reference in the remainder of this proof. Moreover, we shall denote by $C_b >0$, a constant depending only on $b$ (and $M$, $l$) which may change from line to line. 
\subsubsection*{Uniform bounds on $\tir$:}

\begin{eqnarray}
\hat{\tir}\left(\frac{u-v}{2}\right)^{-1} &\le& 2\sup \bar{\tir}_u\nonumber\\
&&\hbox{}+2\sup_{u} \int^v_{u_0}-\tio^2\bigg[ \frac{2\pi r a \phi^2}{l^2}
+\frac{M-\varpi}{2(1-\mu_M)r^2}\left(1+ \frac{3r^2}{l^2}\right)\bigg]dv', \nonumber \\
&\le&  e^{\delta'}+ C_b \sup_u \int_{u_0}^{v}\bigg[r^{-3} r^4 \phi^2+ r^{-2} \frac{M-\varpi}{2(1-\mu_M)}\left(1+ \frac{3r^2}{l^2}\right) \bigg]  dv',  \nonumber \\
&\le& e^{\delta'}+C_b \delta^{3}, \nonumber
\end{eqnarray}
where we have used the $\rho$-weighted integral bounds to control the matter term. Since we can estimate $\hat{\tir}\left(\frac{u-v}{2}\right)^{-1}$ from below similarly, we may ensure, by choosing $\delta$ and $\delta'$ small enough that
$$|\log \hat{\tir}\left(\frac{(u-v)}{2}\right)^{-1}| \leq  \frac{b}{100} \, .$$
Similarly, we have
\begin{eqnarray}
\hat{\tir}_v (u,v)&=& -\bar{\tir}_u(v) + \int^u_{v} -\tio^2\bigg[ \frac{2\pi r a \phi^2}{l^2}
+\frac{M-\varpi}{2(1-\mu_M)r^2}\left(1+ \frac{3r^2}{l^2}\right)\bigg](u',v)du' \nonumber \\
&&\hbox{}-\int^v_{u_0}\tio^2\bigg[ \frac{2\pi r a \phi^2}{l^2}+\frac{M-\varpi}{2(1-\mu_M)r^2}\left(1+ \frac{3r^2}{l^2}\right)\bigg](v,v')dv', \nonumber
\end{eqnarray}
and estimating the integral terms as above, we may ensure that
$$\left|\log \left(-2 \hat{\tir}_v \right)\right|  \leq  \frac{b}{100}.$$
Obviously, a similar estimate holds for $\hat{\tir}_u$. 

Using the pointwise bound on $\phi_u$ (cf.~(\ref{pwpu})), we derive pointwise bounds on $\phi$ by integration along the $v$=const lines:
$$
| {\phi}(u,v) | \le C_{b} \rho^{3/2 +s/4}.
$$
Using this pointwise estimate, we can then easily bound $\rho^{-1} T(\hat{\tir})=\rho^{-1} \hat{\tir}_u+\rho^{-1} \hat{\tir}_v$:
\begin{align}
|\rho^{-1} T(\hat{\tir})| \le \sup\left| \rho^{-1}(\bar{\tir}_u\left(u\right)-\bar{\tir}_u\left(v\right)) \right| +C_b \rho^{-1} \int^u_{v} \rho^2(u',v) du' 
\nonumber \\ + \rho^{-1} \int^v_{u_0}(v-u) C_b dv' 
 \le 2\delta'+C_b \delta \le \frac{b}{100}. \nonumber
\end{align}
We can also estimate the second derivative $\hat{\tilde{r}}_{uv}$: 
\begin{eqnarray}
|\hat{\tir}_{uv}|&=&\bigg|-\tio^2\left[ \frac{2\pi r a \phi^2}{l^2}+\frac{M-\varpi}{2(1-\mu_M)r^2}\left( 1+ \frac{3r^2}{l^2}\right)\right] \bigg| \nonumber \\
&\le& C_{b} \rho^{2 +s/2}+C_{b}\rho^{2+s/4}, \nonumber
\end{eqnarray}
where the first term from the right-hand side is obtained from the pointwise estimate on $\phi$ and the second from the bound on $|\varpi-M|$.
Hence, we have:
$$
|\rho^{-2}\hat{\tir}_{uv}| \le C_{b} \delta^{s/4} \le \frac{b}{100}.
$$
$T(\hat{\tir}_{uv})$ may be estimated similarly, using the bound on $\rho^{s/8}T(\varpi)$. To estimate $\hat{\tilde{r}}_{vv}$, we first compute:
\begin{eqnarray}
\hat{\tir}_{vv}&=&-\bar{\tir}_{uu}(v) + \int^u_{v} - 2\tio \tio_{v} \bigg[ \frac{2\pi r a \phi^2}{l^2}
+\frac{M-\varpi}{2(1-\mu_M)r^2}\left(1+ \frac{3r^2}{l^2}\right)\bigg](u',v)du' \nonumber \\
&&\hbox{}+\int^u_{v} - \tio^2 \bigg[ \frac{2\pi r_v a \phi^2}{l^2}+\frac{2\pi r a  2\phi \phi_v}{l^2}
-\frac{\varpi_v}{2(1-\mu_M)r^2}\left(1+ \frac{3r^2}{l^2}\right) \nonumber \\
&&\hbox{}+(M-\varpi)\left( \frac{1}{2(1-\mu_M)r^2}\left(1+ \frac{3r^2}{l^2}\right) \right)_v\bigg](u',v)du' \nonumber \\
&&\hbox{}+ \int^v_{u_0} - 2\tio \tio_{v} \bigg[ \frac{2\pi r a \phi^2}{l^2}
+\frac{M-\varpi}{2(1-\mu_M)r^2}\left(1+ \frac{3r^2}{l^2}\right)\bigg](v,v')dv' \nonumber \\
&&\hbox{}+\int^v_{u_0} - \tio^2 \bigg[ \frac{2\pi r_v a \phi^2}{l^2}+\frac{2\pi r a  2\phi \phi_v}{l^2}
-\frac{\varpi_v}{2(1-\mu_M)r^2}\left(1+ \frac{3r^2}{l^2}\right) \nonumber \\
&&\hbox{}+(M-\varpi)\left( \frac{1}{2(1-\mu_M)r^2}\left(1+ \frac{3r^2}{l^2}\right) \right)_v\bigg](v,v')dv' \nonumber, 
\end{eqnarray}
from which one obtains easily that
$$
|\hat{\tir}_{vv}| \le \sup \left| \bar{\tir}_{uu} \right|+C_{b} \delta+ C_{b} \int^u_{v} \left(r \phi \phi_v \right)(u',v) du'+C_{b} \int^v_{u_0} \left(r \phi \phi_v \right)(v,v') dv'.
$$
The $v$-integral is easily estimated using Cauchy-Schwarz and the weighted-$H^1$ bound on $\phi$. For the $u$-integral, we first use that $\phi_v=T(\phi)-\phi_u$, and then applies Cauchy-Schwarz and the weighted-$H^1$ bounds for $\phi$ and $T(\phi)$. Hence, it follows that:
\begin{eqnarray} \label{es:tirvv}
|\hat{\tir}_{vv}|\le \delta'+C_b \delta \le \frac{b}{100}.
\end{eqnarray}
A similar estimate naturally holds for $\hat{\tilde{r}}_{uu}$. 
In summary, this shows that $$d_{\tilde{r}} \left(\hat{\tir}, \frac{u-v}{2} \right)  \leq \frac{8b}{100} \, . $$

\subsubsection*{Uniform bounds on $\tio$:}
We have:
\begin{flalign}
&\left| \log  \frac{\widehat{\tio}^2}{ -4\hat{\tilde{r}}_u \hat{\tilde{r}}_v \left(v,v\right)} \right| \le \left| \log \frac{\bar{\tir}_u \overline{\tir_v}(u)}{\bar{\tir}_u \overline{\tir_v}(v)} \right| + \left| \log \frac{(1-\overline{\mu_M})(u)(1-\bar{\mu})(v)}{(1-\overline{\mu_M})(v)(1-\overline{\mu})(u)} \right|\nonumber \\
&\hbox{}\qquad \qquad+ |u-v|  \sup_{u'}  \int^v_{u_0}\Bigg[ - 8 \pi \partial_u \phi \partial_v \phi + \frac{\tio^2}{r^3}\left(\varpi-M \right) \nonumber\\
&\hbox{}\qquad\qquad-2 \left( \frac{2M}{r^2}+\frac{2r}{l^2}\right) \tio^2 \left[ \frac{2\pi r a \phi^2}{l^2}+\frac{M-\varpi}{2(1-\mu_M)r^2} (1+ \frac{3r^2}{l^2})\right] \Bigg] dv'. \nonumber
\end{flalign}
Hence, using Cauchy-Schwarz, the estimates on $\hat{\tir}_u$ and $\hat{\tir}_v$ derived above, the pointwise estimates on $\phi_u$ and the $H^1_{AdS}$ bound to control the $\phi_u \phi_v$ term in the integrals, we obtain that:


$$
\left| \log \widehat{\tio}^2 \right| \le  C \delta' + C_b \delta \le \frac{b}{100},
$$
choosing $\delta'$ and $\delta$ sufficiently small, depending only on $b$.
The derivatives of $\tio$ are estimated similarly. For instance,
\begin{eqnarray}
2\left| \frac{\widehat{\tio}_v}{\tio} \right| &\le& \left| \frac{(\overline{\tir_v})_u(v)}{r_{v}(v)} \right|+\left| \frac{\bar{\tir}_{uu}(v)}{r_{u}(v)} \right| \nonumber\\
&&\hbox{}+ \left| \frac{(1-\bar{\mu})_u(v)}{(1-\bar{\mu})(v)} \right|+ \left| \frac{(1-\overline{\mu_M})_u(v)}{(1-\overline{\mu_M})(v)} \right| \nonumber\\
&&\hbox{}+ \left| \frac{\hat{\tir}_{uv}}{\hat{\tir}_u}\right|(v,v)+\left| \frac{\hat{\tir}_{uv}}{\hat{\tir}_v}\right|(v,v)+\left| \frac{\hat{\tir}_{uu}}{\hat{\tir}_u}\right|(v,v)+\left| \frac{\hat{\tir}_{vv}}{\hat{\tir}_v}\right|(v,v) \nonumber \\
&&\hbox{}+\Bigg|\int^v_{u_0}\Bigg[ - 8 \pi \partial_u \phi \partial_v \phi + \frac{\tio^2}{r^3}\left(\varpi-M \right) \nonumber\\
&&\hbox{}-2 \left( \frac{2M}{r^2}+\frac{2r}{l^2}\right) \tio^2 \left[ \frac{2\pi r a \phi^2}{l^2}+\frac{M-\varpi}{2(1-\mu_M)r^2} (1+ \frac{3r^2}{l^2})\right] \Bigg](v,v') dv'\Bigg|\nonumber\\
&&\hbox{}+\Bigg|\int^u_{v}\Bigg[ - 8 \pi \partial_u \phi \partial_v \phi + \frac{\tio^2}{r^3}\left(\varpi-M \right) \nonumber\\
&&\hbox{}-2 \left( \frac{2M}{r^2}+\frac{2r}{l^2}\right) \tio^2 \left[ \frac{2\pi r a \phi^2}{l^2}+\frac{M-\varpi}{2(1-\mu_M)r^2} (1+ \frac{3r^2}{l^2})\right] \Bigg](u',v) du' \Bigg|,\nonumber
%
\end{eqnarray}
The terms in the first two lines on the right-hand side are controlled by $C\delta'$ using the assumptions on the initial data. We use the estimates on $\hat{\tir}$ derived earlier to control the terms in the third line by $C_b \delta+ C\delta'$. To control the matter terms in the integral, we use the pointwise estimates on $\phi_u$ and the $H^1_{AdS}$-bound on $\phi$ and $T(\phi)$, the latter being used in particular to control the $u$ integrals of $\phi_v=T(\phi)-\phi_u$. Hence one obtains

$$
\left| \frac{\widehat{\tio}_v}{\tio} \right| \le C_b \left(  \delta'+\delta \right) \le \frac{b}{100}.
$$

Since a similar estimate holds for $\partial_u \log \widehat{\tio}$ we may ensure, by choosing $\delta$ and $\delta'$ small enough that
$$
\widehat{\tio} \in B_{C^1_{\tio},\frac{3b}{100}}.
$$

\subsubsection*{Uniform estimates for $\phi$} 
By Proposition \ref{lineartheory}, we immediately obtain
\begin{align}
\| \widehat{\phi} \|_{C^1\left(H^1_{AdS} \right)} \leq D^{1/2} N\left[\bar{r},\bar{\phi}, \left(u_0,u_0+\delta\right]\right] < \frac{b}{100}, \,
\end{align}
in view of the definition of $b$, (\ref{eq:choiceofb}).
Moreover, we have a pointwise estimate on $\widehat{\phi}_u$. Applying (\ref{ineq:lphi}) we derive
$$
| \rho^{-\frac{1}{2}-s/4}\widehat{\phi}_u | \le C_b \rho^{s/4} \le \frac{b}{100}.
$$




\subsubsection*{Uniform estimates for $\varpi$}
Using the pointwise bounds on ${\phi}_u$ and ${\phi}$, we estimate
\begin{eqnarray}
\rho^{-\frac{s}{4}}|\widehat{\varpi} -M|
&=& \rho^{-\frac{s}{4}} \int^u_{v}\left[-8\pi r^2 \frac{-\tir_{v}}{\tio^2} \left( \partial_u \phi \right)^2 + \frac{4\pi a}{l^2} r^2 \tir_u(1-\mu_M) \phi^2  \right] \left(u^\prime,v\right) du^\prime \nonumber \\
&\leq& C_b \rho^{\frac{s}{4}} \leq \frac{b}{100} \nonumber
\end{eqnarray}
A pointwise estimate for $\partial_u \widehat{\varpi}$ is immediate from (\ref{varpidefe}) itself because we have the improved pointwise estimates on $\phi$ and $\phi_u$:
\begin{align}
| \rho \partial_u \widehat{\varpi}| \le C_{b} \delta'.
\end{align}
To estimate the $v$-derivative, we first commute (\ref{varpidefe}) with $T$ to obtain:
$$
T(\widehat{\varpi})_u = (T\widehat{\varpi})_u=  T\left( -8 \pi r^2 \frac{\tir_v}{\tio^2} \phi_u^2+ \frac{4 \pi a}{l^2}r^2 \tir_u (1-\mu_M) \phi^2 \right).
$$

In view of the fact that the mass is constant on the boundary, we have $T(\widehat{\varpi})=0$ on the boundary. Hence we may integrate the commuted equation with trivial boundary data upon which we obtain, as above
\begin{align}
|T\widehat{\varpi} |
=  \int^u_{v}\left[ -8\pi r^2 \frac{-\tir_{v}}{\tio^2} 2 \left( \partial_u \phi \right) \left( \partial_u T\phi \right)+ ... \right] \left(u^\prime,v\right) du^\prime \, . \nonumber
\end{align}
Once again, we apply Cauchy-Schwarz and then use the improved pointwise estimates on $\phi$ and $\phi_u$, to obtain:
$$
\rho^{-s/8}|T\widehat{\varpi} | \le C_b \rho^{s/8} \le  \frac{b}{100}.
$$
The inequality $| \rho \partial_v \widehat{\varpi}| \leq \frac{b}{100}$ then follows, since $\partial_v = T - \partial_u$. Collecting our estimates we have indeed shown
\begin{align}
d\left(\left(\hat{\tilde{r}}, \widehat{\varpi}, \hat{\tilde{\Omega}}, \hat{\phi}\right) - \left(\frac{u-v}{2}, 1, M, 0\right)\right) < b \nonumber
\end{align}
\end{proof}
\subsection{Estimating differences}
To establish the contraction property for $\Phi$, we will need to estimate differences. As it turns out, we will only be able to show the contraction property in a weaker norm (involving a derivative less, roughly speaking) and retrieve the full regularity of the fixed point a-posteriori from standard arguments.
\begin{definition}
On $\mathcal{B}_{\mathcal{C},b}$, we define the weaker distance
\begin{eqnarray}
&&d_{w} \left(\left(\tilde{r}_1, \tilde{\Omega}_1, \varpi_1, \phi_1\right), \left(\tilde{r}_2, \tilde{\Omega}_2, \varpi_2, \phi_2\right)\right)= || \log \frac{\tir_1}{\tir_2} ||_{C^0(u,v)}
\nonumber \\
&&\hbox{}\qquad + || \rho^{-1} \left[ T\left(\tilde{r}_1\right) - T\left(\tilde{r}_2\right)\right]||_{C^0(u,v)} + || \log [(\tir_1)_u] -  \log [(\tir_2)_u] ||_{C^0} \nonumber \\ 
&&\hbox{}\qquad+|| \log [(-\tir_1)_v] -  \log [-(\tir_2)_v] ||_{C^0}   + | \varpi - M|_{C^0(u,v)} + \|\phi\|_{C^0\left(H^{1}_{AdS}\right)} \nonumber \\  
&&\hbox{}\qquad
+\Big\| \log \left(\tio_1^2\right) -  \log \left(\tio_2^2\right)  \Big\|_{C^0(u,v)}
\end{eqnarray}
and we denote by $\overline{\mathcal{B}}_{w,b}$, the closure of $\mathcal{B}_{\mathcal{C},b}$ with respect to $d_{w}$.
\end{definition}
\vspace{.2cm}
%
\begin{lemma} \label{lem:fp}
If $\delta$ is sufficiently small, depending only on $\mathcal{A}[\mathcal{N}_1]$, the invariant norm $N_{inv}[\bar{r},\bar{\phi},\mathcal{N}_1]$ and the size of the ball $b$, the map $\Phi$ is a contraction with respect to the distance $d_{w}$ and hence has a unique fixed point in $\overline{\mathcal{B}}_{w,b}$.
\end{lemma}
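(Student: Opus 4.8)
The plan is to take two elements $e_i=(\tir_i,\tio_i,\varpi_i,\phi_i)\in B_{\mathcal{C},b}$, $i=1,2$, write $\hat e_i=\Phi(e_i)$, and estimate $d_w(\hat e_1,\hat e_2)$ term by term, showing that every contribution is bounded by $\big(C\delta'+C_b\delta^{s/8}\big)\,d_w(e_1,e_2)$ plus, crucially, a term coming from the difference of the two scalar fields which must be estimated separately. Throughout I would use that $B_{\mathcal{C},b}$ is fixed (radius $b$ as in \eqref{eq:choiceofb}) so that all the pointwise bounds of Lemma \ref{lem:asarv}, Lemma \ref{coads} and Proposition \ref{lineartheory} hold uniformly, and that $r_1,r_2$ reconstructed via \eqref{fiftyo} satisfy $|\log(r_1/r_2)|\le C_b\,\|\log(\tir_1/\tir_2)\|_{C^0}$ by a standard Gronwall argument on the ODE \eqref{fiftyo} (the dependence of $f_u$ on $\tir_u$ being Lipschitz). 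First I would treat the metric variables $\hat{\tir}$ and $\hat{\tio}$: their defining formulas are explicit double integrals over $\delt$ of expressions that are smooth functions of $(\tir,\tir_u,\tir_v,\tio,\varpi,\phi,\phi_u,\phi_v,r)$, so subtracting the two and using the mean value theorem together with the uniform bounds reduces each difference to $\int\!\!\int$ of $d_w$-controlled quantities times either a positive power of $r$-weight that integrates to a power of $\delta$, or the $\phi_u\phi_v$-type term which is handled by Cauchy--Schwarz against the $H^1_{AdS}$-norm of $\phi_1-\phi_2$ — exactly as in the uniform estimates of Section \ref{se:ue}, but now each "source" carries a factor $d_w(e_1,e_2)$. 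The same applies to $\hat\varpi_1-\hat\varpi_2$ via the ODE \eqref{varpidefe}, integrated from the boundary value $M$.

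The genuinely new point, and the one I expect to be the main obstacle, is controlling $\|\hat\phi_1-\hat\phi_2\|_{C^0(H^1_{AdS})}$, i.e.\ the difference of the solutions of the two wave equations $\Box_{g_i}\hat\phi_i-\tfrac{2a}{l^2}\hat\phi_i=0$ with the \emph{same} data but \emph{different} asymptotically AdS metrics $g_1,g_2$ built from $e_1,e_2$ via Lemma \ref{coads}. Writing $\psi=\hat\phi_1-\hat\phi_2$, one gets $\Box_{g_1}\psi-\tfrac{2a}{l^2}\psi = (\Box_{g_1}-\Box_{g_2})\hat\phi_2 =: \mathcal{F}$, a linear inhomogeneous AdS wave equation for $\psi$ with \emph{zero} data, whose source $\mathcal{F}$ is schematically (difference of metric coefficients) $\times$ (two derivatives of $\hat\phi_2$). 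One then runs the $r$-weighted energy estimate of \cite{Holzegelwp} / Proposition \ref{lineartheory} for $\psi$ on $\Delta_{\delta,u_0}$; the homogeneous part closes with a $\delta$-small loss as before, and the inhomogeneity must be absorbed. The difficulty is that $\mathcal{F}$ contains the terms $(g_{1}^{uv}-g_{2}^{uv})\,\partial_u\partial_v\hat\phi_2$ and $\big(\tfrac{(r_1)_u}{r_1}-\tfrac{(r_2)_u}{r_2}\big)\partial_v\hat\phi_2$ etc., and the first of these involves a \emph{second} derivative of $\hat\phi_2$, one more than is controlled in the $H^1_{AdS}$ energy — this is precisely the "loss of derivative" flagged in the introduction. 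The resolution, as indicated there, is to use the \emph{commuted} estimate: the bracket $g_1^{uv}-g_2^{uv}$ and the corresponding Christoffel-type differences decay strongly in $r$ (they are $O(r^{-3})$ or better, being controlled by $d_{\tir}$ and $d_{\tio}$ times such a weight, as one reads off from \eqref{metric:ebcb} and Remark \ref{coc}), so the dangerous second-derivative term can be integrated by parts in the energy identity, moving one derivative onto the (strongly decaying, $d_w$-controlled) coefficient; the remaining $\partial_u\hat\phi_2$ or $\partial_v\hat\phi_2$ is then estimated pointwise by \eqref{ineq:lphi} with an $r$-weight to spare, and the whole source term is bounded by $C_b\delta^{s/8}\cdot d_w(e_1,e_2)\cdot N_\delta$. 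The improved weighted estimate on $T(\phi)$ from Proposition \ref{lineartheory} enters here exactly to give the surplus $r$-decay needed after the integration by parts, which is why one only gets the contraction in the weaker norm $d_w$ (no $T$-derivative of $\phi$, no second derivatives of $\tir$).

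Assembling the pieces: all metric/mass/field difference terms are bounded by $\big(C\delta' + C_b\delta^{s/8}\big)\,d_w(e_1,e_2)$, and the boundary matching on $u=v$ is automatic since all hatted quantities share the same boundary behaviour by the Lemma following the definition of $\Phi$. Choosing first $\delta'$ small (via Lemma \ref{lem:btd}, which fixes how small $\delta$ must be for the data estimates to hold with that $\delta'$) and then $\delta$ small enough — depending only on $\mathcal{A}[\mathcal{N}_1]$, $N_{inv}[\bar r,\bar\phi,\mathcal{N}_1]$ and $b$ — makes the total constant $<\tfrac12$, so $\Phi$ is a $d_w$-contraction on $B_{\mathcal{C},b}$. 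Since $\Phi$ maps $B_{\mathcal{C},b}$ into itself (previous Lemma) and $\overline{\mathcal{B}}_{w,b}$ is a complete metric space for $d_w$ (being a closed subset of a product of complete spaces, with the $C^0$- and $H^1_{AdS}$-norms all complete), the Banach fixed point theorem yields a unique fixed point in $\overline{\mathcal{B}}_{w,b}$. One final remark I would add: the fixed point a priori lies only in the $d_w$-closure, but the uniform estimates of Section \ref{se:ue} show it in fact lies in $B_{\mathcal{C},b}$ itself (a $d_w$-limit of a sequence with uniformly bounded stronger norms retains those bounds by lower semicontinuity), so the full regularity claimed in Proposition \ref{renowp} is recovered, and in particular the fixed point is unique in $B_{\mathcal{C},b}$, not merely in $\overline{\mathcal{B}}_{w,b}$.
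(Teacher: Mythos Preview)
Your overall architecture (metric/mass differences straightforward; focus on $\psi=\hat\phi_1-\hat\phi_2$) is right, but you have misidentified both the obstruction and its cure in the $\psi$-estimate. There is no problematic $\partial_u\partial_v\hat\phi_2$ in the source: writing both wave equations in null-coordinate form $\partial_u\partial_v\hat\phi_i+\frac{(r_i)_u}{r_i}\partial_v\hat\phi_i+\frac{(r_i)_v}{r_i}\partial_u\hat\phi_i+\frac{a}{2l^2}(\Omega^i_{aux})^2\hat\phi_i=0$ and subtracting, the second-order terms cancel, leaving a source $\mathcal{E}$ containing only $\hat\phi_2$, $\partial_u\hat\phi_2$, $\partial_v\hat\phi_2$ multiplied by differences of $\frac{r_u}{r}$, $\frac{r_v}{r}$ and $\Omega_{aux}^2$. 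More seriously, your claim that these coefficient differences are $O(r^{-3})$ is false: since $r_u=-\tir_u(1-\mu_M)\sim r^2$, one finds $|r_{1,u}-r_{2,u}|\le C_b\, r^2\,|\tir_{1,u}-\tir_{2,u}|$, so the Christoffel-type difference $\frac{(r_1)_u}{r_1}-\frac{(r_2)_u}{r_2}$ \emph{grows} like $r\cdot d_w$, and the $\Omega_{aux}^2$-difference like $r^2\cdot d_w$. The true obstruction is therefore not a loss of derivative on $\hat\phi_2$ but an excess of $r$-weight: the spacetime error $\int\!\!\int T(\psi)\,\mathcal{E}\,r^2$ cannot be closed against $\|\psi\|_{H^1_{AdS}}$ by a naive Cauchy--Schwarz, because the only control available on $T\psi$ at this level is the $H^1_{AdS}$-norm of $\psi$ itself.

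The paper's integration by parts is different from the one you propose: one moves $T$ off $T(\psi)$ onto $\mathcal{E}\,r^2$, exploiting that $\psi$ itself lies in $L^2(r^4)$ (the zeroth-order part of the energy), which recovers exactly the missing power of $r$, while $T$ acting on $\mathcal{E}$ does not worsen the $r$-weight. The price is that $T(\mathcal{E})$ now contains second derivatives of $\tir_1-\tir_2$ (e.g.\ $\partial_u T(\tir_1-\tir_2)$), which are \emph{not} controlled by $d_w$; these are removed by a \emph{second} integration by parts in $u$ or $v$, after which only $\rho^{-1}T(\tir_1-\tir_2)$ survives --- and that \emph{is} part of $d_w$. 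Any $\partial_u\partial_v\hat\phi_2$ produced in this second step is substituted out via the wave equation. The commuted $H^1_{AdS}$-bound on $T\hat\phi_2$ from Proposition~\ref{lineartheory} enters precisely here, to estimate the $(\partial_v T\hat\phi_2)$-type factors generated by the first integration by parts, not in the role you assign it.
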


\begin{proof}

Let $q=(\tir,\tio,\phi,\varpi)$ and $q'=(\tir',\tio',\varpi',\phi')$ be in $B_{\mathcal{C},b}$. Let $\hat{q}=\Phi(q)$ and $\hat{q'}=\Phi(q')$. We want to prove that: 
$$d_{w} ( \hat{q},\hat{q'} ) \le \delta \cdot C_b \cdot d_{w} (q,q'),$$
where $C_{b}$ is a constant depending only on $b$.
The estimates for the variables $\tir,\Omega,\varpi$ will carry through for differences as before without additional difficulty. Hence we focus on the estimate for $\psi=\widehat{\phi}-\widehat{\phi'}$.

Denote by $g$, $g^\prime$ the asymptotically AdS-metric associated to $q$ and $q'$ by Lemma \ref{coads}. Let us also write $\Omega^2_{aux} = -4 \frac{\tilde{r}_u \tilde{r}_v}{1-\mu} \left(1-\mu_M\right)^2$. The difference $\psi$ satisfies
\begin{eqnarray} \label{eq:psi}
\Box_g \psi - \frac{2a}{l^2} \psi= \frac{4}{\left(\tilde{\Omega}_{aux}\right)^2}\ \Ecal 
\end{eqnarray}
where 
\begin{align}
\Ecal = \widehat{\phi'}_v \left(\frac{{r}^\prime_u}{{r}^\prime} - \frac{{r}_u}{{r}} \right) 
+  \widehat{\phi'}_u \left(\frac{{r}^\prime_v}{{r}^\prime} - \frac{{r}_v}{{r}} \right) 
+ \frac{a}{2l^2} \widehat{\phi'}\left(\left(\Omega_{aux}^\prime\right)^2 - \left(\Omega_{aux}\right)^2\right)  \nonumber \\
= \widehat{\phi'}_v \frac{1}{{r}^\prime} \left( {r}_u^\prime - {r}_u  \right) 
- \widehat{\phi'}_v {r}_u \left( \frac{1}{r} -\frac{1}{{r}^\prime} \right) 
+ \widehat{\phi'}_u \frac{1}{{r}^\prime} \left( {r}_v^\prime - {r}_u  \right) 
- \widehat{\phi'}_u {r}_v \left( \frac{1}{r} -\frac{1}{{r}^\prime} \right) 
\nonumber \\
- \frac{2a}{l^2} \widehat{\phi'} \left[\tilde{r}_u^\prime \tilde{r}_v^\prime \left(\frac{\left(1-{\mu}_M^\prime\right)^2}{1-\mu^\prime} - \frac{\left(1-{\mu}_M\right)^2}{1-\mu}\right) + \frac{\left(1-\mu_M\right)^2}{1-\mu} \Big[\tilde{r}_u^\prime \left(\tilde{r}_v^\prime -\tilde{r}_v \right)+ \tilde{r}_v  \left(\tilde{r}_u^\prime - \tilde{r}_u \right) \Big] \right] \nonumber
\end{align}
which we will decompose as $\mathcal{E} = \mathcal{E}_1 + \mathcal{E}_2 + \mathcal{E}_3$.
\begin{align}
\mathcal{E}_1 &=  \widehat{\phi'}_v \frac{1}{{r}^\prime} \left( {r}_u^\prime - {r}_u  \right)  +  \widehat{\phi'}_u \frac{1}{{r}^\prime} \left( {r}_v^\prime - {r}_u  \right)  \nonumber \\
& \ \ \ - \frac{2a}{l^2} \widehat{\phi'}  \frac{\left(1-\mu_M\right)^2}{1-\mu} \Big[\tilde{r}_u^\prime \left(\tilde{r}_v^\prime -\tilde{r}_v \right)+ \tilde{r}_v  \left(\tilde{r}_u^\prime - \tilde{r}_u \right) \Big] 
\nonumber 
\end{align}
\begin{align}
\mathcal{E}_2 &=  - \widehat{\phi'}_v {r}_u \left( \frac{1}{r} -\frac{1}{{r}^\prime} \right) - \widehat{\phi'}_u {r}_v \left( \frac{1}{r} -\frac{1}{{r}^\prime} \right) \nonumber \\
& \ \ \ -2 \frac{a}{l^2} \widehat{\phi'} \frac{\tilde{r}_u^\prime \tilde{r}_v^\prime}{\left(1-\mu^\prime\right)\left(1-\mu\right)} \Bigg[ \frac{r^2 \left(r^\prime\right)^2}{l^6} \left(r+r^\prime\right) + P_3\left(r,r'\right) \Bigg] \left(r-r^\prime\right)
\end{align}
\begin{align}
\mathcal{E}_3 =  -2 \frac{a}{l^2} \widehat{\phi'}\frac{\tilde{r}_u^\prime \tilde{r}_v^\prime}{\left(1-\mu^\prime\right)\left(1-\mu\right)}  \left[\frac{2 r^4}{l^4 r^\prime} + P_2(r,r') \right] \left(\varpi - \varpi^\prime\right),
\end{align}
where $P_i(r,r')$ satisfies $|\frac{P_i(r,r')}{r^i}|<C_b$.
The energy estimate applied to $\psi$ satisfying (\ref{eq:psi}) yields:
\begin{eqnarray} 
\int^u_{v} \left( r^2 \psi^2_u +r^4 \psi^2\right)(u',v) du'+\int^u_0 \left( r^2 \psi^2_v +r^4 \psi^2 \right)(u,v') dv' \le \nonumber \\
C \int^u_{u_0} \left( r^2 \psi^2_u +r^4 \psi^2\right)(u',u_0) du'+ C \int_{u_0}^v \int_{v'}^u T\left(\psi\right) \mathcal{E} r^2 du' dv', \label{es:lipspi} 
\end{eqnarray}
where $C>0$ is a constant depending only on $a$, $l$, $M$ and $b$. It follows that we only need to estimate the spacetime term in the above.

We first note that for $\mathcal{E}_3$:
\begin{align}
\left|\int_{u_0}^v \int_{v'}^u  T\left(\psi\right) \mathcal{E}_3 r^2 du' dv'\right| \leq \delta \sqrt{\sup_u \int^{v}_{u_0} r^2 \left(\partial_v\psi\right)^2 dv^\prime}\sqrt{\sup_{u} \int^{v}_{u_0} \left( \mathcal{E}_3\right)^2 r^2 dv^\prime} \nonumber \\
+ \delta \sqrt{\sup_{u_0<v^\prime<v} \int_{v^\prime}^u r^2 \left(\partial_u\psi\right)^2 \left(u, v^\prime \right) du^\prime}\sqrt{ \sup_{u_0<v^\prime<v} \int_{v^\prime}^u \left( \mathcal{E}_3\right)^2 r^2 \left(u, v^\prime \right) du^\prime} \nonumber \\
\leq C_{b}\delta d_{w} \left(q,q^\prime\right) \sup_{\Delta_{\delta,u_0}} |\varpi - \varpi^\prime| 
\leq \delta \cdot C_{b} \cdot d_{w}^2 \left(q,q^\prime\right), \nonumber
\end{align}
where $C_{b}$ is a constant depending only on $b$.

The terms $\mathcal{E}_1$ and $\mathcal{E}_2$ are more difficult to estimate. We note
\begin{lemma}
We have the estimates
\begin{align} \label{tyu}
C_b^{-1} | \tilde{r} - \tilde{r}^\prime|  \leq \Big|\frac{1}{r} - \frac{1}{r^\prime}\Big| \leq C_b | \tilde{r} - \tilde{r}^\prime| 
\end{align}
\begin{align}
| r_u - r_u^\prime | \leq C_b \cdot r^2 \cdot | \tilde{r}_u - \tilde{r}^\prime_u |  + C_b  \cdot \delta \cdot r \cdot  | \tilde{r} - \tilde{r}^\prime|  \\
| r_v - r_v^\prime | \leq C_b \cdot r^2 \cdot | \tilde{r}_v - \tilde{r}^\prime_v |  + C_b  \cdot \delta \cdot r \cdot  | \tilde{r} - \tilde{r}^\prime|
\end{align}
\end{lemma}
\begin{proof}
Write
\begin{align}
\left( \tilde{r} - \tilde{r}^\prime \right)_u = \left( \frac{1}{r} - \frac{1}{r^\prime}\right)_u \left\{-\frac{r^2}{1-\mu_M} \right\} + \nonumber \\
 \left\{ \frac{r_u^\prime}{\left(r^\prime\right)^2} \frac{1}{\left(1-\mu_M\right)\left(1-\mu^\prime_M\right)}\left[\left(r+r^\prime\right) r r^\prime - 2M \left(\left(r^\prime\right)^2 + r r^\prime + r^2\right)  \right] \right\}\left( \frac{1}{r} - \frac{1}{r^\prime} \right) \nonumber
\end{align}
and note that the first curly bracket is uniformly bounded, while the second is $\delta$-small (it decays like $\frac{1}{r}$). Estimate (\ref{tyu}) is then easily derived. The remaining estimates follow after revisiting the identity above and using the bound (\ref{tyu}).
\end{proof}
These estimates will be used freely for the differences appearing in the $\mathcal{E}_i$. With the help of the Lemma one observes (simply by counting powers) that the $r$-weights in the spacetime terms arising from $\mathcal{E}_1$ and $\mathcal{E}_2$ are too strong to close the estimate by naively estimating $T\psi$ via the $H^1_{AdS}$-norm of $\psi$ (which is what we are estimating on the left hand side at this stage). This could be resolved using the $H^1_{AdS}$-norm for $T\psi$, which however, is not available at this point. The resolution is to integrate by parts the $T$ derivative, since an improved estimate is available for $\psi$ itself, while the $T$-derivative falling on $\mathcal{E}_{1,2}$ will not lead to a loss in $r$-weight.
Let us consider the spacetime integrals ($i=1,2$)
\begin{align}
I_i \left(u,v\right) = \int_{u_0}^{v} dv' \int_{v^\prime + \eta}^u du^\prime \ T\left(\psi\right) \mathcal{E}_{i} \ r^2 \left(u^\prime, v^\prime\right) \, ,
\end{align}
for $\left(u,v\right) \in \Delta \cap \{ u\geq v+ \eta\}$ and $\eta>0$, the latter region
converging to the original spacetime integrals as $\eta \rightarrow 0$.
We will first move the $T$ derivative onto $\mathcal{E}_{i}$. 
The boundary term coming from $u=v+\eta$ vanishes, since $T$ is tangent to the boundary.
The term on $v=u_0$ (the data) also vanishes, because $\psi = 0$ there. The other boundary terms can be estimated
\begin{eqnarray}
\left| \int_{u_0}^v \psi \mathcal{E}_{i} r^2 \left(u, \bar{v}\right) d\bar{v} \right| \leq \sqrt{ \int_{u_0}^v \psi^2 r^4 dv} \sqrt{\int_{u_0}^v \mathcal{E}_{i}^2 d\bar{v}} 
\leq d_{w}(q,q')\sqrt{\int_{u_0}^v \mathcal{E}_{i}^2 d\bar{v}} \, . \nonumber
\end{eqnarray}
On the other hand, we have:
$$
\sqrt{\int_{u_0}^v \mathcal{E}_{1}^2 d\bar{v}} \le \delta C_{b}\left( \sup |\tir_u -\tir_u' | + \sup |\tir_v -\tir_v' | + \sup |\log \frac{\tir'}{\tir}|\right) \le \delta C_{b} d_{w}(q,q')
$$
and 
$$
\sqrt{\int_{u_0}^v \mathcal{E}_{2}^2 d\bar{v}} \le \delta C_{b} \sup | 1- \frac{\tir'}{\tir} | \le C_{b} \sup |\log \frac{\tir'}{\tir}| \le \delta C_{b} d_{w}(q,q').
$$
Since the $u$-boundary term can be estimated similarly, both boundary terms are controlled by $\delta \cdot C_{b} \cdot d_{w}(q,q')^2$. 

It remains to estimate the resulting spacetime terms from the integration by parts, which are given, for $i=1,2$, by
\begin{align}
 \int_{u_0}^{v} dv' \int_{v^\prime + \eta}^u du^\prime \ \psi \ T \left(\mathcal{E}_{i}  r^2 \right)\left(u^\prime, v^\prime\right) \, .
\end{align}
The point is to observe that taking a $T$-derivative we will not lose a power in $r$. 
We distinguish the following terms arising from  $T \left(\mathcal{E}_{i}\right)$: Terms which do not involve second derivatives of the $\tilde{r}$-difference and terms which do. The terms which contain only one derivative of the $\tir$-difference can be estimated directly. For instance,
\begin{align}
\left| \int \int \left(T\phi^\prime \right)_v \frac{1}{\tilde{r}^\prime} \left(\tilde{r}_u^\prime - \tilde{r}_u\right)\psi \  r^2 du dv \right| \nonumber \\
\leq \sup |\tilde{r}_u^\prime - \tilde{r}_u| 
\cdot \int du \left(  \sqrt{\int_{u_0}^v \left(\partial_v T\phi^\prime\right) r^2 dv} \sqrt{ \int_{u_0}^v \psi^2 r^4 dv}\right) 
\leq  \delta \cdot d_{w}^2 \left(q,q^\prime\right) \, . \nonumber
\end{align}
In particular, all the terms coming from $\Ecal_2$ may be estimated as above. 
It remains to estimate the terms containing two derivatives of the $\tir$-difference, i.e.
\begin{align}
 \int_{u_0}^{v} d\bar{v} \int_{\bar{v} + \eta}^u d\bar{u} \ \psi \  \mathcal{Y}  r^2 \left(u^\prime, v^\prime\right) \, ,
\end{align}
with
\begin{align}
\mathcal{Y} =\widehat{\phi}^\prime_v \frac{1}{r^\prime} \left(T{r}_u^\prime - T{r}_u\right) + \phi^\prime_u \frac{1}{r^\prime} \left(T{r}_v^\prime - T{r}_v\right) \nonumber \\
-2 \frac{a}{l^2} \widehat{\phi^\prime} \frac{\left(1-{\mu}_M\right)^2}{1-\mu} \Big[\tilde{r}_u^\prime \left(T\tilde{r}_v^\prime -T\tilde{r}_v \right)+ \tilde{r}_v  \left(T\tilde{r}_u^\prime - T\tilde{r}_u \right) \Big].
\end{align}
We write $\mathcal{Y}$ schematically as
$$\mathcal{Y}=\mathcal{Z} \partial_u T( \tir-\tir')+\mathcal{W} \partial_v T( \tir-\tir')
$$
and integrate by parts the $u$ and $v$ derivatives. 
The boundary terms are estimated as before, while a typical spacetime term is given by
\begin{align}
\int \int d\bar{u} \ d\bar{v} \partial_u \left( r^2 \psi \frac{\widehat{\phi'}_v}{r^\prime} \right) T( r -r') = \nonumber\\
\int \int d\bar{u} \ d\bar{v}\left( r^2 \psi_u \frac{\widehat{\phi'}_v}{r'} - r^2 \psi \widehat{\phi'}_v \frac{r'_u}{\left(r^\prime\right)^2}+r^2 \psi \frac{\widehat{\phi'}_{uv}}{r'} + \psi \frac{\widehat{\phi'}_{v}}{r'}2r r_u\right) T( \tir -\tir'). \nonumber
\end{align}
We use the wave equation for $\widehat{\phi'}$ to substitute the second order derivative term $\widehat{\phi}'_{uv}$ by lower order terms. All terms are then of the same strength and estimated in the same way. For instance, for the first term
\begin{align}
\int \int d\bar{u} \ d\bar{v} \  r^2 \psi_u \frac{\widehat{\phi'}_v}{r'} T( r - r') \nonumber \\ 
\leq \sup |\rho \cdot T \left(\tilde{r}-\tilde{r}^\prime\right)| \int \int d\bar{u} \ d\bar{v} \  r^2 \psi_u {\widehat{\phi'}_v}  \leq C_b \cdot \delta \cdot d^2_{w} \left(q,q^\prime\right)
\end{align}
after applying Cauchy-Schwarz and using that $T\left(r-r^\prime\right) \leq C_b r^2 | T \left(\tilde{r} - \tilde{r}^\prime\right) |$.
\end{proof}

\subsection{Conclusion of the proof of Proposition \ref{renowp}}
Proposition \ref{renowp} follows directly from Lemma \ref{lem:fp}. Indeed, from Lemma \ref{lem:fp}, we have shown that $\Phi$ has a fixed point $q=\left(\tilde{r},\tilde{\Omega}, \varpi,\phi\right) \in \overline{\mathcal{B}}_{w,b}$ (recall that $b$ was fixed at the beginning of section \ref{se:ue} and $\delta$ was chosen sufficiently small depending only on $b$). Moreover, $q$ is the limit of a sequence in $\mathcal{B}_{\mathcal{C},b}$. Hence, we have in particular that the derivatives of $\tir$ are Lipschitz continuous, $\tio$ is Lipschitz continuous, $\rho^{ -\frac{2+s}{4}} \phi_u$ is uniformly bounded, $\phi$ is in $C^0(H^1_{AdS})$ and $T(\phi) \in L^\infty(H^1_{AdS})$, which implies that $T(\phi)$ is continuous. 
Integrating in $v$ the transport equation on $r\phi_u$ from the initial data, it follows that $\phi_u$ is actually continuous in $\delt$, which implies that $\varpi_u$ is continuous. Since $\phi_v=T(\phi)-\phi_u$, we then have that $\phi_v$ is continuous.
Looking at the equation on $T(\varpi)_u$, it follows that $T(\varpi)$ and hence $\varpi_v$ is continuous. From the wave equation on $\tir$, one now obtains easily that $\tir_{uu}$, $\tir_{vv}$, $\tir_{uv}$ are continuous. From this, it follows that the spacetime $(\delt,g)$ associated to our solution, with $g$ defined as in \eqref{metric:ebcb}, is asymptotically AdS. Hence, from the linear analysis, one infers that $\phi$ is actually $C^1(H_{AdS})$. Any further regularity is then obtained by standard methods. 
%
 Finally, $\delta$ indeed only depends on $\mathcal{A}\left[ \mathcal{N}_1 \right]$ and $N_{inv}[\bar{r},\bar{\phi},\mathcal{N}_1]$. 
Note in this context that if one choses $\tir_u=1-\mu_M$ initially on $\mathcal{N}_1$, then $\mathcal{A}\left[\mathcal{N}_1\right]=0$ and $\delta$ only depends on the value of $N_{inv}\left[\bar{r}, \bar{\phi},\mathcal{N}_1\right]$.

\section{Proof of Theorem \ref{wellposed}} \label{se:mtp}
Theorem \ref{wellposed} is a consequence of Proposition \ref{renowp} and the following 
\begin{proposition} \label{eq}
For a solution $\left(\tilde{r},\tilde{\Omega}, \phi, \varpi\right)$ of the renormalized system as arising from Proposition \ref{renowp}, the tuple $\Big(\Omega^2 = \tilde{\Omega}^2 \left(1-{\mu}_M\right), r, \phi, \varpi\Big)$ is a solution of the original system in the triangle $\Delta_{\delta,u_0}$ and conversely. 
\end{proposition}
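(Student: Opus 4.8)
The plan is to verify that the map $(\tilde r, \tilde\Omega, \varpi, \phi) \mapsto (\Omega^2 = \tilde\Omega^2(1-\mu_M), r, \phi, \varpi)$, where $r$ is reconstructed from $\tilde r$ via Lemma \ref{lem:asarv}, sends solutions of the renormalized system to solutions of \eqref{cons1}--\eqref{laste}, and that the inverse renormalization (dividing $r_u,r_v,\Omega^2$ by $1-\mu_M$ and integrating for $\tilde r$) sends solutions back. The key algebraic inputs are already in place: equation \eqref{eq:rrt}, i.e.\ $r_u=-\tilde r_u(1-\mu_M)$, $r_v=-\tilde r_v(1-\mu_M)$, holds by Lemma \ref{lem:asarv}; the constraints \eqref{fih}--\eqref{thh} hold by Corollary \ref{cor:progconst}; and $\phi$ solves $\Box_g\phi = \frac{2a}{l^2}\phi$ by construction, which is exactly \eqref{laste} once one checks that the metric $g$ of Lemma \ref{coads} coincides with $-\Omega^2\,du\,dv + r^2 d\sigma_{S^2}$ for the stated $\Omega^2$. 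This last identification follows from \eqref{thh}: $g_{uv} = \frac{4\tilde r_u\tilde r_v}{1-\mu}(1-\mu_M)^2 = (1-\mu_M)\cdot\frac{4\tilde r_u\tilde r_v(1-\mu_M)}{1-\mu} = -(1-\mu_M)\tilde\Omega^2 = -\Omega^2$, using the relation between $\tilde\Omega^2$ and $\tilde r_u,\tilde r_v$ that is built into the definition of $\tilde\Omega$ on the data and propagated by Corollary \ref{cor:progconst}.

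Concretely I would proceed in the following order. First, record that $\Omega^2 = \tilde\Omega^2(1-\mu_M) > 0$ and $r > 0$, $r_u < 0$, $r_v > 0$ on $\Delta_{\delta,u_0}$, so that $(r,\Omega)$ define a bona fide spherically symmetric metric; check that $\varpi$ is indeed the renormalized Hawking mass of this metric, i.e.\ $\varpi = \frac r2(1+\frac{4 r_u r_v}{\Omega^2}) + \frac{r^3}{2l^2}$, by substituting \eqref{eq:rrt} and \eqref{thh} into the right-hand side and simplifying to $\frac r2(1-(1-\mu)) + \frac{r^3}{2l^2} = \frac r2\cdot\frac{2\varpi - r^3/l^2}{r} + \frac{r^3}{2l^2} = \varpi$. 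Second, derive \eqref{cons1}--\eqref{cons2}: these are exactly \eqref{fih}--\eqref{seh} after multiplying through by $(1-\mu_M)^{-1}$ and using $r_u/\Omega^2 = -\tilde r_u/\tilde\Omega^2$, $r_v/\Omega^2 = -\tilde r_v/\tilde\Omega^2$. Third, \eqref{eq:ruv} (equivalently its $\varpi$-form $r_{uv} = -\frac{\Omega^2\varpi}{2r^2} - \frac{\Omega^2 r}{2l^2} + \frac{2\pi r a \Omega^2\phi^2}{l^2}$): differentiate $r_u = -\tilde r_u(1-\mu_M)$ in $v$, use the renormalized $\tilde r_{uv}$-equation together with $\partial_v(1-\mu_M) = \frac{2M}{r^2}r_v + \frac{2r}{l^2}r_v$ and $\varpi_v$ from the renormalized system, and collect terms; this is a direct computation. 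Fourth, \eqref{laste} from $\Box_g\phi = \frac{2a}{l^2}\phi$ together with the metric identification above. Fifth, the $\log\Omega$ equation: from $\Omega^2 = \tilde\Omega^2(1-\mu_M)$ one has $(\log\Omega^2)_{uv} = (\log\tilde\Omega^2)_{uv} + (\log(1-\mu_M))_{uv}$; substitute the renormalized $(\log\tilde\Omega^2)_{uv}$-equation and compute $(\log(1-\mu_M))_{uv}$ using \eqref{eq:rrt}, $\varpi_u,\varpi_v$, and the already-established equations for $r_{uv}$, then match against the claimed right-hand side $\frac{\Omega^2}{4r^2} + \frac{r_u r_v}{r^2} - 4\pi\phi_u\phi_v$ using \eqref{thh}. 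Finally, for the converse direction, given a solution $(\Omega, r, \phi, \varpi)$ of \eqref{cons1}--\eqref{laste}, define $\tilde r$ by integrating $\tilde r_u = -r_u/(1-\mu_M)$ from $\tilde r(v,v)=0$ (consistency of this definition, i.e.\ that $\tilde r_{uv}$ computed this way equals $-\tilde r_v$ differentiated in $u$, follows from \eqref{eq:ruv}), set $\tilde\Omega^2 = \Omega^2/(1-\mu_M)$, keep $\phi,\varpi$, and run the same computations in reverse; uniqueness of the correspondence follows because the renormalization and its inverse are explicit.

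The main obstacle I anticipate is bookkeeping rather than conceptual: the $\log\Omega$ equation and the $r_{uv}$ equation both require carefully expanding $\partial_u(1-\mu_M)$ and $\partial_v(1-\mu_M)$ — each of which, via \eqref{eq:rrt}, reintroduces factors of $\tilde r_u$ or $\tilde r_v$ and of $\varpi$ through $r_u,r_v$ — and then checking that a somewhat long list of rational-in-$r$ terms cancels against the structure built into the renormalized right-hand sides (the terms like $\frac{M-\varpi}{2(1-\mu_M)r^2}(1+\frac{3r^2}{l^2})$ and $\frac{\tilde\Omega^2}{r^3}(\varpi - M)$ were designed precisely so that this works). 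One has to be a little careful that all these manipulations are legitimate, i.e.\ that $1-\mu_M$ does not vanish on $\Delta_{\delta,u_0}$: this is guaranteed for $\delta$ small since $r\to\infty$ near $\mathcal I$ and $1-\mu_M \sim r^2/l^2$ there, and more generally the bound $\mathcal B[\mathcal N_1] \geq l^2/2$ propagates into a short enough triangle. Once these identities are verified pointwise on $\Delta_{\delta,u_0}$, the regularity claims match those already established in Proposition \ref{renowp} and the conclusion of its proof, so nothing further is needed.
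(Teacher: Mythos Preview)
Your proposal is correct and follows essentially the same approach as the paper: invoke Lemma \ref{lem:asarv} for the relations $r_u=-\tilde r_u(1-\mu_M)$, $r_v=-\tilde r_v(1-\mu_M)$, define $\Omega^2=\tilde\Omega^2(1-\mu_M)$, use Corollary \ref{cor:progconst} for the constraints, and then verify by direct computation that the renormalized equations translate into \eqref{cons1}--\eqref{laste} (and conversely). The paper's proof is considerably terser---it simply asserts that ``an easy computation'' establishes the equivalence---whereas you have spelled out the individual verifications (Hawking mass identity, each of the five equations, the metric identification via \eqref{thh}); your more detailed roadmap is entirely in line with what that computation entails.
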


\begin{proof}


From Lemma \ref{lem:asarv}, we have:
$$
r_v = \tir_v (1-\mu_M), \quad r_u = \tir_u(1-\mu_M).
$$
Let $\Omega^2=(1-\mu_M)\tio^2$. An easy computation then shows that if the equations of Proposition \ref{renowp} (and hence the equations of Corollary \ref{cor:progconst}) hold, then the original Einstein scalar-field equations \eqref{cons1}-\eqref{laste} hold for the variables $\left(\Omega^2, r, \phi, \varpi\right)$. The converse statement is obtained similarly.
\end{proof}
We can finally conclude the proof of Theorem \ref{wellposed}
\begin{proof}(Theorem \ref{wellposed})
We have a solution to the equations by Proposition \ref{eq}. Moreover,
since the renormalized solution we constructed lives in the ball $B_{\mathcal{C},b}$, we obtain automatically the regularity stated for $\left(\Omega^2, r, \phi, \varpi\right)$ in the first part of Theorem \ref{wellposed}, as well as the statement about the time of existence of the solution (item $1$). Item $2$ of Theorem \ref{wellposed} follows from Lemma \ref{coads}. Finally, to establish item $3$ we go to the coordinate system in which $\tilde{r}_u = -\tilde{r}_v = \frac{1}{2}$ on $\mathcal{I}$ (as we did in Lemma \ref{coads}). This yields all the properties of an asymptotically AdS initital data set uniformly on any fixed $v=const$ rays, except for the pointwise estimate on $\phi_{uu}$, which was assumed for the construction of the data. To retrieve this for the solution, compute first, setting $A_0:= \frac{1}{r_u} \partial_u \left( \frac{\phi_u}{r_u}\right)$:
\begin{eqnarray}
\partial_v A_0 &=& -4 \kappa \left( \frac{\varpi }{r^2} + \frac{r}{l^2}- \frac{4 \pi a \phi^2}{l^2} + \frac{1-\mu }{4r} \right) A_0 \nonumber \\
&&\hbox{}+ \frac{2}{r^2} \phi_v + \frac{2a \kappa}{l^2} \frac{\phi_u}{r_u} \nonumber \\
&&\hbox{}- 8\pi r \frac{\kappa a}{l^2} \phi \left( \frac{\phi_u}{r_u} \right)^2 - \frac{2 \kappa a \phi }{l^2 r}\nonumber \\
&&\hbox{}+ \frac{\phi_u}{r_u} \left( \frac{2 r_v}{r^2} - \frac{1}{r r_u} \partial_u \left(r \frac{ r_{uv}}{r_u} \right)  \right), \nonumber \\
&=& - \rho_0 A_{0}+ B_{0},
\end{eqnarray}
with $\rho_0= -4 \kappa \left( \frac{\varpi }{r^2} + \frac{r}{l^2}- \frac{4 \pi a \phi^2}{l^2} + \frac{1-\mu }{4r} \right)$ and \begin{eqnarray}
B_0&=& \frac{2}{r^2} \phi_v + \frac{2a \kappa}{l^2} \frac{\phi_u}{r_u} 
- 8\pi r \frac{\kappa a}{l^2} \phi \left( \frac{\phi_u}{r_u} \right)^2 - \frac{2 \kappa a \phi }{l^2 r}\nonumber \\
&&\hbox{}+ \frac{\phi_u}{r_u} \left( \frac{2 r_v}{r^2} - \frac{1}{r r_u} \partial_u \left( r \frac{r_{uv}}{r_u} \right)  \right). \nonumber
\end{eqnarray}
Let us define $A_{n}= r^n A_0$, $B_n= r^n B_0$. In this notation, we are looking for uniform bounds on $A_{7/2}(u,v)$.
We have:
\begin{eqnarray} \label{eq:anv}
\partial_v A_{n} = -\rho_{n} A_{n}+ B_{n} ,
\end{eqnarray}
with $\rho_n= \rho_0 -\frac{n}{r}\kappa (1-\mu)$. In view of the bounds already derived on our solution, we easily see
$$
\rho_n \ge 0
$$ for all $n \le 4$. Integrating \eqref{eq:anv}, we obtain, for $n=4$:
$$
| A_4 | (v) \le C \left( A_4(u_0) + \int^v_{u_0} r^{1/2} \left( r^{3/2} |\phi_{v}| +r^{3/2} |\phi_{u}| + r^{5/2} |\phi| \right) dv'\right),
$$
for some uniform constant $C$. Using Cauchy-Schwarz and the energy estimates to control the $L^2$ integral on $\phi_v$, $\phi$ and $\phi_u= T(\phi)-\phi_v$, one has:
$$
| r^{1/2}(u,v) A_{7/2} | \le C  \left(  r^{1/2}(u,u_0) A_{7/2}(u_0) + r^{1/2}(u,v) N(D) \right).
$$
Since $r(u,v) \ge r(u,u_0)$, we obtain a uniform bound on $A_{7/2}$ concluding the proof. Note that the proof only requires bounds for first derivatives of $\phi$.
\end{proof}

\section{Applications}
\subsection{Geometric uniqueness} \label{subse:md}
A well-known problem concerning geometric initial-value-boundary problems is the issue of geometric uniqueness. To prescribe the boundary conditions, one typically first needs to make a choice of gauge. The solution which is constructed may then a priori depend on this choice. See \cite{Friedrichbv} for a review of this problem for the Einstein equations. In our context, one may expect geometric uniqueness from \cite{FriedrichAdS}. In fact, since the boundary conditions are all stated with respect to geometric quantities, namely $\phi$ and $\varpi$ and the area radius, the only choice that we have made a priori is that of the $(u,v)$-coordinates.  This observation leads to a proof of the geometric uniqueness statement of Corollary \ref{cor:uniqueness}, which we present in this section.

First, we introduce the following definition:

\begin{definition} \label{def:dev}
Let $\mathcal{N}$ be an interval of the form $\mathcal{N}=(u_0,u_1]$. Given a $\mathcal{C}^{1+k}_{a,M}(\mathcal{N})$ asymptotically AdS data set $\mathcal{D}=(\bar{r}, \bar{\phi})$, a development of $\mathcal{D}$ is a triple $(\mathcal{M},g,\phi)$ such that $(\mathcal{M},g)$ is a $C^1$, $3+1$ Lorentzian manifold, $\phi$ is a $C^1$ function on $\mathcal{M}$ and the following hold
\begin{enumerate}
\item{$(\mathcal{M},g,\phi)$ is a spherically-symmetric solution to the Einstein-Klein-Gordon system \eqref{cons1}-\eqref{laste}, with area-radius $r$ being a $C^2$ function such that\footnote{We are excluding here the case where $r=0$ on some axis of symmetry for simplicity. This could nonetheless be handled as in the asymptotically flat case.} $r>0$.}
\item{The quotient manifold $\mathcal{Q}=\mathcal{M}/\mathcal{S}^2$ with its induced Lorentzian metric is a manifold with boundary $\mathcal{N}_{\mathcal{Q}}$ which is a null ray diffeomorphic to a subset of $\mathcal{N}$ of the form $(u_0,u_0+\epsilon)$, for some $\epsilon > 0$. If $\psi$ is such a diffeomorphism, $\psi: \mathcal{N}_{\mathcal{Q}} \rightarrow (u_0,u_0+\epsilon)$, then $\phi \circ \psi=\bar{\phi}|_{(u_0,u_0+\epsilon)}$ and $r \circ \psi=\bar{r}|_{(u_0,u_0+\epsilon)}$. \label{def:devdata}}
\item 
$\mathcal{Q}$ admits a system of global bounded null coordinates $(u,v)$ and hence may be conformally embedded into a bounded subset $\mathbb{R}^{1,1}$. The boundary of $\mathcal{Q}$ with respect to the topology of $\mathbb{R}^{1,1}$ is composed of a future boundary\footnote{Recall that the future boundary of $\mathcal{Q}$ is the set of points $p \in \partial \mathcal{Q}$, such that there exists no $q \in \mathcal{Q}$ such that $p \in J^-(q)$, where $J^-$ and $\partial \mathcal{Q}$ refer to the causal relations and the topology of $\mathbb{R}^{1+1}$.}, a past boundary which coincides with $\overline{\mathcal{N}_{Q}}$ and a $C^2$ timelike boundary $\mathcal{I}$.  As the boundary $\mathcal{I}$ is approached, the asymptotics of Section \ref{se:aads} hold for the metric, i.e. $(\mathcal{M},g)$ is asymptotically AdS in the sense of \cite{Holzegelwp}.
\item The following modified notion of global hyperbolicity holds: all past directed inextendible causal curves in $\mathcal{Q}$ either intersect $\mathcal{N}_{Q}$ or have limit endpoint on $\mathcal{I}$.
\item $\mathcal{T} \phi = -\frac{1-\mu}{r_u} \partial_u \phi + \frac{1-\mu}{r_v} \partial_v \phi \in C^0_u(H^1_{v,loc})\cap C^0_v(H^1_{u,loc})$, i.e. $\mathcal{T}\phi$ is continuous as a function of $u$ (respectively $v$) with image in the set of $H^1_{loc}$ function of $v$ (respectively $u$). Moreover, the field $\phi$ satisfies the following integrability conditions. 
For each constant $v$-ray, $R_v$ in $\mathcal{Q}$ we have 
\begin{align}
\int_{R_v} r^2 \left[  \frac{r^2}{|r_u|} \phi_u^2+ |r_u|\phi^2 + \frac{r^2}{|r_u|} \left(\partial_u \mathcal{T}\phi\right)^2+ |r_u|\left(\mathcal{T} \phi\right) ^2 \right]d\bar{u} < \infty \nonumber
\end{align}
and for each constant $u$-ray, $R_u$ in $\mathcal{Q}$ we have
\begin{align}
\int_{R_v} r^2 \left[  \frac{|1-\mu|r^2}{|r_v|} \phi_v^2+ |r_v|\phi^2 + \frac{|1-\mu|r^2}{|r_v|} \left(\partial_v \mathcal{T}\phi\right)^2+ |r_v|\left(\mathcal{T} \phi\right) ^2 \right]d\bar{v} < \infty. \nonumber
\end{align}

\item The renormalized Hawking mass $\varpi$ converges to $M$ on $\mathcal{I}$.
\end{enumerate}
\end{definition}
Hence, it follows from the above definition that the Penrose diagram\footnote{The reader not familiar with the usage of such diagrams will find a good introduction in Appendix $C$ of \cite{DafRod}. In particular, our conventions agree with that of \cite{DafRod} and \cite{CamJon}.} of a development of (characteristic) asymptotically AdS data takes the following form:

\[
\begin{picture}(0,0)%
\includegraphics{pendiadev.pstex}%
\end{picture}%
\setlength{\unitlength}{2171sp}%
\begingroup\makeatletter\ifx\SetFigFontNFSS\undefined%
\gdef\SetFigFontNFSS#1#2#3#4#5{%
  \reset@font\fontsize{#1}{#2pt}%
  \fontfamily{#3}\fontseries{#4}\fontshape{#5}%
  \selectfont}%
\fi\endgroup%
\begin{picture}(3473,4066)(4718,-5844)
\put(8176,-3886){\makebox(0,0)[lb]{\smash{{\SetFigFontNFSS{7}{8.4}{\rmdefault}{\mddefault}{\updefault}{\color[rgb]{0,0,0}$\mathcal{I}$}%
}}}}
\put(5551,-3961){\makebox(0,0)[lb]{\smash{{\SetFigFontNFSS{7}{8.4}{\rmdefault}{\mddefault}{\updefault}{\color[rgb]{0,0,0}$\mathcal{N}_\mathcal{Q}$}%
}}}}
\end{picture}%

\]

Given two developments $(\mathcal{M}_i,g_i,\phi_i)$, $i=1,2$, of a $\mathcal{C}^{1+k}_{a,M}(\mathcal{N})$ asymptotically AdS data set, we say that $(\mathcal{M}_1,g_1,\phi_1)$ is an extension of  $(\mathcal{M}_2,g_2,\phi_2)$ if there exists an isometric embedding $\psi$ of $(\mathcal{M}_2,g_2)$ into  $(\mathcal{M}_1,g_1)$, which maps $\phi_2$ to  $\phi_1$ and such that $\psi_1^{-1} \circ \psi \circ \psi_2$, is the identity on $\mathcal{N}$, where $\psi_i$, $i=1,2$ are the diffeormorphisms mapping $\mathcal{N}_{\mathcal{Q}_i}$ to $\mathcal{N}$ as in the above definition.

This definition makes the set of developments a partially ordered set. 
The maximal development is then by definition a maximal element, i.e.~a development which does not admit any extension. 

We now turn the proof of Corollary \ref{cor:uniqueness}. 
\begin{proof}We shall prove any development must agree, at least locally, with the development given by Theorem \ref{wellposed}. Given two developments, it then follows that they agree locally with the development given by Theorem \ref{wellposed} and hence are extension of a commom development. 

Let us thus be given a development $(\mathcal{M},g,\phi)$ of a $\mathcal{C}^{1+k}_{a,M}(\mathcal{N})$ asymptotically AdS data set, with $\mathcal{N}=(u_0,u_1]$. Let $\mathcal{N}_{Q}$ be the initial null ray as in Definition \ref{def:dev}. By part \ref{def:devdata} of the above definition, we may apply a change of $u$-coordinate so that $\mathcal{N}_{Q}$ is identified with a subset $(u_0,u_0+\epsilon)$ of $\mathcal{N}$. In this coordinate system, we have $r_u=\bar{r}_u$, $\phi_u=\bar{\phi}_u$, etc. on $\mathcal{N}_{Q}$. Since $\mathcal{I}$ is a $C^2$ timelike boundary, there exists a function $f$ with $f'>0$ such that $u=f(v)$ on $\mathcal{I}$. Let us consider the $v$ change of coordinates $V=f(v)$. Since this is $C^2$-coordinate transformation, it does not affect the finiteness of the invariant norms of $\phi$ and the metric is at least as regular as in Definition \ref{def:dev} in the new coordinate system. Hence, the variables $(r,\phi,\varpi,\Omega)$ of the development (and the associated renormalized variables), satisfy the Einstein-Klein-Gordon system in the coordinate system used in Proposition \ref{renowp} and have the regularity and boundary conditions of Proposition \ref{renowp}. By uniqueness, the solution must agree in the intersection of their domain of definitions, which contains in particular a neighborhood of $\mathcal{N} \cap \mathcal{I}$. 
\end{proof}
\subsection{An extension principle near infinity}
Assuming that certain uniform bounds hold in a triangle with the boundary being $\mathcal{I}$, we conclude that the solution can be extended to a larger triangle:
\begin{proposition} \label{pro:epin}
Let $\Psi=\left(r,\phi,\Omega,\varpi\right)$ be a solution of the EKG system in a triangular region $\Delta_{d,u_0}$ as arising by Theorem \ref{wellposed} from $\mathcal{C}^{2}_{a,M}$ initial data. Assume
\begin{align}
\lim_{v \rightarrow u_0+d} r \left(u_0+d,v\right) = \infty \, .
\end{align}
Suppose that there exist constants $0<c<\frac{1}{l^2}$, $C>0$ such that
\begin{align} \label{zeb}
\min \left(\inf_{\Delta_{d,u_0}}\Big|\frac{1-\mu}{r^2}\Big|, \inf_{\Delta_{d,u_0}} \Big| \frac{1-\mu_M}{r^2} \Big| \right) > c
\end{align}
\begin{align} \label{fib}
\sup_{\Delta_{d,u_0}} \Big| r^3 \left(\frac{r_u}{1-\mu} +\frac{1}{2}\right) \Big| + \sup_{\Delta_{d,u_0}} \Big| r^2 \partial_u \left(\frac{r_u}{1-\mu}\right) \Big| < C \, ,
\end{align}
and such that for any constant $v$-ray, $R_v$, contained in $\Delta_{d,u_0}$ and intersecting $\mathcal{I}$
\begin{align} \label{seb}
 \int_{R_v} r^2 \left[  \frac{r^2}{|r_u|} \phi_u^2+ |r_u|\phi^2 + \frac{r^2}{|r_u|} \left(\partial_u \mathcal{T}\phi\right)^2+ |r_u|\left(\mathcal{T} \phi\right) ^2 \right]du' + \Big| r^{\frac{5+s}{2}} \frac{\phi_u}{r_u} \Big|  < C \, .
\end{align}
Then there exists a $\delta^\star>0$ such that the solution $\Psi$ can be extended to the strictly larger triangle $\Delta_{d+\delta^\star,u_0}$.
\end{proposition}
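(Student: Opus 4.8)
The plan is to reduce the extension statement to a renewed application of Theorem \ref{wellposed}, the key point being that the hypotheses \eqref{zeb}--\eqref{seb} are exactly what is needed to guarantee that the trace of $\Psi$ on a suitable outgoing null ray near $\mathcal{I}$ is a $\mathcal{C}^{1+k}_{a,M}$ asymptotically AdS data set \emph{with uniformly controlled invariant norm and $\mathcal{A}$-quantity}, so that the $\delta$ produced by Theorem \ref{wellposed} is bounded below independently of how close to $\{v=u_0+d\}$ we start. First I would fix a value $v_\star < u_0 + d$ close to $u_0+d$ and consider the ``truncated'' outgoing ray $\mathcal{N}_\star = \{ v = v_\star \} \cap \Delta_{d,u_0}$, together with the incoming ray data on $\{u = u_0 + d\} \times [u_0, v_\star]$. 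By item $3$ of Theorem \ref{wellposed}, the trace of $\Psi$ on $\mathcal{N}_\star$ is already an asymptotically AdS data set; the new input is that \eqref{zeb} gives a uniform lower bound on $\mathcal{B}[\mathcal{N}_\star]$ (so $\mathcal{N}_\star \subset (\mathcal{N}_\star)_1$ in the notation of \eqref{decomp}, i.e.\ no truncation is lost), while \eqref{fib} bounds $\mathcal{A}[\mathcal{N}_\star]$ and \eqref{seb} bounds $N_{inv}[\,\cdot\,, \mathcal{N}_\star]$ — all by constants depending only on $c$, $C$, $M$, $l$, $a$, \emph{not} on $v_\star$.

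Next I would invoke Theorem \ref{wellposed} (in its characteristic-boundary form, cf.\ Corollary \ref{prop:ile}) with this data: one obtains a solution in a triangle $\Delta_{\delta_0, v_\star'}$ attached at $\mathcal{I}$ above $v = v_\star$, where $\delta_0>0$ depends only on the bounds just listed and hence is \emph{independent of $v_\star$}. Simultaneously, away from $\mathcal{I}$ — on the part of the ray $\{u=u_0+d\}$ where $r$ stays bounded above and below, which is ensured because $r(u_0+d,\cdot)$ is continuous and tends to $\infty$ only at $v \to u_0+d$ — standard (non-AdS) local existence for the characteristic problem for \eqref{cons1}--\eqref{laste} gives a solution in a strip of uniform width, as in the proof of Corollary \ref{prop:ile}. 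Patching these two solutions along a common outgoing ray (using finite-speed-of-propagation / domain-of-dependence for the characteristic system, and the uniqueness part of Theorem \ref{wellposed} near $\mathcal{I}$), and then invoking the uniqueness statement of Corollary \ref{cor:uniqueness} to see that the new solution agrees with $\Psi$ on the overlap, one extends $\Psi$ past $\{v = v_\star\}$ by a fixed amount. Taking $v_\star \to u_0 + d$ and using that the gain $\delta_0$ does not degenerate, one concludes that $\Psi$ extends to $\Delta_{d+\delta^\star, u_0}$ for some $\delta^\star > 0$ (one may take $\delta^\star$ a fixed fraction of $\delta_0$).

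The main obstacle is the \emph{uniformity} of the existence time: one must check carefully that each of \eqref{zeb}, \eqref{fib}, \eqref{seb} translates, via the construction-of-data Proposition \ref{prop:consdata} and Lemma \ref{le:nr}, into a bound on the quantities $\mathcal{A}[\mathcal{N}_\star]$ and $N_{inv}[\bar r, \bar\phi, \mathcal{N}_\star]$ that is independent of $v_\star$ — in particular that the auxiliary field $\overline{\mathcal{T}}(\bar\phi)$ built from the trace lies in the energy space with a norm controlled purely by \eqref{seb}, which requires re-deriving the integrability conditions \eqref{as:s1dec}--\eqref{as:s2dec} from the pointwise and $H^1$ bounds assumed on the ray. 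A secondary technical point is that Theorem \ref{wellposed} as stated localizes data to the region $\mathcal{N}_1$ where $\mathcal{B} \geq l^2/2$; hypothesis \eqref{zeb} with its arbitrary constant $c$ (not necessarily $\geq l^2/2$) is harmless here because one only needs \emph{some} uniform lower bound on $\mathcal{B}$, and the remark following \eqref{decomp} allows any $0<c<l^2$, so the construction goes through with $l^2/2$ replaced by $c$ throughout.
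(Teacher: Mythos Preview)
Your overall strategy is the same as the paper's: the hypotheses \eqref{zeb}--\eqref{seb} give uniform control of $\mathcal{B}$, $\mathcal{A}$ and $N_{inv}$ on every constant-$v$ ray in $\Delta_{d,u_0}$, hence Theorem \ref{wellposed} yields a uniform time of existence $\delta_0$ from any such ray, and one combines this with the standard interior result to push past $u=u_0+d$. Your identification of the roles of \eqref{zeb}, \eqref{fib}, \eqref{seb} is correct, and your observation about replacing $l^2/2$ by $c$ in the definition of $\mathcal{N}_1$ is exactly right.

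However, the order in which you assemble the two ingredients does not close. Applying Theorem \ref{wellposed} to the \emph{un-extended} ray $\mathcal{N}_\star=\{v=v_\star\}\cap\Delta_{d,u_0}$, which has $u$-length $u_0+d-v_\star$, can only produce a triangle contained in $\{u\le u_0+d\}$: the theorem's $\delta$ never exceeds the length of the data ray (recall $\delta<u_{\max}-u_0$ in the statement). So your near-$\mathcal{I}$ step recovers part of $\Delta_{d,u_0}$ you already have, and the subsequent ``patching'' with the interior strip still misses the region $\{u>u_0+d,\ v\text{ close to }u\}$. Taking $v_\star\to u_0+d$ does not help, because the gain from Theorem \ref{wellposed} shrinks with the data ray.

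The fix, which is what the paper does, is to reverse the order: first use the interior (non-AdS) local existence to extend the solution into $\Delta_{d+\tilde\delta,u_0}\cap\{v\le u_0+d-\epsilon\}$, thereby \emph{extending the data on the chosen ray} $v=v_c:=u_0+d-\tfrac{\delta}{2}$ past $u=u_0+d$ to $u=u_0+d+\delta^\star$ for some small $\delta^\star<\tfrac{\delta}{2}$, with the bounds \eqref{zeb}--\eqref{seb} preserved up to a factor of $2$ by continuity. \emph{Then} apply Theorem \ref{wellposed} once, with the constants $2C$ and $c/2$, to this longer ray; the resulting triangle now reaches $u=u_0+d+\delta^\star$ and, together with the interior strip, fills out $\Delta_{d+\delta^\star,u_0}$. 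Your worry about re-verifying \eqref{as:s1dec}--\eqref{as:s2dec} on the new ray is unnecessary: this is precisely the content of item 3 of Theorem \ref{wellposed} (and its proof via the $A_{7/2}$ estimate).
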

\begin{proof}
Clearly, the solution can be extended to the set $\Delta_{d+\tilde{\delta},u_0} \cap \{v \leq u_0+d+\tilde{\delta} -\epsilon \}$ for some $\tilde{\delta}>0$ which depends on $\epsilon$ by continuity and the standard local well-posedness result available away from the boundary.

To extend it to a full triangle $\Delta_{d+\delta^\star,u_0}$, note that we have a uniformly bounded asymptotically AdS initial-data set on each $v=const$-ray in $\Delta_{d,u_0}$. Let $\delta$ be the time of existence ($v$-length) associated with the bounds (\ref{zeb}), (\ref{fib}) and (\ref{seb}) by Theorem \ref{wellposed}, however with $C$ replaced by $2C$ and $c$ by $\frac{c}{2}$ respectively in(\ref{zeb})-(\ref{seb}). Pick the ray $v_c = u_0 + d - \frac{\delta}{2}$.
By the above argument (and continuity), we can extend the solution to the ray $(u_0+d-\frac{\delta}{2}, u_0+d + \delta^\star] \times \{ u_0+d-\frac{\delta}{2} \}$ for some $\delta^\star < \frac{\delta}{2}$ such that moreover  (\ref{zeb})-(\ref{seb}) hold on $(u_0+d-\frac{\delta}{2}, u_0+d + \delta^\star] \times \{ u_0+d-\frac{\delta}{2} \}$ with the constant $C$ replaced by $2C$ and $c$ by $\frac{c}{2}$ respectively. Applying Theorem \ref{wellposed} extends the solution to all of $\Delta_{d+\delta^\star,u_0}$.
\end{proof}

\begin{appendix}
\section{An extension principle in the interior}
In this appendix, we present a second extension principle, which regards the properties of solutions in the interior (that is, away from $\mathcal{I}$) of the spacetime. Remarkably, this extension principle does not use the energy conservation and (as a consequence) is applicable also in the interior of the black hole. We thank Mihalis Dafermos and Jonathan Kommemi for introducing us to the argument presented here.
\begin{proposition} \label{sexp}
Let $\left(\mathcal{Q}^+ \times S^2, g, \phi\right)$ denote the maximum development of an asymptotically AdS initial data set. Suppose $p=\left(U,V\right) \in \overline{\mathcal{Q}^+}$. If 
\begin{enumerate}
\item $\mathcal{D}= \left[U^\prime,U\right] \times \left[V^\prime,V\right] \setminus \{p \} \subset \mathcal{Q}^+$ has finite spacetime volume 
\item there exist constants $r_0$ and $R$ such that
\begin{align}
0 < r_0 \leq r \left(u,v\right) \leq R < \infty \textrm{ \ \ \ for all $\left(u,v\right) \in \mathcal{D}$,}
\end{align}
\end{enumerate}
Then $p \in \mathcal{Q}^+$.
\end{proposition}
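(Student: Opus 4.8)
The plan is to show that the two hypotheses force all of $r$, $\partial_u r$, $\partial_v r$, $\Omega$, $\varpi$, $\phi$, $\partial_u\phi$, $\partial_v\phi$ to stay bounded on $\mathcal{D}$ and to extend continuously to $p$; the standard continuation criterion for the characteristic initial value problem --- the construction of Section~\ref{se:pmp} reduces, away from $\mathcal{I}$, to the classical one --- then lets one extend the solution across $p$, and maximality of $\mathcal{Q}^+$ forces $p\in\mathcal{Q}^+$. Note that the segments $\gamma_1=\{U'\}\times[V',V]$ and $\gamma_2=[U',U]\times\{V'\}$ are compact subsets of $\mathcal{Q}^+$ not containing $p$, hence carry smooth, uniformly bounded data, and $[U',U]\times[V',V]$ is exactly their future domain of dependence; so it suffices to propagate control from $\gamma_1\cup\gamma_2$ up to $p$. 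As a first input, the spacetime volume of $\mathcal{D}\times S^2$ equals $2\pi\int_{\mathcal{D}}\Omega^2 r^2\,du\,dv$, so finiteness of the volume together with $r\ge r_0$ gives $\int_{\mathcal{D}}\Omega^2\,du\,dv<\infty$; consequently all spacetime source integrals appearing below, taken over the shrinking rectangles $R_\epsilon:=\big([U-\epsilon,U]\times[V-\epsilon,V]\big)\setminus\{p\}$, tend to $0$ as $\epsilon\to0$. I would then run a simultaneous bootstrap on $R_\epsilon$ for the metric quantities: the Raychaudhuri equations \eqref{cons1}--\eqref{cons2} give that $r_u/\Omega^2$ and $r_v/\Omega^2$ are monotone in $u$ and $v$ respectively, and, combined with the $r_{uv}$ equation \eqref{eq:ruv}, the $\varpi$-equations \eqref{ee:varpiu}--\eqref{ee:varpiv}, the bounds $r_0\le r\le R$, and the smallness of $\int_{R_\epsilon}\Omega^2$, this should yield uniform bounds for $\Omega$, $r_u$, $r_v$, $\varpi$ on $\mathcal{D}$ --- in particular $\int|r_u/r|\,du$ and $\int|r_v/r|\,dv$ over the short edges of $R_\epsilon$ are finite and vanish with $\epsilon$.

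The genuinely new point --- a negative mass $a$ destroys the monotonicity of $\varpi$ that underlies the analogous step in \cite{CamJon} --- is to control $\phi$ without a coercive energy. I would rewrite the Klein--Gordon equation \eqref{laste} as the transport system $\partial_u(r\phi_v)=-\frac{r_v}{r}(r\phi_u)-\frac{a\Omega^2 r}{2l^2}\phi$ and $\partial_v(r\phi_u)=-\frac{r_u}{r}(r\phi_v)-\frac{a\Omega^2 r}{2l^2}\phi$, recovering $\phi$ itself by integrating $\phi_u$ (resp.\ $\phi_v$) from $\gamma_1$ (resp.\ $\gamma_2$). A Gr\"onwall argument on $R_\epsilon$ using the integral bounds of the previous step and $\int_{R_\epsilon}\Omega^2\to0$ should close a uniform bound for $|\phi|+|r\phi_u|+|r\phi_v|$ on all of $\mathcal{D}$; and since the increments of these quantities over regions shrinking to $p$ are dominated by the same vanishing integrals, the Cauchy criterion gives continuous extensions of $\phi$, $\partial_u\phi$, $\partial_v\phi$ to $p$, hence --- revisiting their equations --- of $\Omega$, $r_u$, $r_v$, $\varpi$ as well, with the regularity needed to re-apply local well-posedness.

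With these extensions in hand, the continuation criterion for the characteristic problem on the domain of dependence of $\gamma_1\cup\gamma_2$ is satisfied up to and including $p$, so the solution extends to the closed rectangle $[U',U]\times[V',V]$; by uniqueness it agrees with the given solution on $\mathcal{D}$, so gluing produces a development of the data containing $p$, whence $p\in\mathcal{Q}^+$ by maximality. The main obstacle is the scalar-field step of the second paragraph: with no coercive Hawking-mass energy available, all control on $\phi$ and its first derivatives must be squeezed out of the transport structure alone, the finite-volume hypothesis playing --- in place of energy conservation --- the role of making the relevant source integrals both finite and, on rectangles shrinking to $p$, small. One also has to set up the metric bootstrap carefully so that the bounds on $\Omega$ and on $r_u$, $r_v$ do not degenerate near $p$; this is precisely where the hypotheses $r\ge r_0$ and finite volume are used together.
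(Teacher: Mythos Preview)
Your overall architecture is right --- finite spacetime volume plus $r\ge r_0$ gives $\int_{\mathcal{D}}\Omega^2\,du\,dv<\infty$, and the goal is to leverage this into pointwise control and invoke a local existence result near $p$. But the sequence you propose does not close. You want to first bootstrap the metric quantities $\Omega$, $r_u$, $r_v$, $\varpi$ on $R_\epsilon$, and only then run Gr\"onwall on the transport system for $r\phi_u$, $r\phi_v$. The problem is that the metric bootstrap cannot be closed without scalar-field input: the equation for $(\log\Omega)_{uv}$ contains $\phi_u\phi_v$, the equation for $r_{uv}$ contains $\Omega^2\phi^2$, and the $\varpi$-equations contain $\phi_u^2$, $\phi_v^2$, $\phi^2$. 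Moreover, the finite-volume hypothesis only gives the \emph{double} integral $\int\!\!\int\Omega^2$; it does not give smallness of the single integrals $\int\Omega^2(u,\cdot)\,dv$ or $\int\Omega^2(\cdot,v)\,du$, which is what integrating $r_{uv}$ in one direction would need. Conversely, your Gr\"onwall step for $r\phi_u$, $r\phi_v$ requires $\int|r_v/r|\,du$ and $\int|r_u/r|\,dv$ to be small on the sub-rectangles, and you have no a priori bound on $r_u$, $r_v$ to feed in. So both halves of the argument are waiting on the other.

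The paper breaks this circularity by a careful ordering and by working with different quantities. First, integrating the equation $\partial_u(r r_v)=-\tfrac{1}{4}\Omega^2-\tfrac{3r^2}{4l^2}\Omega^2+\tfrac{2\pi a r^2}{l^2}\Omega^2\phi^2$ over all of $\mathcal{D}$ and using only the $r$-bounds and the volume bound yields the spacetime estimate $\int\!\!\int\Omega^2\phi^2<\infty$ \emph{without any prior control on $\phi$}. Second, the same equation gives $\int_{V'}^V\sup_u|r r_v|\,dv<\infty$. One then partitions $\mathcal{D}$ into sub-diamonds on which both $\int\!\!\int\Omega^2$ and $\int\sup_u|rr_v|\,dv$ are $<\epsilon$, and uses the \emph{second-order} form
\[
\partial_u\partial_v(r\phi)=\phi\, r_{uv}-\frac{a r\Omega^2}{2l^2}\phi
\]
to bound $P_{jk}=\sup_{\mathcal{D}_{jk}}|r\phi|$ inductively: the $r_u r_v/r$ piece of $r_{uv}$ is handled via the sign $r_u<0$ (so $\int(-r_u)/r^3\,du\le 1/(2r_0^2)$ telescopes) together with the smallness of $\int\sup|rr_v|\,dv$. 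Only \emph{after} $|r\phi|$ is bounded pointwise does the paper obtain $|\log\Omega^2|$ (using an integration by parts to control $\int\!\!\int\phi_u\phi_v$), and finally pointwise bounds on $rr_v$, $rr_u$. The key ideas you are missing are (i) extracting $\int\!\!\int\Omega^2\phi^2$ from the $r$-equation rather than from any $\phi$-equation, (ii) controlling the \emph{integrated} sup-quantity $\int\sup_u|rr_v|\,dv$ rather than a pointwise $r_v$, and (iii) using $\partial_u\partial_v(r\phi)$ rather than the first-order transport for $r\phi_u$, $r\phi_v$, which avoids the need for $\int|r_v|\,du$ altogether.
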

\begin{proof}
We first note that since $r_u<0$ holds near $\mathcal{I}$ for an asymptotically AdS spacetime, we must have $r_u<0$ in the entire maximal development as a consequence of the Raychaudhuri equation (\ref{kappae}). By the assumptions 1 and 2 we have
\begin{align} \label{fivol}
\int_{U^\prime}^U \int_{V^\prime}^V  \Omega^2 dU dV < C 
\end{align}
and
\begin{align}
\frac{1}{C} < r_0 \leq r \left(u,v\right) \leq R < C \, ,
\end{align}
for some constant $C$. Moreover, by compactness, we have on $\left[U^\prime,U\right] \times \{V^\prime\}$ and $\{ U^\prime \} \times \left[V^\prime,V\right]$ the estimates
\begin{align}
\frac{1}{N} < -r \cdot r_u < N   \textrm{ \ \ \ , \ \ \ } | r \cdot r_v| < N \nonumber \\
| r \phi | + |r \phi_v| + |r \phi_u| + |\phi_{uu}| + |\phi_{uv} | + \int_{V^\prime}^V \left(\partial_v T\left(\phi\right)\right)^2 \left(U^\prime,v\right) dv < N  \nonumber \\
|r_{uv}| + | r_{uu} | + |r_{vv}| + |\partial_u \Omega| + | \partial_v \Omega| + | \log \Omega^2 | < N
\end{align}
for some constant $N$. We write (\ref{eq:ruv}) in the form
\begin{align}
\partial_u \left(r \lambda\right) = -\frac{\Omega^2}{4} - \frac{3}{4} \frac{r^2}{l^2} \Omega^2 + \frac{2\pi r^2 a \Omega^2}{l^2} \phi^2 \,, 
\end{align}
where $\lambda=r_v$.
Using the bounds on the spacetime volume and the area radius one derives the spacetime bound
\begin{align}
\int_{U^\prime}^U \int_{V^\prime}^V  \Omega^2 \phi^2 dU dV < \tilde{C} \, ,
\end{align}
where $\tilde{C}$ depends only on $l$ and $C$. We also note the pointwise estimate
\begin{align} \label{pwrl}
 \sup_{\left[U^\prime,U\right]} | r\lambda | \leq N - \frac{1}{4} \int_{U^\prime}^U dU \left(1+ 3 \frac{r^2}{l^2}\right) \Omega^2 + \int_{U^\prime}^U dU 2\pi r^2 \frac{a}{l^2} \Omega^2 \phi^2 \, ,
\end{align}
which upon integration yields
\begin{align}  \label{rl}
\int_{V^\prime}^V dV \sup_{\left[U^\prime,U\right]} | r\lambda | \leq N \left( V- V^\prime\right) + \tilde{C} \, ,
\end{align}
and similarly
\begin{align}
\int_{U^\prime}^U dU \sup_{\left[V^\prime,V\right]} | r\nu | \leq N \left( U- U^\prime\right) + \tilde{C} \, , 
\end{align}
where $\nu=r_u$.
We now partition the diamond $\mathcal{D}$ into smaller sub-diamonds $\mathcal{D}_{jk}$ given by
\begin{align}
\mathcal{D}_{jk} = \left[u_j, u_{j+1}\right] \times \left[v_j, v_{j+1}\right]  \textrm{ \ \ \ \ \ \  $j,k = 0, ... , N$}
\end{align}
and with $u_0=U^\prime$, $u_N=U$, $v_0=V^\prime$, $v_N=V$ and such that for a given $\epsilon>0$ we have
\begin{align}
\int_{v_k}^{v_{k+1}} \int_{u_j}^{u_{j+1}}  \Omega^2 du dv < \epsilon
\textrm{ \ \ \  \ \  and  \ \ \ \ }
\int_{v_k}^{v_{k+1}} \sup_{[u_j,u_{j+1}]} |r \lambda | dv < \epsilon \, .
\end{align}
This is possible in view of the uniform bounds (\ref{fivol}) and (\ref{rl}). Define also
\begin{align}
P_{jk} = \sup_{\mathcal{D}_{jk}} | r \phi \left(u,v\right)| \, .
\end{align}
Pick an arbitrary point $\left(u^\star,v^\star\right) \in \mathcal{D}_{jk}$ and consider the wave equation for $\phi$:
\begin{align} \label{wef}
\partial_u \partial_v \left(r\phi\right) = \phi \partial_u \lambda - \frac{a r}{2l^2} \Omega^2 \phi \, .
\end{align}
We have
\begin{align}
\int_{v_k}^{v^\star}\int_{u_j}^{u^\star}  \frac{a r}{2l^2} \Omega^2 \phi du dv  \leq  C_{a,l} P_{jk} \cdot \epsilon \, ,
\end{align}
and also
\begin{align}
\int_{v_k}^{v^\star}\int_{u_j}^{u^\star} \phi \partial_u \lambda du dv \nonumber \\
= \int_{v_k}^{v^\star}\int_{u_j}^{u^\star} \phi  \left( - \frac{3r \Omega^2}{4l^2} - \frac{\Omega^2}{4r} - \frac{r_u r_v}{r} + 2\pi r \frac{a}{l^2} \Omega^2 \phi ^2 \right) du dv 
\leq P_{jk} \cdot \frac{C_l}{r_0^2} \cdot \epsilon \nonumber
\end{align}
in view of
\begin{align}
\int_{v_k}^{v^\star}\int_{u_j}^{u^\star} - \phi \frac{r_u r_v}{r}  du dv  \leq P_{jk} \int_{v_k}^{v^\star}\int_{u_j}^{u^\star} \frac{-r_u}{r^3} | r \lambda|  du dv \nonumber \\
 \leq P_{jk} \int_{v_k}^{v^\star}  \sup_{[u_j,u_{j+1}]} |r \lambda | dv  \int_{u_j}^{u^\star} \frac{-r_u}{r^3} du \leq P_{jk} \cdot \frac{C_l}{r_0^2} \cdot \epsilon \, . \nonumber 
\end{align}
Hence integrating (\ref{wef}) in $u$ and $v$ yields for sufficiently small $\epsilon$ the uniform bound
\begin{align}
P_{jk} < 2 \left(\sup_{[u_j,u_{j+1}] \times \{v_k\}} | r\phi| + \sup_{\{u_j\} \times [v_k,v_{k+1}] } | r\phi| \right)< 2 \left( P_{j,k-1} + P_{j-1,k}\right)
\end{align}
Inductively one steps back to $P_{0,k}$ and $P_{j,0}$ therefore obtaining a uniform bound on $P_{jk}$ in terms of the initial data. Taking the maximum over all sub-diamonds yields
\begin{align}
 \sup_{\mathcal{D}} | r \phi \left(u,v\right)| < \tilde{C} \, .
\end{align}
We continue by proving a pointwise bound on $\log \Omega^2$. In view of the evolution equation for this quantity, all this requires is the spacetime bound
\begin{align} 
\int_{U^\prime}^U \int_{V^\prime}^V  \partial_u \phi \partial_v \phi \ dU dV \nonumber \\
= \int_{U^\prime}^U \int_{V^\prime}^V  \frac{1}{2} \partial_u \partial_v \left(\phi^2\right)  - \phi \left(-\frac{r_u}{r} \phi_v - \frac{r_v}{r} \phi_u - \frac{a}{2l^2} \Omega^2 \phi \right) \ dU dV \nonumber \\
= \int_{U^\prime}^U \int_{V^\prime}^V  \left[ \frac{1}{2} \partial_u \partial_v \left(\phi^2\right) + \frac{r_u}{2r} \partial_v \left(\phi^2\right) +   \frac{r_v}{2r} \partial_u \left(\phi^2\right)  + \frac{a}{2l^2} \Omega^2 \phi^2 \right]  dU dV 
< \tilde{C} \nonumber \, .
\end{align}
The latter bound is immediate for the first and the last term in the square bracket in view of previous bounds on $\phi$. The remaining terms can be integrated by parts and (using the evolution equation for $r_{uv}$) are easily seen to be bounded. We conclude
\begin{align}
|\log \Omega^2 | < \tilde{C}
\end{align}
With the pointwise bound on $\Omega^2$ and $\phi$ we now control $T_{uv}$ itself pointwise. Revisiting (\ref{pwrl}) therefore yields
\begin{align}
\sup_{\mathcal{D}} | r \lambda | + \sup_{\mathcal{D}} | r \nu |< \tilde{C} \, .
\end{align}
The bounds for higher derivatives are then straightforward using the evolution equations. The proposition finally follows as in \cite{CamJon} by applying a standard existence result sufficiently close to $p$, in view of the uniform bounds just derived.
\end{proof}

Starting from Proposition \ref{sexp} one can repeat the analysis of \cite{CamJon} and determine the general global structure of the maximum development of spherically-symmetric EKG spacetimes. In the asymptotically flat case, it is shown in \cite{CamJon} (for the Einstein-Maxwell-charged Klein-Gordon equations, which contains in particular the Einstein-Klein-Gordon system) that the Penrose diagram of the evolution of any data set with a single asymptotically-flat end is as follows:


\[
\begin{picture}(0,0)%
\includegraphics{pengep.pstex}%
\end{picture}%
\setlength{\unitlength}{2072sp}%
\begingroup\makeatletter\ifx\SetFigFontNFSS\undefined%
\gdef\SetFigFontNFSS#1#2#3#4#5{%
  \reset@font\fontsize{#1}{#2pt}%
  \fontfamily{#3}\fontseries{#4}\fontshape{#5}%
  \selectfont}%
\fi\endgroup%
\begin{picture}(7989,5408)(1965,-4949)
\put(5491,-4876){\makebox(0,0)[lb]{\smash{{\SetFigFontNFSS{6}{7.2}{\rmdefault}{\mddefault}{\updefault}{\color[rgb]{0,0,0}$\Sigma$}%
}}}}
\put(9856,-4651){\makebox(0,0)[lb]{\smash{{\SetFigFontNFSS{6}{7.2}{\rmdefault}{\mddefault}{\updefault}{\color[rgb]{0,0,0}$i^0$}%
}}}}
\put(3376,-2086){\makebox(0,0)[lb]{\smash{{\SetFigFontNFSS{6}{7.2}{\rmdefault}{\mddefault}{\updefault}{\color[rgb]{0,0,0}$\mathcal{BH}$}%
}}}}
\put(4456,-3391){\rotatebox{45.0}{\makebox(0,0)[lb]{\smash{{\SetFigFontNFSS{6}{7.2}{\rmdefault}{\mddefault}{\updefault}{\color[rgb]{0,0,0}$\mathcal{H}^+$}%
}}}}}
\put(3826,209){\rotatebox{20.0}{\makebox(0,0)[lb]{\smash{{\SetFigFontNFSS{6}{7.2}{\rmdefault}{\mddefault}{\updefault}{\color[rgb]{0,0,0}$\mathcal{S}$}%
}}}}}
\put(6346,-1141){\makebox(0,0)[lb]{\smash{{\SetFigFontNFSS{6}{7.2}{\rmdefault}{\mddefault}{\updefault}{\color[rgb]{0,0,0}$i^{\square}$}%
}}}}
\put(5716,-466){\rotatebox{315.0}{\makebox(0,0)[lb]{\smash{{\SetFigFontNFSS{6}{7.2}{\rmdefault}{\mddefault}{\updefault}{\color[rgb]{0,0,0}$\mathcal{CH}_{i^+}$}%
}}}}}
\put(5176, 74){\rotatebox{315.0}{\makebox(0,0)[lb]{\smash{{\SetFigFontNFSS{6}{7.2}{\rmdefault}{\mddefault}{\updefault}{\color[rgb]{0,0,0}$\mathcal{S}_{i^+}$}%
}}}}}
\put(2296,-1051){\rotatebox{45.0}{\makebox(0,0)[lb]{\smash{{\SetFigFontNFSS{6}{7.2}{\rmdefault}{\mddefault}{\updefault}{\color[rgb]{0,0,0}$\mathcal{CH}_{\Gamma}$}%
}}}}}
\put(2836,-466){\rotatebox{45.0}{\makebox(0,0)[lb]{\smash{{\SetFigFontNFSS{6}{7.2}{\rmdefault}{\mddefault}{\updefault}{\color[rgb]{0,0,0}$\mathcal{S}_{\Gamma}$}%
}}}}}
\put(2161,-2986){\makebox(0,0)[lb]{\smash{{\SetFigFontNFSS{6}{7.2}{\rmdefault}{\mddefault}{\updefault}{\color[rgb]{0,0,0}$\Gamma$}%
}}}}
\put(7921,-2581){\makebox(0,0)[lb]{\smash{{\SetFigFontNFSS{6}{7.2}{\rmdefault}{\mddefault}{\updefault}{\color[rgb]{0,0,0}$\mathcal{I}^+$}%
}}}}
\end{picture}%

\]
We refer to \cite{CamJon} for a precise description of all the components of the boundary of $\mathcal{Q}$. 

It is instructive to relate the results of \cite{CamJon} to the asymptotically AdS case considered in this paper. We remark that because the main theorem of \cite{CamJon} only uses the monotonicity of the Raychaudhuri equations (but not the monotonicity of the Hawking mass!) and the extension principle of proposition \ref{sexp}, a very similar picture can be established for the AdS case. More precisely, the quotient of the maximal development of any asymptotically AdS initial data set with one end has the following Penrose diagram:

\[
\begin{picture}(0,0)%
\includegraphics{pengepads.pstex}%
\end{picture}%
\setlength{\unitlength}{2072sp}%
\begingroup\makeatletter\ifx\SetFigFontNFSS\undefined%
\gdef\SetFigFontNFSS#1#2#3#4#5{%
  \reset@font\fontsize{#1}{#2pt}%
  \fontfamily{#3}\fontseries{#4}\fontshape{#5}%
  \selectfont}%
\fi\endgroup%
\begin{picture}(4441,5498)(1965,-5039)
\put(4065,-4966){\makebox(0,0)[lb]{\smash{{\SetFigFontNFSS{6}{7.2}{\rmdefault}{\mddefault}{\updefault}{\color[rgb]{0,0,0}$\Sigma$}%
}}}}
\put(3376,-2086){\makebox(0,0)[lb]{\smash{{\SetFigFontNFSS{6}{7.2}{\rmdefault}{\mddefault}{\updefault}{\color[rgb]{0,0,0}$\mathcal{BH}$}%
}}}}
\put(4456,-3391){\rotatebox{45.0}{\makebox(0,0)[lb]{\smash{{\SetFigFontNFSS{6}{7.2}{\rmdefault}{\mddefault}{\updefault}{\color[rgb]{0,0,0}$\mathcal{H}^+$}%
}}}}}
\put(3826,209){\rotatebox{20.0}{\makebox(0,0)[lb]{\smash{{\SetFigFontNFSS{6}{7.2}{\rmdefault}{\mddefault}{\updefault}{\color[rgb]{0,0,0}$\mathcal{S}$}%
}}}}}
\put(6346,-1141){\makebox(0,0)[lb]{\smash{{\SetFigFontNFSS{6}{7.2}{\rmdefault}{\mddefault}{\updefault}{\color[rgb]{0,0,0}$i^{\square}$}%
}}}}
\put(5716,-466){\rotatebox{315.0}{\makebox(0,0)[lb]{\smash{{\SetFigFontNFSS{6}{7.2}{\rmdefault}{\mddefault}{\updefault}{\color[rgb]{0,0,0}$\mathcal{CH}_{i^+}$}%
}}}}}
\put(5176, 74){\rotatebox{315.0}{\makebox(0,0)[lb]{\smash{{\SetFigFontNFSS{6}{7.2}{\rmdefault}{\mddefault}{\updefault}{\color[rgb]{0,0,0}$\mathcal{S}_{i^+}$}%
}}}}}
\put(2296,-1051){\rotatebox{45.0}{\makebox(0,0)[lb]{\smash{{\SetFigFontNFSS{6}{7.2}{\rmdefault}{\mddefault}{\updefault}{\color[rgb]{0,0,0}$\mathcal{CH}_{\Gamma}$}%
}}}}}
\put(2836,-466){\rotatebox{45.0}{\makebox(0,0)[lb]{\smash{{\SetFigFontNFSS{6}{7.2}{\rmdefault}{\mddefault}{\updefault}{\color[rgb]{0,0,0}$\mathcal{S}_{\Gamma}$}%
}}}}}
\put(2161,-2986){\makebox(0,0)[lb]{\smash{{\SetFigFontNFSS{6}{7.2}{\rmdefault}{\mddefault}{\updefault}{\color[rgb]{0,0,0}$\Gamma$}%
}}}}
\put(6391,-2851){\makebox(0,0)[lb]{\smash{{\SetFigFontNFSS{6}{7.2}{\rmdefault}{\mddefault}{\updefault}{\color[rgb]{0,0,0}$\mathcal{I}$}%
}}}}
\end{picture}%

\]

%


An important difference between the two cases regards the completeness of null infinity (i.e.~when $i^\square = i^+$ in the notation of \cite{CamJon}). In the asymptotically flat case, the standard proof of completeness of future null infinity, $\mathcal{I}^+$, relies on the monotonicity properties of the Hawking mass. In the AdS setting, this monotonicity is not available, hence the proof of the completeness of null infinity requires a different analysis. In our companion paper \cite{gs:stab}, we prove in particular the completeness of $\mathcal{I}$ for perturbations of Schwarzschild-AdS initial data.


The following Corollary is an immediate consequence of the global spacetime structure established above. However, since it is precisely this statement which will be applied in our companion \cite{gs:stab}, we give an explicit proof.
\begin{corollary}
Let $\left(\bar{r},\bar{\phi}\right)$ be an $\mathcal{C}^{1+k}_{a,M}\left(\mathcal{N}\right)$ asymptotically AdS data set with $k\geq1$, which contains a (marginally) trapped surface, i.e.~a point on $\mathcal{N}$ for which $r_v\leq 0$. Then the quotient of the maximum development of the data set must necessarily contain a subset as depicted below
\[
\begin{picture}(0,0)%
\includegraphics{penAdS3.pstex}%
\end{picture}%
\setlength{\unitlength}{2368sp}%
\begingroup\makeatletter\ifx\SetFigFontNFSS\undefined%
\gdef\SetFigFontNFSS#1#2#3#4#5{%
  \reset@font\fontsize{#1}{#2pt}%
  \fontfamily{#3}\fontseries{#4}\fontshape{#5}%
  \selectfont}%
\fi\endgroup%
\begin{picture}(2512,4281)(579,-3683)
\put(3076,-1711){\makebox(0,0)[lb]{\smash{{\SetFigFontNFSS{7}{8.4}{\rmdefault}{\mddefault}{\updefault}{\color[rgb]{0,0,0}$\mathcal{I}$}%
}}}}
\put(1726,-2236){\rotatebox{315.0}{\makebox(0,0)[lb]{\smash{{\SetFigFontNFSS{6}{7.2}{\rmdefault}{\mddefault}{\updefault}{\color[rgb]{0,0,0}$v=v_0$}%
}}}}}
\put(1604,-621){\rotatebox{45.0}{\makebox(0,0)[lb]{\smash{{\SetFigFontNFSS{6}{7.2}{\rmdefault}{\mddefault}{\updefault}{\color[rgb]{0,0,0}$u=u_{\mathcal{H}^+}$}%
}}}}}
\end{picture}%

\]
where $u_{\mathcal{H}}$ is the boundary of the region for which $r = \infty$ can be reached along constant $u$-rays. Moreover, the set $u=u_{\mathcal{H}^+}$ belongs to the quotient of the maximal development. 
\end{corollary}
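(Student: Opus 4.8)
The plan is to assemble the global causal structure obtained from Proposition \ref{sexp}, in the spirit of \cite{CamJon}, the key new input being the trapped surface present in the data. First I would record the relevant monotonicity. As in the proof of Proposition \ref{sexp}, $r_u<0$ holds near $\mathcal{I}$ for an asymptotically AdS spacetime and hence, by the Raychaudhuri equation \eqref{kappae}, on all of the quotient $\mathcal{Q}^+$ of the maximal development; while \eqref{cons2} gives $\partial_v\left(r_v/\Omega^2\right)\le0$, so the sign $r_v\le0$ propagates to the future along each constant-$u$ ray. Let $p_T=\left(u_T,u_0\right)$ be the marginally trapped point on the initial ray $\mathcal{N}_{\mathcal{Q}}=\{v=u_0\}$ (the ray $v=v_0$ in the figure), with $\overline{r_v}\left(u_T\right)\le0$. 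Since $\overline{r_v}>0$ near $u_0$ by the construction of the data in Proposition \ref{prop:consdata}, the point $u_T$ is bounded away from $\mathcal{I}$; and along $\gamma_{u_T}:=\{u=u_T\}\cap\mathcal{Q}^+$ we get $r_v\le0$, so $r$ is non-increasing in $v$ and $r\le\bar{r}\left(u_T\right)<\infty$ there, i.e.\ $\sup_{\gamma_{u_T}}r<\infty$.

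Next I would make the horizon precise. Set $\mathcal{H}=\{u\in\left(u_0,u_1\right]:\ \sup r=+\infty\ \text{along}\ \{u\}\times[\cdot]\cap\mathcal{Q}^+\}$. By Theorem \ref{wellposed} the maximal development contains, near $\mathcal{I}$, a triangular neighbourhood in which every constant-$u$ ray reaches $r=\infty$, so $\left(u_0,u_0+\delta_0\right]\subset\mathcal{H}$ for some $\delta_0>0$, whereas the previous paragraph gives $u_T\notin\mathcal{H}$. Using $r_u<0$ (so that $r$ decreases in $u$ along each constant-$v$ ray), together with the modified global hyperbolicity of Definition \ref{def:dev} --- which forces, for any $\left(u',v\right)\in\mathcal{Q}^+$ with $v>u_0$, the entire null segment $\left(v,u'\right]\times\{v\}$ to lie in $\mathcal{Q}^+$ --- I would check that $\mathcal{H}$ is an interval and that it is open on the right: if $\gamma_u$ reached $r=\infty$, the asymptotically AdS neighbourhood of the corresponding point of $\mathcal{I}$ would put all nearby rays with larger $u$ into $\mathcal{H}$ as well. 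Hence $u_{\mathcal{H}^+}:=\sup\mathcal{H}$ is well defined, $u_0<u_{\mathcal{H}^+}\le u_T$, and $u_{\mathcal{H}^+}\notin\mathcal{H}$, so $\sup_{\{u=u_{\mathcal{H}^+}\}}r<\infty$.

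The main step is to show that the ray $\{u=u_{\mathcal{H}^+}\}$ belongs to $\mathcal{Q}^+$ up to its genuine future endpoint. Since $\mathcal{Q}^+$ is open near $\mathcal{N}_{\mathcal{Q}}$, the segment $\{u_{\mathcal{H}^+}\}\times\left[u_0,u_0+\eta\right)$ lies in $\mathcal{Q}^+$ for some $\eta>0$; let $v_{\max}$ be the supremum of the $v^\dagger$ for which $\{u_{\mathcal{H}^+}\}\times\left[u_0,v^\dagger\right)\subset\mathcal{Q}^+$, so that this segment lies in $\mathcal{Q}^+$ all the way up to $v_{\max}$ with no gap. By the segment property of the previous paragraph, the triangular region $\{\left(u,v\right):\ u_0\le v<v_{\max},\ v<u\le u_{\mathcal{H}^+}\}$ --- which is exactly the region bounded by the initial ray, $\mathcal{I}$ and $\mathcal{H}^+$ in the figure --- is then contained in $\mathcal{Q}^+$. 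To see that $v_{\max}$ is a genuine endpoint, suppose instead that $v_{\max}<u_{\mathcal{H}^+}$, that $\liminf_{v\to v_{\max}}r\left(u_{\mathcal{H}^+},v\right)>0$, and that this region has finite spacetime volume near $p=\left(u_{\mathcal{H}^+},v_{\max}\right)$; then, choosing $u'$ with $v_{\max}<u'<u_{\mathcal{H}^+}$, the rectangle $\left[u',u_{\mathcal{H}^+}\right]\times\left[v',v_{\max}\right]\setminus\{p\}$ lies in $\mathcal{Q}^+$, and on it $r$ is bounded below (by $\inf_{\left[v',v_{\max}\right)}r\left(u_{\mathcal{H}^+},\cdot\right)>0$ together with $r_u<0$) and above (by the maximum of $r$ over the compact arc $\{u'\}\times\left[v',v_{\max}\right]$, again using $r_u<0$). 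Proposition \ref{sexp} then gives $p\in\mathcal{Q}^+$, contradicting the definition of $v_{\max}$. Thus either $v_{\max}=u_{\mathcal{H}^+}$, which is impossible since it would force $u_{\mathcal{H}^+}\in\mathcal{H}$, or $\liminf_{v\to v_{\max}}r=0$ (first singularity) or the volume diverges; in every case $\{u=u_{\mathcal{H}^+}\}\times\left[u_0,v_{\max}\right)$ is precisely the maximal extent of the horizon inside $\mathcal{Q}^+$, as the Corollary asserts.

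I expect the main obstacle to be the causal-structure bookkeeping: extracting the ``past-closedness'' of $\mathcal{Q}^+$ along constant-$v$ rays from the modified global hyperbolicity of Definition \ref{def:dev} so that the rectangle required by Proposition \ref{sexp} actually lies inside $\mathcal{Q}^+$, verifying the two-sided bound $0<r_0\le r\le R$ and the finite-volume hypothesis on the region to the right of the horizon, and organising cleanly the trichotomy for the future endpoint (reaching $\mathcal{I}$, first singularity $r\to0$, or volume blow-up). The interval and right-openness properties of $\mathcal{H}$ are comparatively routine but rely on the same causal input and on the asymptotically AdS normal form near $\mathcal{I}$ furnished by Theorem \ref{wellposed}.
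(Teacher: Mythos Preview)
Your overall strategy matches the paper's: invoke Theorem \ref{wellposed} for a triangle near $\mathcal{I}$, use the Raychaudhuri equation \eqref{cons2} to see that the trapped ray cannot reach $\mathcal{I}$, define $u_{\mathcal{H}^+}$ as the supremum of the $u$'s whose constant-$u$ rays do reach $r=\infty$, and then apply the extension principle of Proposition \ref{sexp} to rule out first singularities on $\{u=u_{\mathcal{H}^+}\}$. The paper's proof is exactly this, stated in three sentences.

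The gap is in your final step. You set up a trichotomy --- $v_{\max}=u_{\mathcal{H}^+}$, or $\liminf r=0$, or volume blow-up --- and then declare the first alternative impossible and leave the other two as admissible endings. Both moves are wrong for what the Corollary asserts. First, $v_{\max}=u_{\mathcal{H}^+}$ does \emph{not} force $u_{\mathcal{H}^+}\in\mathcal{H}$: the horizon ray can (and does, e.g.\ in Schwarzschild--AdS) extend all the way in $v$ while $r$ remains bounded; the future endpoint of $\mathcal{I}$ is precisely the point where this happens, and it is not itself a point of $\mathcal{I}$. Second, the Corollary is claiming that $\{u=u_{\mathcal{H}^+}\}$ is \emph{regular} --- no first singularities --- so the cases $r\to 0$ and volume blow-up must be \emph{excluded}, not left standing.

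The missing observation, which the paper compresses into the phrase ``the monotonicity of $r$'', is this: for every $u'<u_{\mathcal{H}^+}$ the ray $\{u=u'\}$ reaches $r=\infty$, so by \eqref{cons2} one must have $r_v(u',v)>0$ for all $v$ (a single point with $r_v\le 0$ would propagate and trap the ray). Passing to the limit $u'\uparrow u_{\mathcal{H}^+}$ gives $r_v\ge 0$ along the horizon, hence $r(u_{\mathcal{H}^+},v)\ge \bar r(u_{\mathcal{H}^+})>0$ for all $v$ in the domain. This gives the uniform lower bound on $r$ directly and kills the $r\to 0$ alternative; combined with the upper bound you already have (via $r_u<0$ and compactness of $\{u'\}\times[v',v^\star]$), Proposition \ref{sexp} then excludes any first singularity on the horizon. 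Once you insert this monotonicity, your trichotomy collapses and the argument closes as in the paper.
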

\begin{proof}
By Theorem \ref{wellposed} there exists a solution in a small triangle with $r \rightarrow \infty$ along any constant $u$ ray. Since by the Raychaudhuri equation (\ref{cons2}) spacetime points for which $r_v \leq 0$ cannot reach $r\rightarrow \infty$ (i.e.~$\mathcal{I}$) in their future,  there exists a maximal $u=u_{\mathcal{H}^+}$ such that $r \rightarrow \infty$ along all rays with $u<u_{\mathcal{H}^+}$. Finally, the ray $u=u_{\mathcal{H}^+}$ is regular, since first singularities along it are excluded by Proposition \ref{sexp} and the monotonicity of $r$.
\end{proof}

\end{appendix}

\bibliographystyle{hacm}
\bibliography{thesisrefs}

\begin{thebibliography}{10}

\bibitem{Bachelot}
{\sc Bachelot, A.}
\newblock {The Dirac System on the Anti-de Sitter Universe}.
\newblock {\em Commun.~Math.~Phys. 283\/} (2008), 127--167.

\bibitem{Bachelot2}
{\sc Bachelot, A.}
\newblock {The Klein-Gordon Equation in Anti-de Sitter Cosmology}.
\newblock arXiv:1010.1925.

\bibitem{Breitenlohner}
{\sc Breitenlohner, P., and Freedman, D.~Z.}
\newblock {Stability in Gauged Extended Supergravity}.
\newblock {\em Ann. Phys. 144\/} (1982), 249.

\bibitem{Geroch}
{\sc Choquet-Bruhat, Y., and Geroch, R.~P.}
\newblock {Global aspects of the Cauchy problem in General Relativity}.
\newblock {\em Comm.~Math.~Phys. 14\/} (1969), 329--335.

\bibitem{Christodoulou}
{\sc Christodoulou, D.}
\newblock {The Problem of a Selfgravitating Scalar Field}.
\newblock {\em Commun. Math. Phys. 105\/} (1986), 337--361.

\bibitem{Christodoulou3}
{\sc Christodoulou, D.}
\newblock {A mathematical theory of gravitational collapse}.
\newblock {\em Commun. Math. Phys. 109\/} (1987), 613--647.

\bibitem{Christodoulou4}
{\sc Christodoulou, D.}
\newblock {The instability of naked singularities in the gravitational collapse
  of a scalar field}.
\newblock {\em Ann. of Math. 149\/} (1999), 183--217.

\bibitem{Mihali1}
{\sc Dafermos, M.}
\newblock {Spherically symmetric spacetimes with a trapped surface}.
\newblock {\em Class. Quant. Grav. 22\/} (2005), 2221--2232, gr-qc/0403032.

\bibitem{DafRod}
{\sc Dafermos, M., and Rodnianski, I.}
\newblock {A proof of Price's law for the collapse of a self- gravitating
  scalar field}.
\newblock {\em Invent. Math. 162\/} (2005), 381--457, gr-qc/0309115.

\bibitem{FriedrichAdS}
{\sc Friedrich, H.}
\newblock {Einstein equations and conformal structure: existence of anti-de
  Sitter-type space-times}.
\newblock {\em J. Geom. Phys. 17\/} (1995), 125--184.

\bibitem{Friedrichbv}
{\sc Friedrich, H.}
\newblock {Initial boundary value problems for Einstein's field equations and
  geometric uniqueness}.
\newblock {\em Gen. Relativity Gravitation 41\/} (2009), 1947--1966.

\bibitem{Gubser}
{\sc Gubser, S.~S.}
\newblock {Breaking an Abelian gauge symmetry near a black hole horizon}.
\newblock {\em Phys. Rev. D78\/} (2008), arXiv:065034, 0801.2977.

\bibitem{HolzegelAdS}
{\sc Holzegel, G.}
\newblock {On the massive wave equation on slowly rotating Kerr-AdS
  spacetimes}.
\newblock {\em Comm.~Math.~Phys. 294\/} (2010), 169--197, arXiv:0902.0973.

\bibitem{Holzegelwp}
{\sc Holzegel, G.}
\newblock {Well-posedness for the massive wave equation on asymptotically
  anti-de-Sitter spacetimes}.
\newblock 20 pages, arXiv:1103.0710.

\bibitem{gs:stab}
{\sc Holzegel, G., and Smulevici, J.}
\newblock {Stability of Schwarzschild-AdS for the spherically symmetric
  Einstein-Klein Gordon system}.
\newblock arXiv:1103.3672.

\bibitem{CamJon}
{\sc Kommemi, J.}
\newblock {The global structure of spherically symmetric charged scalar field
  spacetimes}.
\newblock {\em preprint\/} (2011), arXiv:1107.0949.

\bibitem{Julianos}
{\sc Sonner, J.}
\newblock {A Rotating Holographic Superconductor}.
\newblock {\em Phys. Rev. D80\/} (2009), 084031, arXiv:0903.0627.

\bibitem{Vasy2}
{\sc Vasy, A.}
\newblock {The wave equation on asymptotically Anti-de Sitter spaces}.
\newblock {\em to appear in Analysis and PDE\/} (2009), arXiv:0911.5440.

\end{thebibliography}
\end{document}